\documentclass[a4paper,UKenglish,cleveref, autoref, thm-restate]{lipics-v2021}

\theoremstyle{definition}
\newtheorem{definitionn}[theorem]{Definition}
\theoremstyle{remark}

\usepackage{latexsym}
\usepackage{tikz}
\usepackage{mathdots}
\usepackage{virginialake}
\vlnostructuressyntax
\vlnosmallleftlabels
\odframefalse
\odbackgroundtrue
\odbackgroundfirstfalse

\def\ttt{\mathit{t}}
\def\fff{\mathit{f}}

\def\vltt{\mathbf{t}}
\def\vlff{\mathbf{f}}

\def\tU{\tilde{U}}

\def\cA{{\mathcal A}}

\def\cD{{\mathcal D}}

\def\cF{{\mathcal F}}

\def\cU{{\mathcal U}}

\def\SKS{\mathsf{SKS}}

\def\KS{\mathsf{KS}}

\def\K{\mathsf{K}}
\def\MLL{\mathsf{MLL}}

\def\set#1{\{#1\}}

\def\cons#1{\{#1\}}
\def\conhole      {\cons{\enspace}}%

\def\tuple#1{\langle#1\rangle}

\def\grammareq {\mathrel{\raise.4pt\hbox{::}{=}}}%

\newcommand{\dotto}[1][]{\mathrel{\!\xy\ar@{.>}^-{#1}(5,0)\endxy\!}}
\newcommand{\solto}[1][]{\mathrel{\!\xy\ar@{->}^-{#1}(5,0)\endxy\!}}
\newcommand{\longsolto}[1][]{\mathrel{\!\xy\ar@{->}^-{#1}(11,0)\endxy\!}}
\newcommand{\longdotto}[1][]{\mathrel{\!\xy\ar@{.>}^-{#1}(11,0)\endxy\!}}
\newcommand{\xldotto}[2][]{\mathrel{\!\xy\ar@{.>}^-{#1}(#2,0)\endxy\!}}

\def\sqn  #1{{\;\turnstile #1\;}}%
\def\ssqn#1#2{{\;#1\turnstile #2\;}}%

\def\cutr{\mathsf{cut}}

\def\weakr{\mathsf{weak}}
\def\conr{\mathsf{cont}}

\def\rr{\mathsf{r}}

\def\MLL{\mathsf{MLL}}

\def\cneg#1{\bar{#1}}

\def\wcneg#1{\overline{#1}}

\newbox\cutbox
\newdimen\cutwd
\newdimen\cutht
\newdimen\cutdp
\def\ccut{%
  \setbox\cutbox\hbox{$\lozenge$}
  \cutwd=\wd\cutbox
  \cutht=\ht\cutbox
  \cutdp=\dp\cutbox
  \setbox\cutbox\hbox to\cutwd{\hss\vrule width.3pt height\cutht depth\cutdp\hss}
  \mathbin{\lozenge\hskip-\cutwd\copy\cutbox}}
\def\scriptcut{%
  \setbox\cutbox\hbox{$\scriptstyle\lozenge$}
  \cutwd=\wd\cutbox
  \cutht=\ht\cutbox
  \cutdp=\dp\cutbox
  \setbox\cutbox\hbox to\cutwd{\hss\vrule width.3pt height\cutht depth\cutdp\hss}
  \mathord{\lozenge\hskip-\cutwd\copy\cutbox}}

\def\vccut{%
  \setbox\cutbox\hbox{$\lozenge$}
  \cutwd=\wd\cutbox
  \cutht=\ht\cutbox
  \cutdp=\dp\cutbox
  \setbox\cutbox\hbox to\cutwd{\hss\hskip.3pt\vrule width.3pt height\cutht depth\cutdp\hss}
  \mathbin{\lozenge\hskip-\cutwd\copy\cutbox}}

\def\lrgldel {\mathchoice{(}{(}{\langle}{\langle}}%
\def\lrgrdel {\mathchoice{)}{)}{\rangle}{\rangle}}%
\def\aprldel {\mathchoice
    {\mathopen {\setbox0=\hbox{$\displaystyle     \lrgldel$}\hbox to\wd0
                         {\hfil$\displaystyle     (       $\hfil}}}%
    {\mathopen {\setbox0=\hbox{$\textstyle        \lrgldel$}\hbox to\wd0
                         {\hfil$\textstyle        (        $\hfil}}}%
    {\mathopen {\setbox0=\hbox{$\scriptstyle      \lrgldel$}\hbox to\wd0
                         {\hfil$\scriptstyle      (        $\hfil}}}%
    {\mathopen {\setbox0=\hbox{$\scriptscriptstyle\lrgldel$}\hbox to\wd0
                         {\hfil$\scriptscriptstyle(        $\hfil}}}}%
\def\aprrdel {\mathchoice
    {\mathclose{\setbox0=\hbox{$\displaystyle     \lrgrdel$}\hbox to\wd0
                         {\hfil$\displaystyle     )       $\hfil}}}%
    {\mathclose{\setbox0=\hbox{$\textstyle        \lrgrdel$}\hbox to\wd0
                         {\hfil$\textstyle        )        $\hfil}}}%
    {\mathclose{\setbox0=\hbox{$\scriptstyle      \lrgrdel$}\hbox to\wd0
                         {\hfil$\scriptstyle      )        $\hfil}}}%
    {\mathclose{\setbox0=\hbox{$\scriptscriptstyle\lrgrdel$}\hbox to\wd0
                         {\hfil$\scriptscriptstyle)        $\hfil}}}}%
\def\seqldel {\mathchoice
    {\mathopen {\setbox0=\hbox{$\displaystyle     \lrgldel$}\hbox to\wd0
                         {\hfil$\displaystyle     \langle  $\hfil}}}%
    {\mathopen {\setbox0=\hbox{$\textstyle        \lrgldel$}\hbox to\wd0
                         {\hfil$\textstyle        \langle  $\hfil}}}%
    {\mathopen {\setbox0=\hbox{$\scriptstyle      \lrgldel$}\hbox to\wd0
                         {\hfil$\scriptstyle      \langle  $\hfil}}}%
    {\mathopen {\setbox0=\hbox{$\scriptscriptstyle\lrgldel$}\hbox to\wd0
                         {\hfil$\scriptscriptstyle\langle  $\hfil}}}}%
\def\seqrdel {\mathchoice
    {\mathclose{\setbox0=\hbox{$\displaystyle     \lrgrdel$}\hbox to\wd0
                         {\hfil$\displaystyle     \rangle  $\hfil}}}%
    {\mathclose{\setbox0=\hbox{$\textstyle        \lrgrdel$}\hbox to\wd0
                         {\hfil$\textstyle        \rangle  $\hfil}}}%
    {\mathclose{\setbox0=\hbox{$\scriptstyle      \lrgrdel$}\hbox to\wd0
                         {\hfil$\scriptstyle      \rangle  $\hfil}}}%
    {\mathclose{\setbox0=\hbox{$\scriptscriptstyle\lrgrdel$}\hbox to\wd0
                         {\hfil$\scriptscriptstyle\rangle  $\hfil}}}}%
\def\parldel {\mathchoice
    {\mathopen {\setbox0=\hbox{$\displaystyle     \lrgldel$}\hbox to\wd0
                         {\hfil$\displaystyle     [       $\hfil}}}%
    {\mathopen {\setbox0=\hbox{$\textstyle        \lrgldel$}\hbox to\wd0
                         {\hfil$\textstyle        [        $\hfil}}}%
    {\mathopen {\setbox0=\hbox{$\scriptstyle      \lrgldel$}\hbox to\wd0
                         {\hfil$\scriptstyle      [        $\hfil}}}%
    {\mathopen {\setbox0=\hbox{$\scriptscriptstyle\lrgldel$}\hbox to\wd0
                         {\hfil$\scriptscriptstyle[        $\hfil}}}}%
\def\parrdel {\mathchoice
    {\mathclose{\setbox0=\hbox{$\displaystyle     \lrgrdel$}\hbox to\wd0
                         {\hfil$\displaystyle     ]       $\hfil}}}%
    {\mathclose{\setbox0=\hbox{$\textstyle        \lrgrdel$}\hbox to\wd0
                         {\hfil$\textstyle        ]        $\hfil}}}%
    {\mathclose{\setbox0=\hbox{$\scriptstyle      \lrgrdel$}\hbox to\wd0
                         {\hfil$\scriptstyle      ]        $\hfil}}}%
    {\mathclose{\setbox0=\hbox{$\scriptscriptstyle\lrgrdel$}\hbox to\wd0
                         {\hfil$\scriptscriptstyle]        $\hfil}}}}%

\def\eightpoint{\small}                         
\def\pluldel {\mathchoice
   {\mathopen {\setbox0=\hbox{$\displaystyle     \lrgldel$}\hbox to\wd0
                        {\hfil$\displaystyle     [       $\hfil}%
                        \kern-\wd0\hbox to\wd0
                        {\hss$\vcenter{\hbox{\eightpoint$\scriptscriptstyle\bullet$}}$\hss}}}%
   {\mathopen {\setbox0=\hbox{$\textstyle        \lrgldel$}\hbox to\wd0
                        {\hfil$\textstyle        [       $\hfil}%
                        \kern-\wd0\hbox to\wd0
                        {\hss$\vcenter{\hbox{\eightpoint$\scriptscriptstyle\bullet$}}$\hss}}}%
   {\mathopen {\setbox0=\hbox{$\scriptstyle      \lrgldel$}\hbox to\wd0
                        {\hfil$\scriptstyle      [       $\hfil}%
                        \kern-\wd0\hbox to\wd0
                        {\hss$\raise.1ex\hbox{\eightpoint$\scriptscriptstyle\bullet$}$\hss}}}%
   {\mathopen {\setbox0=\hbox{$\scriptscriptstyle\lrgldel$}\hbox to\wd0
                        {\hfil$\scriptscriptstyle[       $\hfil}%
                        \kern-\wd0\hbox to\wd0
                        {\hss$\raise.03ex\hbox{\eightpoint$\scriptscriptstyle\bullet$}$\hss}}}}%
\def\plurdel {\mathchoice
   {\mathclose{\setbox0=\hbox{$\displaystyle     \lrgldel$}\hbox to\wd0
                        {\hfil$\displaystyle     ]       $\hfil}%
                        \kern-\wd0\hbox to\wd0
                        {\hss$\vcenter{\hbox{\eightpoint$\scriptscriptstyle\bullet$}}$\hss}}}%
   {\mathclose{\setbox0=\hbox{$\textstyle        \lrgldel$}\hbox to\wd0
                        {\hfil$\textstyle        ]       $\hfil}%
                        \kern-\wd0\hbox to\wd0
                        {\hss$\vcenter{\hbox{\eightpoint$\scriptscriptstyle\bullet$}}$\hss}}}%
   {\mathclose{\setbox0=\hbox{$\scriptstyle      \lrgldel$}\hbox to\wd0
                        {\hfil$\scriptstyle      ]       $\hfil}%
                        \kern-\wd0\hbox to\wd0
                        {\hss$\raise.1ex\hbox{\eightpoint$\scriptscriptstyle\bullet$}$\hss}}}%
   {\mathclose{\setbox0=\hbox{$\scriptscriptstyle\lrgldel$}\hbox to\wd0
                        {\hfil$\scriptscriptstyle]       $\hfil}%
                        \kern-\wd0\hbox to\wd0
                        {\hss$\raise.03ex\hbox{\eightpoint$\scriptscriptstyle\bullet$}$\hss}}}}%
\def\witldel {\mathchoice
   {\mathopen {\setbox0=\hbox{$\displaystyle     \lrgldel$}\hbox to\wd0
                        {\hfil$\displaystyle     (       $\hfil}%
                        \kern-\wd0\hbox to\wd0
                        {\hss$\vcenter{\hbox{\eightpoint$\scriptscriptstyle\bullet\mkern3.2mu$}}$\hss}}}%
   {\mathopen {\setbox0=\hbox{$\textstyle        \lrgldel$}\hbox to\wd0
                        {\hfil$\textstyle        (       $\hfil}%
                        \kern-\wd0\hbox to\wd0
                        {\hss$\vcenter{\hbox{\eightpoint$\scriptscriptstyle\bullet\mkern3.2mu$}}$\hss}}}%
   {\mathopen {\setbox0=\hbox{$\scriptstyle      \lrgldel$}\hbox to\wd0
                        {\hfil$\scriptstyle      (       $\hfil}%
                        \kern-\wd0\hbox to\wd0
                        {\hss$\raise.1ex\hbox{\eightpoint$\scriptscriptstyle\bullet\mkern3.2mu$}$\hss}}}%
   {\mathopen {\setbox0=\hbox{$\scriptscriptstyle\lrgldel$}\hbox to\wd0
                        {\hfil$\scriptscriptstyle(       $\hfil}%
                        \kern-\wd0\hbox to\wd0
                        {\hss$\raise.03ex\hbox{\eightpoint$\scriptscriptstyle\bullet\mkern3.2mu$}$\hss}}}}%
\def\witrdel {\mathchoice
   {\mathclose{\setbox0=\hbox{$\displaystyle     \lrgldel$}\hbox to\wd0
                        {\hfil$\displaystyle     )       $\hfil}%
                        \kern-\wd0\hbox to\wd0
                        {\hss$\vcenter{\hbox{\eightpoint$\scriptscriptstyle\mkern3.2mu\bullet$}}$\hss}}}%
   {\mathclose{\setbox0=\hbox{$\textstyle        \lrgldel$}\hbox to\wd0
                        {\hfil$\textstyle        )       $\hfil}%
                        \kern-\wd0\hbox to\wd0
                        {\hss$\vcenter{\hbox{\eightpoint$\scriptscriptstyle\mkern3.2mu\bullet$}}$\hss}}}%
   {\mathclose{\setbox0=\hbox{$\scriptstyle      \lrgldel$}\hbox to\wd0
                        {\hfil$\scriptstyle      )       $\hfil}%
                        \kern-\wd0\hbox to\wd0
                        {\hss$\raise.1ex\hbox{\eightpoint$\scriptscriptstyle\mkern3.2mu\bullet$}$\hss}}}%
   {\mathclose{\setbox0=\hbox{$\scriptscriptstyle\lrgldel$}\hbox to\wd0
                        {\hfil$\scriptscriptstyle)       $\hfil}%
                        \kern-\wd0\hbox to\wd0
                        {\hss$\raise.03ex\hbox{\eightpoint$\scriptscriptstyle\mkern3.2mu\bullet$}$\hss}}}}%

\newbox\ldelbox
\setbox\ldelbox=\hbox{$\lrgldel$}

\newbox\rdelbox
\setbox\rdelbox=\hbox{$\lrgrdel$}

\def\cons #1{\conldel #1\conrdel}%
\let\turnstile=\vdash
\def\lone{1}

\def\loc{\mathord{!}}
\def\lneg{^\bot}

\def\quadfs {\rlap{\rm\quad.}}%
\def\quand {\quad\mbox{and}\quad}%
\def\qquand {\qquad\mbox{and}\qquad}%
\def\quor {\quad\mbox{or}\quad}%
\def\qquor {\qquad\mbox{or}\qquad}%
\def\qquato {\qquad\to\qquad}%
\def\qualto {\quad\leadsto\quad}%
\def\qqualto {\qquad\leadsto\qquad}%

\def\quand {\quad\mbox{and}\quad}%
\def\qquand {\qquad\mbox{and}\qquad}%

\def\quor {\quad\mbox{or}\quad}%
\def\qquor {\qquad\mbox{or}\qquad}%
\def\clap#1{\hbox to 0pt{\hss#1\hss}}
\def\sqlap#1{\hbox to .5em{\hss#1\hss}}
\def\qlap#1{\hbox to 1em{\hss#1\hss}}
\def\qqlap#1{\hbox to 2em{\hss#1\hss}}
\def\qqqlap#1{\hbox to 3em{\hss#1\hss}}
\def\qqqqlap#1{\hbox to 4em{\hss#1\hss}}
\def\qqqqqlap#1{\hbox to 5em{\hss#1\hss}}
\def\qqqqqqlap#1{\hbox to 6em{\hss#1\hss}}
\def\qqqqqqqlap#1{\hbox to 7em{\hss#1\hss}}
\def\qqqqqqqqlap#1{\hbox to 8em{\hss#1\hss}}
\def\qqqqqqqqqlap#1{\hbox to 9em{\hss#1\hss}}
\newcommand{\wlap}[2][10ex]{\hbox to#1{\hss#2\hss}}

\newcommand{\wlapm}[2][10ex]{\hbox to#1{\hss$#2$\hss}}
\def\rlapm#1{\hbox to 0pt{$#1$\hss}}
\def\llapm#1{\hbox to 0pt{\hss$#1$}}

\newcommand{\vclap}[2][0pt]{\hbox to #1{\hss#2\hss}}
\newcommand{\vclapm}[2][0pt]{\hbox to #1{\hss$#2$\hss}}

\def\interdisplayskip{.5ex}
\newskip\mydisplaywidth
\newcommand{\twolinedisplay}[3][10pt]{%
  \mydisplaywidth=\displaywidth
  \advance\mydisplaywidth-#1
  \begin{array}{c}
    \clap{\hbox to\mydisplaywidth{$\displaystyle#2$\hss}}\\[\interdisplayskip]
    \clap{\hbox to\mydisplaywidth{\hss$\displaystyle#3$}}
  \end{array}
}

\def\nosmash#1{#1}

\newcommand{\vlstr}[3]{\vltr{#1}{#2}{\vlhy{}}{#3}{\vlhy{}}}
\newcommand{\vlhtr}[2]{\vlstr{#1}{#2}{\vlhy{\hskip1em}}}

\newcommand{\vlone}{1} 
\newcommand{\vlbot}{\bot}
\newcommand{\vltop}{\top}
\newcommand{\vlzer}{0}

\newcommand{\vlbox}{\mathord\boxempty} 
\newcommand{\vldia}{\mathord\Diamond}
\newcommand{\vlwn}{\mathord?}
\newcommand{\vloc}{\mathord!}
\newcommand{\vlpl}{\vlbin\varoplus}%
\newcommand{\vlwi}{\vlbin\binampersand}%
\newcommand{\vlva}{\vlbin\varoast}%
\newcommand{\vlwa}{\vlbin\varodot}%
\newcommand{\vlda}{\mathord\dagger}%
\newcommand{\vlca}{\mathord\triangledown}%

\newcommand{\aiD}{\mathsf{ai}\mathord\downarrow}
\newcommand{\aiU}{\mathsf{ai}\mathord\uparrow}
\newcommand{\topD}{\mathsf{\top}\mathord\downarrow}
\newcommand{\topU}{\mathsf{\top}\mathord\uparrow}
\newcommand{\tensD}{\mathsf{\vlte}\mathord\downarrow}
\newcommand{\tensU}{\mathsf{\vlte}\mathord\uparrow}
\newcommand{\withD}{\mathsf{\vlwi}\mathord\downarrow}
\newcommand{\withU}{\mathsf{\vlwi}\mathord\uparrow}
\newcommand{\andD}{\mathsf{\vlan}\mathord\downarrow}
\newcommand{\andU}{\mathsf{\vlan}\mathord\uparrow}
\newcommand{\eD}{\mathsf{e}\mathord\downarrow}
\newcommand{\eU}{\mathsf{e}\mathord\uparrow}
\newcommand{\gD}{\mathsf{4}\mathord\downarrow}
\newcommand{\gU}{\mathsf{4}\mathord\uparrow}
\newcommand{\dD}{\mathsf{d}\mathord\downarrow}
\newcommand{\dU}{\mathsf{d}\mathord\uparrow}
\newcommand{\dzD}{\mathsf{d_0}\mathord\downarrow}
\newcommand{\dzU}{\mathsf{d_0}\mathord\uparrow}
\newcommand{\dzp}{\mathsf{d'_0}}
\newcommand{\dzpD}{\mathsf{d'_0}\mathord\downarrow}
\newcommand{\dzpU}{\mathsf{d'_0}\mathord\uparrow}
\newcommand{\sD}{\mathsf{s}\mathord\downarrow}
\newcommand{\sU}{\mathsf{s}\mathord\uparrow}
\newcommand{\dpD}{\mathsf{d'}\mathord\downarrow}
\newcommand{\dhD}{\mathsf{\hat{d}}\mathord\downarrow}
\newcommand{\dpU}{\mathsf{d'}\mathord\uparrow}
\renewcommand{\cD}{\mathsf{c}\mathord{\downarrow}}

\renewcommand{\cU}{\mathsf{c}\mathord\uparrow}
\newcommand{\cpD}{\mathring{\mathsf{c}}\mathord{\downarrow}}
\newcommand{\cpU}{\mathring{\mathsf{c}}\mathord\uparrow}
\newcommand{\wpD}{\mathring{\mathsf{w}}\mathord{\downarrow}}
\newcommand{\wpU}{\mathring{\mathsf{w}}\mathord\uparrow}
\newcommand{\wlD}{\acute{\mathsf{w}}\mathord{\downarrow}}
\newcommand{\wlU}{\acute{\mathsf{w}}\mathord\uparrow}
\newcommand{\wrD}{\grave{\mathsf{w}}\mathord{\downarrow}}
\newcommand{\wrU}{\grave{\mathsf{w}}\mathord\uparrow}
\newcommand{\pD}{\mathsf{p}\mathord\downarrow}
\newcommand{\pU}{\mathsf{p}\mathord\uparrow}
\newcommand{\bD}{\mathsf{b}\mathord\downarrow}

\newcommand{\rD}{\mathsf{r}\mathord\downarrow}
\newcommand{\rU}{\mathsf{r}\mathord\uparrow}
\newcommand{\ppD}{\mathsf{p'}\mathord\downarrow}
\newcommand{\ppU}{\mathsf{p'}\mathord\uparrow}
\newcommand{\phU}{\mathsf{\hat{p}}\mathord\uparrow}
\newcommand{\tD}{\mathsf{t}\mathord\downarrow}
\renewcommand{\tU}{\mathsf{t}\mathord\uparrow}
\newcommand{\swl}{\mathsf{sl}}
\newcommand{\swr}{\mathsf{sr}}

\newcommand{\axr}{\mathsf{ax}}
\newcommand{\kr}{\mathsf{k}}
\newcommand{\ktr}{\mathsf{t}}
\newcommand{\kfr}{\mathsf{4}}
\newcommand{\rcD}{\mathsf{r_c}\mathord\downarrow}
\newcommand{\rncD}{\mathsf{r_{nc}}\mathord\downarrow}

\newcommand{\eqD}{\mathord{\fequ}\mathord\downarrow}
\newcommand{\eqU}{\mathord{\fequ}\mathord\uparrow}

\usepackage{xspace}
\newcommand{\resp}[1]{(resp.\ #1)\xspace}
\newcommand{\Resp}[1]{$\left(\text{resp.\ #1}\right)$}

\newcommand{\proves}[1][]{\vdash_{#1}}

\newcommand{\nodv}[4]{{#1}\vdash^{#2}_{#4}{#3}}
\newcommand{\nnodv}[4]{{#1}\vdash_{#4}{#3}}

\newcommand{\transl}{\leftrightarrow}

\title{Interpolation via Generalized Splitting} 

\author{Lutz {Stra{\ss}burger}}{Inria \& LIX, Ecole Polytechnique, IP Paris, Palaiseau, France \and \url{https://www.lix.polytechnique.fr/Labo/Lutz.Strassburger/} }{}{https://orcid.org/0000-0003-4661-6540}{}

\authorrunning{L.~Stra{\ss}burger} 

\Copyright{Lutz Stra\ss burger} 

\ccsdesc[500]{Theory of computation~Logic}

\keywords{Proof theory, linear logic, cut elimination, interpolation, deep inference} 

\category{} 

\relatedversion{} 

\acknowledgements{} 

\nolinenumbers 

\EventEditors{}
\EventNoEds{0}
\EventLongTitle{}
\EventShortTitle{}
\EventAcronym{}
\EventYear{}
\EventDate{}
\EventLocation{}
\EventLogo{}
\SeriesVolume{}
\ArticleNo{}

\newcommand{\prem}[1]{\mathsf{pr}(#1)}
\newcommand{\conc}[1]{\mathsf{cn}(#1)}
\newcommand{\premf}{\mathsf{pr}}
\newcommand{\concf}{\mathsf{cn}}

\newcommand{\Fms}{\cF}
\newcommand{\Atms}{\cA}
\newcommand{\Ctxs}{\mathcal{C}}
\newcommand{\Ders}{\mathcal{D}}

\newcommand{\Jast}{J^\ast}

\newcommand{\deri}{\delta}
\newcommand{\derib}{\theta}
\newcommand{\deric}{\zeta}
\newcommand{\derid}{\xi}
\newcommand{\derisw}{\delta_{\mathsf{s}}}
\newcommand{\sys}{\mathsf{X}}
\newcommand{\sysS}{\mathsf{X}}
\newcommand{\sysX}{\mathsf{X}}
\newcommand{\sysY}{\mathsf{Y}}
\newcommand{\sysSX}{\mathsf{SX}}
\newcommand{\sysSXD}{\mathsf{SX}\mathord{\downarrow}}

\newcommand{\sysU}{\mathsf{X}\mathord{\uparrow}}
\newcommand{\sysD}{\mathsf{X}\mathord{\downarrow}}

\newcommand{\SLSp}{\mathsf{SLS'}}
\newcommand{\LS}{\mathsf{LS}}
\newcommand{\MLS}{\mathsf{MLS}}
\newcommand{\LSc}{\mathsf{LS_c}}
\newcommand{\LSnc}{\mathsf{LS_{nc}}}
\newcommand{\LSp}{\mathsf{LS'}}
\newcommand{\LScp}{\mathsf{LS_c'}}
\newcommand{\SLS}{\mathsf{SLS}}

\newcommand{\SMLS}{\mathsf{SMLS}}
\newcommand{\SLSc}{\mathsf{SLS_c}}
\newcommand{\SLSnc}{\mathsf{SLS_{nc}}}

\newcommand{\SLScU}{\mathsf{SLS_c}\mathord{\uparrow}}
\newcommand{\SLSncD}{\mathsf{SLS_{nc}}\mathord{\downarrow}}
\newcommand{\SLSncU}{\mathsf{SLS_{nc}}\mathord{\uparrow}}
\newcommand{\SLSpD}{\mathsf{SLS'}\mathord{\downarrow}}
\newcommand{\SLSpU}{\mathsf{SLS'}\mathord{\uparrow}}
\newcommand{\SLScpD}{\mathsf{SLS_c'}\mathord{\downarrow}}
\newcommand{\SLScpU}{\mathsf{SLS_c'}\mathord{\uparrow}}
\newcommand{\SLScp}{\mathsf{SLS'_c}}
\newcommand{\SKSc}{\mathsf{SKS_c}}
\newcommand{\SKSnc}{\mathsf{SKS_{nc}}}
\newcommand{\KSc}{\mathsf{KS_c}}
\newcommand{\KSnc}{\mathsf{KS_{nc}}}
\newcommand{\SKScU}{\mathsf{SKS_c}\mathord{\uparrow}}

\newcommand{\SKSU}{\mathsf{SKS}\mathord{\uparrow}}

\newcommand{\KSkc}{\mathsf{KS^K_{c}}}
\newcommand{\KSync}{\mathsf{KS^Y_{nc}}}
\newcommand{\KSknc}{\mathsf{KS^K_{nc}}}
\newcommand{\KSktnc}{\mathsf{KS^{KT}_{nc}}}
\newcommand{\KSkfnc}{\mathsf{KS^{K4}_{nc}}}
\newcommand{\KSsfnc}{\mathsf{KS^{S4}_{nc}}}
\newcommand{\SKSkc}{\mathsf{SKS^K_{c}}}
\newcommand{\SKSkcU}{\mathsf{SKS^K_{c}}\mathord{\uparrow}}
\newcommand{\SKSync}{\mathsf{SKS^Y_{nc}}}
\newcommand{\SKSknc}{\mathsf{SKS^K_{nc}}}
\newcommand{\SKSktnc}{\mathsf{SKS^{KT}_{nc}}}
\newcommand{\SKSkfnc}{\mathsf{SKS^{K4}_{nc}}}
\newcommand{\SKSsfnc}{\mathsf{SKS^{S4}_{nc}}}
\newcommand{\KSy}{\mathsf{KS^Y}}
\newcommand{\KSk}{\mathsf{KS^K}}

\newcommand{\SKSy}{\mathsf{SKS^Y}}
\newcommand{\SKSyD}{\mathsf{SKS^Y}\mathord{\downarrow}}
\newcommand{\SKSyU}{\mathsf{SKS^Y}\mathord{\uparrow}}
\newcommand{\SKSk}{\mathsf{SKS^K}}
\newcommand{\SKSkt}{\mathsf{SKS^{KT}}}
\newcommand{\SKSkf}{\mathsf{SKS^{K4}}}
\newcommand{\SKSsf}{\mathsf{SKS^{S4}}}
\newcommand{\MELL}{\mathsf{MELL}}
\newcommand{\KT}{\mathsf{KT}}
\newcommand{\Kfour}{\mathsf{K4}}
\newcommand{\Kfourthree}{\mathsf{K4.3}}
\newcommand{\Sfour}{\mathsf{S4}}
\newcommand{\GSone}{\mathsf{GS1}}
\newcommand{\GSk}{\mathsf{GS1^K}}
\newcommand{\GSkt}{\mathsf{GS1^{KT}}}
\newcommand{\GSkf}{\mathsf{GS1^{K4}}}
\newcommand{\GSsf}{\mathsf{GS1^{S4}}}
\newcommand{\GSy}{\mathsf{GS1^{Y}}}

\renewcommand{\lneg}[1]{#1^\bot}
\newcommand{\llneg}[1]{#1^{\bot\bot}}
\renewcommand{\cons}[1]{[#1]}
\newcommand{\Cons}[1]{\left[{#1}\right]}
\renewcommand{\conhole}{\cons{\cdot}}
\newcommand{\Ccons}[2][]{C_{#1}\cons{#2}}
\newcommand{\Cpcons}[2][]{C'_{#1}\cons{#2}}
\newcommand{\CCons}[2][]{C_{#1}\left[{#2}\right]}

\newcommand{\Chole}[1][\cdot]{\Ccons{#1}}
\newcommand{\CHole}[1][\cdot]{\CCons{#1}}
\newcommand{\chole}{\cdot}

\newcommand{\ccneg}[1]{\cneg{\cneg #1}}
\newcommand{\sizeof}[1]{\left|#1\right|}

\newcommand{\stto}[1]{\stackrel{#1}{\to}}
\newcommand{\qsto}[1]{\quad\stto{#1}\quad}

\newcommand{\fequ}{\equiv}
\newcommand{\eqpa}{\mathbin{\fequ_{\vlpa}}}
\newcommand{\eqte}{\mathbin{\fequ_{\vlte}}}
\newcommand{\eqor}{\mathbin{\fequ_{\vlor}}}
\newcommand{\eqan}{\mathbin{\fequ_{\vlan}}}

\newcommand{\proviso}[1]{\mbox{\small(#1)}}

\newcommand{\defn}[1]{\textbf{\itshape #1}}
\newcommand{\nodf}[1]{#1}

\begin{document}

\maketitle

\begin{abstract}
  We propose a new proof theoretical method for proving Lyndon interpolation. Our proof does not use the sequent calculus but is based on a generalization of the splitting lemma in deep inference.  We then formulate the interpolation theorem as a decomposition of a derivation into an up-fragment and a down-fragment. This can be seen as (i) a strengthening of the standard formulation of the interpolation theorem, and (ii) a generalization of the cut elimination theorem.  We demonstrate the flexibility of our approach by applying it to linear logic, classical logic, and modal logics. For this, we also introduce novel cut-free proof systems for several modal logics in deep inference.
\end{abstract}

\section{Introduction}

Given an implication $A\to B$, the interpolation theorem by Craig~\cite{craig:57:interpol} says that there is a formula $C$, called the \emph{interpolant}, such that the two implications $A\to C$ and $C\to B$ hold, and furthermore, the language of $C$ is  contained in the intersection of the languages of $A$ and $B$. A stronger form of this result is Lyndon interpolation~\cite{lyndon:59}, which also demands that the propositional symbols appearing in $C$ must have the same polarity as in $A$ and $B$, i.e., if they occur negated in $C$, then they also must occur negated in $A$ and in $B$.

Interpolation has been studied for various logics, including classical first-order logic~\cite{craig:57:interpol,lyndon:59}, intuitionistic and intermediate logics~\cite{kuznets:lellmann:21}, various modal logics~\cite{fitting:kuznets:15}, and also linear logic~\cite{roorda:94,saurin:25,dissvonlutz}. And there are now many applications of interpolations in various areas of computer science, for example in program verification~\cite{rummer:26}, database query rewriting~\cite{benedikt:26}, or knowledge representation~\cite{JKK:25:interpol}.

There are many different ways of proving interpolation, using model-theoretic or automata-theoretic or syntactic or proof theoretic methods~\cite{BCI:interpolation}. The most common proof-theoretic proof of the interpolation theorem is due to  Maehara~\cite{maehara:interpolation}. That proof surgically extracts the interpolant from a cut-free sequent proof of $\ssqn AB$. Many modern proofs for richer logics are variations of  Maehara's method (see, e.g., \cite{GJK:interpolation,kuznets:lellmann:21,fitting:kuznets:15}).

Our first contribution in this paper is a new method for proving the interpolation theorem. It will be proof theoretic, but instead of the sequent calculus, it will use techniques borrowed from deep inference~\cite{gug:str:01,brunnler:tiu:01,gug:SIS,dissvonlutz}. 

Recently, Saurin~\cite{saurin:25} strengthened Maehara's method to obtain a proof relevant cut-introduction. I.e., given a cut-free proof $\pi$ of $\ssqn AB$, the interpolant $C$ comes with two proofs $\pi_1$ of $\ssqn AC$ and $\pi_2$ of  $\ssqn CB$, such that eliminating the cut from\vadjust{\vskip-3.5ex}
$$
\scalebox{.95}{$
\vlderivation{
  \vliin{\cutr}{}{\ssqn AB}{
    \vlhtr{\pi_1}{\ssqn AC}}{
    \vlhtr{\pi_2}{\ssqn CB}}}
$}
\vadjust{\vskip-2.5ex}
$$
yields exactly $\pi$.

This leads us to the second contribution of this paper, which is a theorem that (i) can be seen as a deep inference formulation of Saurin's result, and at the same time (ii) is a simple statement that generalizes both, cut elimination and interpolation. More precisely, our main theorem is the following:
\begin{theorem}\label{thm:interpol}
  Every derivation $\odv A\deri B\sys$ 
  can be decomposed into
  $\smash{
    \od{\odd{\odd{\odh{A}}{
            \deri_1}{I}{\sysU}}{
        \deri_2}{B}{\sysD}}
  }$  
  for some formula~$I$. 
\end{theorem}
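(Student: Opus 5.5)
The plan is to reduce the decomposition to a generalized splitting lemma for the down-fragment $\sysD$, used in a form that tolerates an arbitrary ``co-context''. Recall that in deep inference the system $\sys$ splits into $\sysD$ and its dual $\sysU$ (each up-rule is the contrapositive of a down-rule). A derivation $\odv A\deri B\sys$ corresponds, via the deduction theorem for deep inference, to a proof of $\cneg A\vlor B$ (resp.\ $\lneg A\vlpa B$), where $\cneg A$ is the De Morgan dual. Dualizing the part of a proof that lives under $\cneg A$ turns down-rules into up-rules. So the statement amounts to the following: every proof of $\cneg A\vlor B$ in $\sys$ (equivalently, by cut elimination, in $\sysD$) can be split into a $\sysD$-proof of $\cneg A\vlor I$ and an $\sysD$-derivation from $I$ to $B$. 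Dualizing the first component then gives $\odv A{\deri_1}I\sysU$.

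\textbf{Step 1: cut elimination.} First I push all up-rules out of $\deri$ to obtain a proof of $\cneg A\vlor B$ in $\sysD$ alone. I would either assume cut elimination for the system at hand or, better, derive it as the special case $A=\vltt$ of the splitting argument. This way the theorem genuinely generalizes cut elimination.

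\textbf{Step 2: generalized splitting.} I prove, by induction on the length of the $\sysD$-proof and then on the structure of formulas, a lemma of the following shape. If $\sysD$ proves $\Ccons{P}\vlor Q$, then there is a formula $I$ with two properties:
\begin{itemize}
\item $\sysD$ proves $P\vlor I$;
\item there is a $\sysD$-derivation from $\Ccons{\cneg I}$ to $Q$, or, in the form we actually need, from the dual interpolant to $Q$ with the context kept.
\end{itemize}
This is the familiar splitting lemma, which says that a proof of $\Ccons{P\vlan R}\vlor Q$ splits into proofs of $P\vlor K_1$ and $R\vlor K_2$ together with a derivation $\Ccons{K_1\vlan K_2}$-to-$Q$. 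Here it is generalized so that the ``left'' side is the whole formula $\cneg A$, not just a conjunction. The construction inspects the bottommost rule instance. If it acts purely inside $Q$, it is pushed into the $\deri_2$ part. If it acts purely inside $\cneg A$, it is pushed into the $P\vlor I$ part. If it straddles both sides, as a switch $\sD$, a medial, or a contraction whose redex meets both $\cneg A$ and $Q$, the induction hypothesis is applied to the premise and the pieces are recombined. The interpolant $I$ grows by a conjunction or disjunction in this case. For $\aiD$ instances that pair an atom in $\cneg A$ with its dual in $Q$, the atom enters $I$. This last point also yields the Lyndon language and polarity condition for free.

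\textbf{Step 3: reassembly.} I take $P:=\cneg A$ with empty context, dualize the proof of $\cneg A\vlor I$ into $\deri_1$, and use the second component as $\deri_2$.

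\textbf{Main obstacle.} The main difficulty I expect is the straddling case in Step 2, particularly for rules like medial and contraction in classical logic, and the modal rules. There the redex does not decompose along the $\cneg A$/$Q$ boundary, so the induction measure may increase. I would first try to handle it with a strengthened induction hypothesis that splits with respect to \emph{several} holes at once, i.e.\ contexts $\Ccons{P_1}\cdots\cons{P_n}$ with one interpolant per hole. I would use a lexicographic measure of proof length and then context size, checking that each rule-specific case reduces this measure.
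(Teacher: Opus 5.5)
Your reduction fails at its first step: a $\sysD$-proof of $\cneg A\vlor I$ does not dualize to an up-derivation from $A$ to $I$. Dualizing a derivation $\vltt\to\cneg A\vlor I$ rule by rule gives a derivation from $A\vlan\cneg I$ to $\vlff$ in $\sysU$. What you need is a $\sysD$-derivation from $\cneg I$ to $\cneg A$, i.e.\ one with a non-trivial premise, and provability of $\cneg A\vlor I$ does not supply it. For instance, $\cneg a\vlor(a\vlor b)$ is provable in $\KS$, but $a\proves[\sysU]a\vlor b$ is impossible, because no up-rule can introduce the atom $b$. Extracting $A\to_{\sysU}I'\to_{\sysD}I$ from a proof of $\cneg A\vlor I$ is the original problem again, so the argument is circular.

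The repair needs derivations with arbitrary premises throughout, which a splitting lemma for proofs (premise $\vltt$), as in your Step~2, cannot produce. This is the missing idea, the paper's \emph{generalized} splitting. A down-derivation from an \emph{arbitrary} formula $J$ to $(F_1\vlan F_2)\vlor K$ is split into an \emph{up}-derivation $J\to C[J_1\vlan J_2]$, down-derivations $J_i\to F_i\vlor K_i$ ($i=1,2$), and a down-derivation $C[K_1\vlor K_2]\to K$. With it the paper proves a flipping lemma, turning $H\to_{\sysU}J\to_{\sysD}F\vlor K$ into $\cneg F\vlan H\to_{\sysU}I\to_{\sysD}K$. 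The induction is on the formula $F$, not on the proof. For $F=F_1\vlan F_2$ one takes $I=C[I_1\vlor I_2]$, glued together with $\sU$. For $F=F_1\vlor F_2$ one flips $F_1$ and then $F_2$, which is exactly where arbitrary intermediate premises arise. The theorem is the instance $H=J=\vltt$, $F=\cneg A$, $K=B$, applied after cut elimination.

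Your ``main obstacle'' is also genuine, and the multi-hole strengthening does not resolve it. Once switches have interleaved material from $\cneg A$ and $B$, a contraction higher up may act on a subformula mixing both sides. You give no argument that the resulting pieces can be recombined or that the measure decreases. The paper never splits through non-core rules. Generalized splitting is proved only for the core fragment ($\KSc$, resp.\ $\LScp$ for linear logic). Before that, a decomposition lemma permutes all non-core rules (contraction, weakening and their relatives) below the core ones. Every down-derivation $P\to Q_1\vlor\cdots\vlor Q_n$ becomes a core derivation $P\to Q'_1\vlor\cdots\vlor Q'_n$ followed by non-core derivations $Q'_i\to Q_i$, preserving the top-level disjunction. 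The non-core piece $F'\to F$ on the side being flipped is then a genuine derivation, so it may legitimately be dualized to $\cneg F\to_{\sysU}\cneg{F'}$. In the linear and modal systems, both decomposition and generalized splitting rely on extra rules that are admissible but not needed for completeness: $\dpD$ and $\dzpD$ for linear logic, and $\ppD$ for linear and modal logic.

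Finally, your ``better'' option of obtaining cut elimination as the case $A=\vltt$ is circular, since Step~2 only accepts cut-free input. The paper uses cut elimination as an ingredient and only remarks afterwards that the decomposition entails it.
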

Here, $\odv A\deri B\sys$ 
stands for a derivation $\deri$ with premise $A$ and conclusion $B$, in which all inference rules come from the proof system $\sys$, and we discuss in the course of this paper, for which proof systems $\sys$, we prove this theorem.
Note that in a deep inference system, every inference rule is single premise.
Furthermore, a typical deep inference proof system $\sys$ is divided into a \emph{down-fragment}, here denoted by $\sysD$, and an \emph{up-fragment}, denoted by $\sysU$. The rules in the down-fragment correspond to the rules in a cut-free sequent calculus, and the rules in the up-fragment correspond to the $\cutr$ in the sequent calculus. However, in a deep inference system, there is a perfect up-down duality between these two fragments. And \Cref{thm:interpol} above says that they can be separated in a derivation.

Since the rules in the down-fragment cannot create new propositional variables (or change their polarity) while going up in a derivation, and dually, the up-rules cannot  create new propositional variables (or change their polarity) while going down in a derivation, it follow immediately, that the formula $I$\/ in \Cref{thm:interpol} must be a Lyndon interpolant.

%
%
To demonstrate the flexibility of our new approach to interpolation, we show it here for linear logic, classical logic, and several modal logics.

This brings us to the third contribution in this paper. We introduce novel cut-free deep inference proof systems for the modal logics $\K$, $\KT$, $\Kfour$ and $\Sfour$. The reason for this is that the deep inference proof theory for modal logic is rather underdeveloped, and the existing proof systems~\cite{hein:stewart:05,stewart:stouppa:05} are not suited for our purposes. In fact, it is not even clear whether they enjoy cut elimination.

Let us now briefly discuss how our proof of \Cref{thm:interpol} works.
The main ingredient is a generalization of the so-called \emph{splitting lemma}~\cite{gug:SIS}.
Using the notation of linear logic, the standard variant of this lemma starts from a proof of a formula $(A\vlte B)\vlpa K$ and splits the context $K$ into $K_A\vlpa K_B$ such that $A\vlpa K_A$ and $B\vlpa K_B$ are both provable.
In our generalization, the derivation of $(A\vlte B)\vlpa K$ can have an arbitrary formula $J$ as premise, instead of the unit $\vlone$. With this generalized version of splitting, we can then prove a so-called \emph{flipping lemma} which allows us to transform a derivation
\begin{equation}
  \label{eq:flip}
  \hfill
  \od{\odd{\odd{\odh{H}}{
        \deri_1}{J}{\sysU}}{
      \deri_2}{\lneg A\vlpa K}{\sysD}}
  \qquad\text{into}\qquad
  \od{\odd{\odd{\odh{A\vlte H}}{
        \derib_1}{I}{\sysU}}{
      \derib_2}{K}{\sysD}}
  \quad.
  \hfill
\end{equation}
for some formula $I$. In other words, the decomposition into up-fragment and down-fragment is preserved under \emph{currying}. This means that our method allows to investigate how currying influences the interpolant. Furthermore, \Cref{thm:interpol} is then an immediate consequence of cut elimination and the flipping in~\eqref{eq:flip}.

However, in order to obtain this flipping lemma from generalized splitting, we need to be able to decompose the proof system $\sys$ into a \emph{core fragment} and a \emph{non-core fragment}. Fortunately, this separation has already been observed for linear logic~\cite{dissvonlutz} and classical logic~\cite{brunnler:phd}, but not yet for modal logic, which is the reason for the need of the novel proof systems mentioned above.

This core/non-core separation gives the proof of the interpolation theorem a rather modular structure, in the sense that minor changes in the proof system do not force us to redo the whole proof from scratch.
As mentioned above, we are going to prove our results for classical, linear, and modal logics, and the tools of deep inference allow us to design the proof systems in such a way that the results for classical and modal logic follow almost trivially from the results for linear logic.

For this reason, 
we will start this paper by recalling linear logic and how it is presented in a deep inference setting in \Cref{sec:prelim}. In that section we will also recall some standard properties of deep inference proof systems.
Then, in \Cref{sec:splitting} we discuss the generalized splitting lemma, and in \Cref{sec:flipping} we discuss the flipping lemma and show how it is used to prove interpolation.
After that, we repeat in \Cref{sec:classical} the same procedure for classical logic, and in \Cref{sec:modal} for modal logic.

\section{Preliminaries: Linear Logic and Deep Inference}
\label{sec:prelim}

The set $\Fms=\set{A,B,C,\ldots}$ of \defn{formulas} of linear logic~\cite{girard:87} is generated from a countable set $\Atms=\set{a,b,c,\ldots}$ of \defn{atoms} by the following grammar:
$$
\cF\grammareq \vlone\mid\vlbot\mid\vltop\mid\vlzer\mid\cA\mid\lneg\cA\mid
\cF\vlte\cF\mid\cF\vlpa\cF\mid\cF\vlwi\cF\mid\cF\vlpl\cF\mid
\vloc\cF\mid\vlwn\cF
$$
where the \defn{units} $\vlone,\vlbot,\vltop,\vlzer$ are called \defn{one}, \defn{bottom}, \defn{top}, \defn{zero}, respectively; the binary connectives $\vlte,\vlpa,\vlwi,\vlpl$ are called \defn{tensor}, \defn{par}, \defn{with}, \defn{plus}, respectively; and the two modalities $\vloc,\vlwn$ are called \defn{of course} and \defn{why not}, respectively. \defn{Linear negation} is defined for all formulas via De~Morgan duality:
$$
\def\myskip{\hskip.9em}
\scalebox{.95}{$
\begin{array}{@{}r@{\;}c@{\;}l@{\myskip}r@{\;}c@{\;}l@{\myskip}r@{\;}c@{\;}l@{\myskip}r@{\;}c@{\;}l@{\myskip}r@{\;}c@{\;}l@{\myskip}r@{\;}c@{\;}l@{\myskip}r@{\;}c@{\;}l}
  \lneg{(a)}&=&\lneg a
  &
  \lneg{(A\vlte B)}&=&\lneg A\vlpa\lneg B
    &
    \lneg{(A\vlwi B)}&=&\lneg A\vlpl\lneg B
    &
    \lneg{(\vloc A)}&=&\vlwn\lneg A
    &
    \lneg\vlone&=&\vlbot
    &
    \lneg\vltop&=&\vlzer
    \\
    \llneg a&=&a
  &
  \lneg{(A\vlpa B)}&=&\lneg A\vlte\lneg B
    &
    \lneg{(A\vlpl B)}&=&\lneg A\vlwi\lneg B
    &
    \lneg{(\vlwn A)}&=&\vloc\lneg A
    &
    \lneg\vlbot&=&\vlone
    &
    \lneg\vlzer&=&\vltop
\end{array}
$}
$$
It follows that $\llneg{A}=A$ for all formulas. We define \defn{linear implication} as $A\vlli B=\lneg A\vlpa B$. We also define two equivalence relations $\eqpa$ and $\eqte$, which are the smallest congruence relations obeying the following:
$$
\begin{array}{r@{\;}c@{\;}l@{\hskip3em}r@{\;}c@{\;}l@{\hskip3em}r@{\;}c@{\;}l}
  A\vlpa(B\vlpa C) & \eqpa &(A\vlpa B)\vlpa C
  &
  A\vlpa  B&\eqpa&B\vlpa A
  &
  A\vlpa\vlbot &\eqpa&A
  \\
  A\vlte(B\vlte C) & \eqte&(A\vlte B)\vlte C
  &
  A\vlte  B&\eqte&B\vlte A
  &
  A\vlte\vlone &\eqte&A
\end{array}
$$
\begin{figure}[!t]
  \def\mskipa{-1.5ex}
  \newcommand{\mcol}[2][c@{\quad}]{\multicolumn{2}{#1}{#2}}
  \vskip-\baselineskip
  $$
  \begin{array}{cccc|cccc} 
    \vlinf{\eD}{}{\vloc\vlone}{\vlone}
    &
    \vlinf{\topD}{}{\vltop}{\vlone}
    &
    \vlinf{\dzD}{}{\vltop\vlpa\vlzer}{\vltop}
    &
    \vlinf{\dzpD}{}{\vlzer\vlpa\vlzer}{\vlzer}
    &
    \vlinf{\dzpU}{}{\vltop}{\vltop\vlte\vltop}
    &
    \vlinf{\dzU}{}{\vltop}{\vlzer\vlte\vltop}
    &
    \vlinf{\topU}{}{\vlbot}{\vlzer}
    &
    \vlinf{\eU}{}{\vlbot}{\vlwn\vlbot}
    \\&&&\\[\mskipa]
    \multicolumn{4}{c|}{
      \begin{array}{@{}c@{\quad\;}c@{\quad\;}c}
        \vlinf{\withD}{}{\vlone\vlwi\vlone}{\vlone}
        &
        \vlinf{\tensD}{}{\vlone\vlte\vlone}{\vlone}
        &
        \vlinf{\eqD}{\proviso{if $A\eqpa B$}}{B}{\;A\;}
      \end{array}
    }
    &
    \multicolumn{4}{c}{
       \begin{array}{@{}c@{\quad\;}c@{\quad\;}c}
         \vlinf{\eqU}{\proviso{if $A\eqte B$}}{\;B\;}{A}
         &
         \vlinf{\tensU}{}{\vlbot}{\vlbot\vlpa\vlbot}
         &
         \vlinf{\withU}{}{\vlbot}{\vlbot\vlpl\vlbot}
       \end{array}
    }
    \\&&&\\[\mskipa]
    \multicolumn{4}{c|}{
      \begin{array}{c@{\qquad}c}
        \vlinf{\aiD}{}{\lneg a\vlpa a}{\vlone}
        &
        \vlinf{\sD}{}{(A\vlte B)\vlpa(C\vlpa D)}{(A\vlpa C)\vlte(B\vlpa D)}       
        \\\\[\mskipa]
        \vlinf{\pD}{}{\vloc A\vlpa\vlwn B}{\vloc(A\vlpa B)}
        &
        \vlinf{\dD}{}{(A\vlwi B)\vlpa(C\vlpl D)}{(A\vlpa C)\vlwi(B\vlpa D)}
        \\\\[\mskipa]
        \vlinf{\ppD}{}{\vlwn A\vlpa\vlwn B}{\vlwn(A\vlpa B)}
        &
        \vlinf{\dpD}{}{(A\vlpl B)\vlpa(C\vlpl D)}{(A\vlpa C)\vlpl(B\vlpa D)}        
      \end{array}
    }
    &
    \multicolumn{4}{c}{ 
      \begin{array}{c@{\qquad}c}
        \vlinf{\sU}{}{(A\vlte B)\vlpa(C\vlte D)}{(A\vlpa C)\vlte(B\vlte D)}
        &
        \vlinf{\aiU}{}{\vlbot}{\lneg a\vlte a}
        \\\\[\mskipa]
        \vlinf{\dU}{}{(A\vlte B)\vlpl(C\vlte D)}{(A\vlpl C)\vlte(B\vlwi D)}
        &
        \vlinf{\pU}{}{\vlwn(A\vlte B)}{\vlwn A\vlte \vloc B}
        \\\\[\mskipa]
        \vlinf{\dpU}{}{(A\vlte B)\vlpl(C\vlte D)}{(A\vlpl C)\vlte(B\vlpl D)}
        &
        \vlinf{\ppU}{}{\vloc(A\vlte B)}{\vloc A\vlte \vloc B}
      \end{array}
    }
    \\&&&\\[\mskipa]    
    \vlinf{\wD}{}{\vlwn A}{\vlbot}
    &
    \vlinf{\tD}{}{\vlwn A}{A}
    &
    \vlinf{\gD}{}{\vlwn A}{\vlwn\vlwn A}
    &
    \vlinf{\cD}{}{\vlwn A}{\vlwn A\vlpa \vlwn A}
    &
    \vlinf{\cU}{}{\vloc A\vlte\vloc A}{\vloc A}
    &
    \vlinf{\gU}{}{\vloc\vloc A}{\vloc A}
    &
    \vlinf{\tU}{}{A}{\vloc A}
    &
    \vlinf{\wU}{}{\vlone}{\vloc A}
    \\&&&\\[\mskipa]
    \vlinf{\wpD}{}{A}{\vlzer}
    &
    \vlinf{\wlD}{}{A\vlpl B}{A}
    &
    \vlinf{\wrD}{}{A\vlpl B}{B}
    &
    \vlinf{\cpD}{}{A}{A\vlpl A}
    &
    \vlinf{\cpU}{}{A\vlwi A}{A}
    &
    \vlinf{\wrU}{}{A}{A\vlwi B}
    &
    \vlinf{\wlU}{}{B}{A\vlwi B}
    &
    \vlinf{\wpU}{}{\vltop}{A}
  \end{array}
  $$\vskip-2ex
  \caption{System $\SLSp$, a deep inference system for linear logic. Left: down-fragment, Right: up-fragment. First four lines without the $\dzp$-rules: System $\SLSc$. First five lines: System~$\SLScp$. Last two lines: System~$\SLSnc$. Everything, except the fifth line and the $\dzp$-rules: System $\SLS$. Second and third line without the $\vlwi$-rules: System $\SMLS$, a system for multiplicative linear logic ($\MLL$). Then, systems $\LSp,\LSc,\LScp,\LSnc,\LS,\MLS$ are the corresponding systems without the up-fragment.}
  \label{fig:SLS}
\end{figure}
%
\begin{definitionn}
A \defn{proof system} is a set of inference rules of the shape $\upsmash{\vlinf{\rr}{}{B}{A}}$, where $A$ is called the \defn{premise} and $B$ is called the \defn{conclusion} of $\rr$. The inference rules for linear logic that we are using in this paper are all shown in \Cref{fig:SLS}. The rules with $\downarrow$ \resp{$\uparrow$} in the name are called \defn{down-rules} \resp{\defn{up-rules}}. We define the following proof systems:
\begin{equation}
  \label{eq:systems}
  \begin{array}{@{\hskip-1em}l@{\;=\;}l@{\qquad}l@{\;=\;}l}
    \MLS&\set{\aiD,\sD,\tensD,\eqD}
    &
    \SMLS&\MLS\cup\set{\aiU,\sU,\tensU,\eqU}
    \\
    \LSc&\MLS\cup\set{\pD,\dD,\dzD,\eD,\withD,\topD}
    &
    \SLSc&\LSc\cup\set{\pU,\dU,\dzU,\eU,\withU,\topU}
    \\
    \LSnc&\set{\cD,\wD,\gD,\tD,\cpD,\wpD,\wlD,\wrD}
    &
    \SLSnc&\LSnc\cup\set{\cU,\wU,\gU,\tU,\cpU,\wpU,\wlU,\wrU}
    \\
    \LS&\LSc\cup\LSnc 
    &
    \SLS&\SLSc\cup\SLSnc 
    \\
    \LScp&\LSc\cup\set{\ppD,\dpD,\dzpD}
    &
    \SLScp&\LScp\cup\set{\ppU,\dpU,\dzpU}
    \\
    \LSp&\LScp\cup\LSnc 
    &
    \SLSp&\SLScp\cup\SLSnc 
  \end{array}
\end{equation}
where the index $\mathsf{c}$ in the name of the systems stands for \defn{core} and the $\mathsf{nc}$ for \defn{non-core}. 
For every proof system $\sys$ defined above, we write $\sysD$ \resp{$\sysU$} consisting only of the down-rules \resp{up-rules} of $\sys$. This means that for every $\sys\in\set{\MLS,\LSc,\LS,\LScp,\LSp,\LSnc}$, we have $\sysX=\sysD=\sysSXD$ and $\sysU=\emptyset$.
\end{definitionn}
\begin{definitionn}\label{def:derivation}
We define the set $\Ders=\set{\deri,\derib,\ldots}$ of \defn{derivations} together with two functions \hbox{$\premf,\concf\colon\Ders\to\Fms$} (called \defn{premise} and \defn{conclusion}, respectively) inductively as follows:
\begin{itemize}
\item For every formula $A$, we have that $\nodf{A}$ is a derivation and $\prem{\nodf{A}}=\conc{\nodf{A}}=A$.
\item If $\deri$ and $\derib$ are derivations and $\vlva \in \set{\vlte,\vlpa,\vlwi,\vlpl}$, then 
    $\nodf{\deri\vlva \derib}$ is a derivation and $\prem{\nodf{\deri\vlva \derib}}=\prem{\deri}\vlva\prem{\derib}$ and $\conc{\nodf{\deri\vlva \derib}}=\conc{\deri}\vlva\conc{\derib}$.
  \item If $\deri$ is a derivation and $\vlda\in\set{\vloc,\vlwn}$, then so is $\vlda\deri$ and $\prem{\vlda\deri}=\vlda\prem\deri$ and $\conc{\vlda\deri}=\vlda\conc\deri$. 
  \item If $\deri$ and $\derib$ are derivations and $\downsmash{\vlinf{\rr}{}{B}{A}}$ is an instance of an inference rule and $\conc\deri=A$ and $\prem\derib=B$ then $\deric={\odn{\deri}{\rr}{\derib}{}}$ is a derivation and $\prem{\deric}=\prem{\deri}$ and $\conc{\deric}=\conc{\derib}$.
  \end{itemize}
  We write $\vlupsmash{\odv{A}{\deri}{B}{\sysS}}$ \Resp{$\downsmash{\odr{\deri}{B}{\sysS}}$} to denote a derivation $\deri$ with conclusion $\conc\deri=B$ and premise $\prem\deri=A$ \resp{$\prem\deri=\vlone$} where all inference rules used in $\deri$ are in the proof system~$\sysS$, and we write $A\proves[\sysS] B$ \resp{$\proves[\sysS] B$} to indicate that such a derivation exists.
\end{definitionn}
This definition of  derivations follows the \emph{open deduction} notation, an we refer the reader to~\cite{GGP:open} for more details.


\begin{theorem}\label{thm:soundSLS}
  Every system $\sysS$ with $\LS\subseteq\sysS\subseteq\SLS'$ is sound and complete for linear logic.
\end{theorem}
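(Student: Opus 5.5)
Soundness and completeness are with respect to a standard sequent calculus for linear logic (say Girard's one-sided calculus with cut). Since $\LS\subseteq\sysS\subseteq\SLSp$, it suffices to prove two things: completeness for the smallest system $\LS$ and soundness for the largest system $\SLSp$. Precisely, I will show that $\proves[\LS] A$ whenever $\vdash A$ is provable in the sequent calculus, and that $\proves[\SLSp] A$ implies $\vdash A$ in the sequent calculus. The same reasoning will also cover derivations with premise $H$, read as provability of $\vdash\lneg H, B$.

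For soundness, I first check each rule of \Cref{fig:SLS}. For each instance $\vlinf{\rr}{}{B}{A}$, the linear implication $A\vlli B$, i.e.\ the sequent $\vdash\lneg A,B$, is provable in the sequent calculus. This is a finite check, and it is routine for most rules:
\begin{itemize}
\item the switch $\sD$ is the standard tensor/par rearrangement;
\item the medial-like rules $\dD,\dpD$ use the additive rules;
\item $\pD,\ppD$ use promotion and dereliction;
\item the structural rules $\cD,\wD,\tD,\gD$ are contraction, weakening, dereliction and digging on $\vlwn$;
\item the rules $\eqD,\eqU$ follow because associativity, commutativity and unit laws are linear equivalences.
\end{itemize}
The up-rules are handled by duality: $\vlinf{\rr\uparrow}{}{\lneg A}{\lneg B}$ is the contrapositive of $\rr\downarrow$, so $\vdash\lneg{\lneg B},\lneg A$ is literally the same sequent. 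Next I show that the rules are closed under contexts: if $\vdash\lneg A,B$ is provable then so is $\vdash\lneg{(\Ccons A)},\Ccons B$. This goes by induction on the context, using functoriality of $\vlte,\vlpa,\vlwi,\vlpl$ and of the exponentials; the exponential case uses promotion, which is fine because the sequent has the form $\vdash\vlwn\lneg A,\vloc B$. With this in hand, induction on the structure of derivations (\Cref{def:derivation}) composes these implications using $\cutr$. Finally, from $\vlone$ one obtains a proof of $\vdash A$.

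For completeness, I go through the cut-free sequent calculus and translate a cut-free proof of $\vdash\Gamma$ into a derivation $\odr{}{\bigvlpa\Gamma}{\LS}$ by induction on the proof. The cases are:
\begin{itemize}
\item The axiom is translated by $\aiD$ together with an inductive expansion of general identities to atomic ones. This expansion uses $\sD$, $\dD$ and $\pD$, plus $\topD,\dzD,\eD,\withD,\tensD$ for the units.
\item The $\vlpa$-rule is immediate.
\item The $\vlte$-rule places the two derivations side by side with $\vlte$ and then applies $\sD$ twice to move $\Gamma$ and $\Delta$ out.
\item The $\vlwi$-rule applies $\withD$ and then $\dD$. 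This needs $\Gamma$ to be duplicated, for which I use $\cpD$ after $\wlD,\wrD$: from $(A\vlpa\Gamma)\vlwi(B\vlpa\Gamma)$ apply $\dD$ to get $(A\vlwi B)\vlpa(\Gamma\vlpl\Gamma)$, then $\cpD$.
\item $\vlpl$ is handled by $\wlD,\wrD$, and $\vltop$ by $\topD$ followed by $\wpD$ on the context.
\item Promotion uses $\pD$ repeatedly, after $\eD$ at the top.
\item Dereliction, weakening and contraction use $\tD,\wD,\cD$.
\end{itemize}
Cut elimination for the sequent calculus is invoked, so that cut need not be translated.

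The main obstacle is the promotion case. From $\vloc(A\vlpa\vlwn B_1\vlpa\cdots)$ one must reach $\vloc A\vlpa\vlwn\vlwn B_1\vlpa\cdots$ via $\pD$, and then remove the doubled $\vlwn$ using $\gD$. A second delicate point is the premise $\vlone$ of promotion: the translated subproof must be put under $\vloc$ by starting from $\vloc\vlone$ via $\eD$ deep inside the context. Both points only need the rules of $\LS$, so completeness of $\LS$ holds.
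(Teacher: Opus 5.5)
Your proof is correct and follows the paper's argument. For soundness, every rule of $\SLSp$ is a valid linear implication, closed under contexts and composed with cut. For completeness, every cut-free sequent rule is simulated in $\LS$, and you handle promotion correctly by repeated $\pD$ followed by $\gD$.

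Two small fixes. First, in the $\vltop$ case you need $\dzD$ between $\topD$ and $\wpD$, i.e.\ $\vlone\to\vltop\to\vltop\vlpa\vlzer\to\vltop\vlpa\Gamma$, because $\wpD$ needs a $\vlzer$ to rewrite. Second, your remark about derivations with premise $H$ holds only for soundness: completeness in the form ``$\vdash\lneg H,B$ implies an $\LS$-derivation from $H$ to $B$'' fails, since there is no $\LS$-derivation from $a\vlte\lneg a$ to $\vlbot$. This does not affect the theorem as stated.
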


\begin{proof}
  Every inference rule in $\SLS'$ constitutes a sound implication of linear logic, and conversely, every sequent calculus rule of linear logic can be simulated by $\LS$. Details can be found in~\cite{dissvonlutz,str:02,str:MELL}.
\end{proof}
\begin{remark}\label{rem:rulesSKS}
  There are some design choices that make our system $\LS$ differ from existing versions \cite{dissvonlutz,str:02,CGS:foccos} in the literature:
  \begin{itemize}
  \item Instead of the standard (self-dual) \defn{switch} rule $\upsmash{\vlinf{\sw}{}{A\vlpa(B\vlte C)}{(A\vlpa B)\vlte C}}$ as in~\cite{dissvonlutz,str:02}, we have the two rules $\sD$ and $\sU$. However, observe that $\sw$ is derivable in $\set{\sD,\eqD}$, and $\sD$ is derivable in $\set{\eqD,\sw,\eqU}$. Our choice has been made to avoid the use of $\eqte$ in the down-fragment. In~\cite{CGS:foccos}, the authors used two versions of switch, $\swl$ and $\swr$, instead.
  \item Our variant of $\LS$ uses much less equations than the variant in \cite{dissvonlutz,str:02} and we use the rules $\eD$, $\gD$, $\tensD$, $\withD$, and $\topD$ instead. This is similar to the variant of $\LS$ in~\cite{CGS:foccos}, which uses different rules. 
  \item We replaced the rule $\upsmash{\vlinf{\bD}{}{\vlwn A}{\vlwn A\vlpa A}}$ of~\cite{dissvonlutz,str:02} by the rules $\cD$ and $\tD$. Note that the version in~\cite{CGS:foccos} selects another combination of rules. Our choice makes it easier to translate our results to modal logic in \Cref{sec:modal}.
  \end{itemize}
  It should be emphasized that these differences are only of cosmetic nature and all three systems (the one in~\cite{str:02}, the one in~\cite{CGS:foccos}, and the one here) are equivalent.
\end{remark}
\begin{remark}
  However, adding the  rules $\ppD$, $\dpD$, $\dzpD$ is more than a cosmetic change. They are not needed for completeness of the system, and they are not derivable in $\LS$. They do not allow to prove more formulas (i.e., they are admissible for $\LS$), but they play a crucial role for proving \Cref{thm:interpol} and the generalized splitting lemma in the next section. The rule $\ppD$ has been observed already in~\cite{dissvonlutz} for proving interpolation for multiplicative exponential linear logic ($\MELL$), and the rules $\dpD$ and $\dzpD$ have been used in~\cite{str:02} as part of a set of rules to reduce contraction (our rules $\cD$ and $\cpD$) and weakening (all rules with a $\mathsf{w}$ in the name) to an atomic form.
\end{remark}

The following two rules are called \defn{axiom} and \defn{cut}, respectively:
\begin{equation}\label{eq:cut}
  \hfill
  \scalebox{.9}{$
    \vlinf{\iD}{}{\lneg A \vlpa A}{\vlone}
    \qquand
    \vlinf{\iU}{}{\vlbot}{A \vlte \lneg A}
    $}
  \hfill
\end{equation}
They are the generalizations of the atomic interaction rule $\aiD$ and its dual $\aiU$,
and they play a crucial role in the following proposition which collects some well-known facts about deep inference systems (see, e.g., \cite{dissvonlutz,str:02,brunnler:phd,ATS:esslli2019}, or the appendix for more details).

\begin{restatable}{proposition}{propAI}\label{prop:DI}
  We have the following:
  \begin{enumerate}
  \item\label{prop:ai}
    The rule $\iD$ is derivable in $\LSc$, and dually, the rule $\iU$ is derivable in $\SLScU$.
  \item \label{prop:updown}
    Every rule $\rU\in\SLSpU$ is derivable in $\set{\rD,\iD,\sD,\iU,\eqD,\eqU}$.
  \item\label{prop:flip1}
    For every $\sys\in\set{\SLSc,\SLScp,\SLS,\SLSp}$, we have $A\proves[\sys]B$ iff\/ $\proves[\sys]\lneg A\vlpa B$. 
  \item\label{prop:flip2}
    For
    every $\sys\in\set{\SLSc,\SLScp,\SLS,\SLSp,\SLSnc}$, we have $A\proves[\sysD]B$ iff\/ $\lneg B\proves[\sysU]\lneg A$.
  \item\label{prop:cutup}
  For every $\sys\in\set{\LSc,\LScp,\LS,\LSp}$, we have $\proves[\sysSX]A$ iff\/ $\proves[\sys\cup\set{\iU}]A$.\\[-2ex]
  \end{enumerate}
\end{restatable}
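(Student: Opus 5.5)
Each of the five items is a standard deep inference fact, and I would prove them in the order listed, since the later ones use the earlier.

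For item~\ref{prop:ai} I would use structural induction on $A$. Atoms are handled by $\aiD$. For the units, $\lneg\vlone\vlpa\vlone=\vlbot\vlpa\vlone$ is obtained from $\vlone$ by $\eqD$ (using $A\vlpa\vlbot\eqpa A$), and $\vlzer\vlpa\vltop$ comes from $\vlone$ via $\topD$ followed by $\dzD$ and $\eqD$.

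For $A=B\vlte C$ I would start from $\vlone$ and apply $\tensD$ to get $\vlone\vlte\vlone$. I then apply the induction hypothesis inside each component, giving $(\lneg B\vlpa B)\vlte(\lneg C\vlpa C)$. After commuting the pars by $\eqD$, this becomes $(B\vlpa\lneg B)\vlte(C\vlpa\lneg C)$, and $\sD$ yields $(B\vlte C)\vlpa(\lneg B\vlpa\lneg C)$, followed by $\eqD$. The case $\vlwi/\vlpl$ works the same way, using $\withD$ and $\dD$. For $\vloc/\vlwn$ I would start from $\eD$ to get $\vloc\vlone$, apply the induction hypothesis inside the box, and finish with $\pD$. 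The dual statement for $\iU$ follows by dualizing each such derivation: negate everything and flip it upside down, which turns every down-rule into the corresponding up-rule.

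For item~\ref{prop:updown} I would use the standard trick. A rule $\vlinf{\rU}{}{B}{A}$ is the dual of $\vlinf{\rD}{}{\lneg A}{\lneg B}$. So from $A$ I use $\eqD$ to obtain $A\vlpa\vlbot$, hmm rather $A\vlte\vlone$ via $\eqU$. Then $\iD$ (for $B$) gives $A\vlte(\lneg B\vlpa B)$, and $\sD$ (with $\eqD/\eqU$ to reorder) gives $(A\vlte\lneg B)\vlpa B$. Applying $\rD$ inside to $\lneg B$ gives $(A\vlte\lneg A)\vlpa B$, then $\iU$ gives $\vlbot\vlpa B$, and $\eqD$ gives $B$. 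Only general $\iD$ and $\iU$ are needed for this, and those are the rules listed in the statement.

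For item~\ref{prop:flip1}, the forward direction is: from $\vlone$ use $\iD$ to obtain $\lneg A\vlpa A$, then plug the derivation $A\proves B$ into the right component; this uses item~\ref{prop:ai}. For the converse I would take $A$, rewrite it as $A\vlte\vlone$, and plug in the proof of $\lneg A\vlpa B$. Then $\sD$ gives $(A\vlte\lneg A)\vlpa B$, and the cut $\iU$ (derivable in the up-fragment by item~\ref{prop:ai}) finishes the derivation.

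For item~\ref{prop:flip2} I would dualize the derivation rule by rule. Every down-rule of each listed system has its exact dual, with negated premise and conclusion exchanged, among the up-rules of the same system, and $\eqD$ dualizes to $\eqU$ because negation maps $\eqpa$ onto $\eqte$. Open deduction composition is preserved under this flipping, so an induction on Definition~\ref{def:derivation} suffices.

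Item~\ref{prop:cutup} is the only one needing care. The direction from $\sys\cup\set{\iU}$ to $\sysSX$ is immediate from item~\ref{prop:ai}. For the direction from $\sysSX$ to $\sys\cup\set{\iU}$, I replace each up-rule in the proof by its simulation from item~\ref{prop:updown}, since $\iD\in\sys$ by item~\ref{prop:ai}. I must check that every up-rule used has its down counterpart in $\sys$. This holds because each system is closed under taking duals: for example, $\LScp$ contains $\ppD$, $\dpD$ and $\dzpD$, matching the up-rules of $\SLScp$.

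The main obstacle is item~\ref{prop:ai}, specifically the exponential and additive cases. There the shallow pattern (open a unit, expand it inside a context, then recombine with a medial-like rule) must be checked to use only core rules, for instance $\pD$ rather than any non-core rule for the $\vloc/\vlwn$ case.
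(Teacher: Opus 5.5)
Your items 1--4 follow the paper's proof essentially step for step. This covers the inductive construction of $\mathsf{i}\mathord{\downarrow}$ via $\tensD$/$\sD$, $\withD$/$\dD$ and $\eD$/$\pD$, the simulation of $\rU$ through $A\vlte\vlone$, the two derivations for item 3, and rule-by-rule dualization for item 4.

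The gap is in item 5, in the direction from $\sysSX$ to $\sys\cup\set{\mathsf{i}\mathord{\uparrow}}$. There you replace every up-rule by its simulation from item 2 and check only that the corresponding $\rD$ lies in $\sys$. But that simulation, yours as well as the paper's, begins with the $\eqU$-step $A\to A\vlte\vlone$. Moreover, $\eqU$ is itself one of the up-rules occurring in a $\sysSX$-proof, and its simulation reintroduces exactly this step. Since each $\sys\in\set{\LSc,\LScp,\LS,\LSp}$ consists of down-rules only, $\eqU\notin\sys\cup\set{\mathsf{i}\mathord{\uparrow}}$. So what you actually obtain is a proof in $\sys\cup\set{\mathsf{i}\mathord{\uparrow},\eqU}$, not in $\sys\cup\set{\mathsf{i}\mathord{\uparrow}}$.

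This is not a formality. Your argument is purely local and never uses that the premise of the proof is $\vlone$. If it were complete, it would show that $A\proves[\sysSX]B$ implies $A\proves[\sys\cup\set{\mathsf{i}\mathord{\uparrow}}]B$ for arbitrary $A$, and that is false. Indeed, $a\proves[\SLSc]a\vlte\vlone$ by a single $\eqU$. But from the premise $a$, every rule of $\LSc\cup\set{\mathsf{i}\mathord{\uparrow}}$ other than $\eqD$ needs a $\vlone$, $\vlte$, $\vlwi$, $\vloc$ or $\vltop$ in its premise. Hence only the equivalence class of $a$ under $\eqpa$ (formulas built from $a$, $\vlbot$, $\vlpa$) is reachable, and it does not contain $a\vlte\vlone$.

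The missing idea, which the paper supplies, has two parts. First, every residual $\eqU$ is a unit insertion $A\to A\vlte\vlone$. Second, these insertions can be permuted upward until they reach the premise $\vlone$, where each becomes an instance $\vlone\to\vlone\vlte\vlone$ of $\tensD$. When the insertion point is a subformula created by the rule being passed, it falls into one of three cases:
\begin{enumerate}
\item it is a $\vlone$, and one applies $\tensD$ to it;
\item it lies in freshly weakened material, and one re-instantiates the weakening;
\item it is reached from the root of the redex through $\vlpa$ only; then one inserts the $\vlone$ around the redex and pushes it inward via $(U\vlpa V)\vlte\vlone\to(U\vlte\vlone)\vlpa V$, which is derivable in $\set{\eqD,\sD}$.
\end{enumerate}
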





The following theorem states the cut elimination result. It follows immediately from \Cref{thm:soundSLS} and \Cref{prop:DI}.\ref{prop:cutup}. For details on how to prove it inside deep inference, the interested reader is referred to \cite{dissvonlutz,CGS:foccos,CGRS:foccos-long}.
\begin{restatable}[Cut Elimination]{theorem}{thmCutElim}\label{thm:cutelim}
  For every $\sys\in\set{\LSc,\LScp,\LS,\LSp}$, if\/ $\proves[\sys\cup\set{\iU}]A$ then\/ $\proves[\sysX]A$.\\[-2ex]
\end{restatable}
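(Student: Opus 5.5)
The paper itself says this follows immediately from \Cref{thm:soundSLS} and \Cref{prop:DI}.\ref{prop:cutup}. I will follow that route: a semantic argument through soundness and completeness, rather than a syntactic cut-reduction procedure.

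Fix $\sys\in\set{\LSc,\LScp,\LS,\LSp}$ and suppose $\proves[\sys\cup\set{\iU}]A$. First I show $\proves[\sysSX]A$. The cut rule $\iU$ is derivable in $\SLScU$ by \Cref{prop:DI}.\ref{prop:ai}, and $\SLScU\subseteq\sysSX$ for each of the four choices of $\sys$. So I replace every instance of $\iU$ in the given derivation, inside whatever context it occurs, by its derivation in $\SLScU$. This yields a derivation of $A$ in $\sysSX$. This direction is just the easy half of \Cref{prop:DI}.\ref{prop:cutup}.

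Next I pass from $\sysSX$ to $\sys$, and I split into two cases. If $\sys\in\set{\LS,\LSp}$, then $\LS\subseteq\sysSX\subseteq\SLSp$, so by \Cref{thm:soundSLS} the formula $A$ is valid in linear logic. Since $\LS\subseteq\sys\subseteq\SLSp$ as well, completeness of $\sys$ from \Cref{thm:soundSLS} gives $\proves[\sys]A$.

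For the core systems $\LSc$ and $\LScp$, \Cref{thm:soundSLS} does not apply directly, because they lack the non-core rules. Here I invoke the nontrivial direction of \Cref{prop:DI}.\ref{prop:cutup}, which states precisely $\proves[\sysSX]A$ iff $\proves[\sys\cup\set{\iU}]A$ for the core systems too. That alone does not close the loop, since it returns us to $\sys\cup\set{\iU}$ rather than to $\sys$.

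This is the main obstacle: a genuine cut elimination for the core fragment is needed, and the semantic shortcut is unavailable there. My plan is to reduce general cuts to atomic ones via \Cref{prop:DI}.\ref{prop:updown}, which expresses every up-rule through $\iD$, $\sD$, $\iU$ and equations. I would then eliminate atomic cuts $\aiU$ using the splitting technique for the core fragment from \cite{dissvonlutz,CGS:foccos}. Splitting on the $\lneg a\vlte a$ redex decomposes the context so that each atom occurrence is matched with its dual produced by an $\aiD$ above, and the cut can then be removed.

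Alternatively, one can argue that every core rule is a valid implication in the fragment of linear logic without exponential structural rules. A cut-free sequent calculus for that fragment can be translated into $\LSc$, which gives completeness of $\LSc$ for its own fragment and hence the result. I expect checking this translation for the $\dD$ and $\dzD$ rules to be the only delicate point.
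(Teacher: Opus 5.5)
For $\LS$ and $\LSp$ your argument is the paper's: replace $\iU$ by its $\SLScU$-derivation (\Cref{prop:DI}.\ref{prop:ai}), then apply soundness to the resulting $\SLS$- or $\SLSp$-derivation and completeness of $\sys$ from \Cref{thm:soundSLS}. You are also right that this route says nothing about $\LSc$ and $\LScp$, because \Cref{thm:soundSLS} only covers systems containing $\LS$. The paper's one-line argument has the same limitation; for a proof inside deep inference it only points to the splitting-based literature. Your proposal does not close the core case either, so for $\LSc$ and $\LScp$ there is a genuine gap.

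Your second route fails. The core has no additive contraction or weakening (and no dereliction). So the context-sharing $\vlwi$-rule cannot be translated into $\LSc$: it needs $\cpD$ after $\dD$, as the paper's concluding section notes. The same goes for the $\vlpl$-rules and the $\vltop$-axiom with a context. The delicate point is therefore not $\dD$ or $\dzD$. Concretely, consider the map $\vlwi\mapsto\vlte$, $\vlpl\mapsto\vlpa$, $\vltop\mapsto\vlone$, $\vlzer\mapsto\vlbot$ that erases $\vloc$ and $\vlwn$. It sends every rule of $\SLScp$ to a valid MLL implication; for instance $\dD$ and $\dU$ become $\sD$ and $\sU$. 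The image of $(a\vlwi a)\vlpa\lneg a$ is $(a\vlte a)\vlpa\lneg a$, which is not MLL-provable, so $(a\vlwi a)\vlpa\lneg a$ is not provable even in $\SLScp$. Yet it has a cut-free MALL proof using no exponential rule at all. Hence $\LSc$ is not complete for the fragment you name. What you would actually need is a cut-free calculus capturing exactly $\SLSc$-provability, which is the theorem itself.

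Your first route is only a pointer, and its reduction step is misdescribed. \Cref{prop:DI}.\ref{prop:updown} goes the other way: it simulates up-rules \emph{by} $\iU$. \Cref{prop:DI}.\ref{prop:ai}, by contrast, decomposes $\iU$ into the whole up-fragment $\SLScU$, which contains $\sU$, $\eqU$, $\dU$, $\withU$, $\pU$, $\eU$, $\topU$ and $\dzU$ besides $\aiU$. Eliminating $\aiU$ alone therefore does not suffice; each of these rules must be shown admissible for $\LSc$ (or $\LScp$). Doing that by splitting needs two things:
\begin{itemize}
\item a splitting lemma with premise $\vlone$ whose output derivations stay inside the core system itself;
\item a context-reduction lemma that moves a redex sitting under $\vlwi$, $\vlpl$, $\vloc$ or $\vlwn$ to a shallow position of the form $X\vlpa K$.
\end{itemize}
Neither is supplied. \Cref{lem:splitting} handles only shallow redexes, and its output derivations may use $\ppD$, $\dpD$ and $\dzpD$, so it cannot produce $\LSc$-derivations. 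You would also have to check that the splitting arguments of the cited works survive restriction to this particular core.
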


Finally, we also need the notion of \defn{context} $\Chole$, which is a formula with a single occurrence of a \defn{hole} $\cons\chole$, taking the place of a subformula, or a derivation. We write $\Chole[A]$ for the formula obtained by replacing the hole in $\Chole$ with the formula $A$. Similarly, we write \scalebox{.9}{$\nodf{\CHole[\odv{A}{\deri}{B}{\sysS}]}$} for the derivation with premise $\Chole[A]$ and conclusion $\Chole[B]$ that is obtained from putting $\deri$ in context $\Chole$.

If the hole $\conhole$ occurs only inside multiplicative connectives $\vlte$ and $\vlpa$, we speak of a \defn{multiplicative context} (or \defn{m-context}). More formally, the set $\Ctxs$ of m-contexts is generated by the grammar:
$$
\Ctxs\grammareq\conhole\mid\Fms\vlte\Ctxs\mid\Ctxs\vlte\Fms
\mid\Fms\vlpa\Ctxs\mid\Ctxs\vlpa\Fms
$$
The reason for their special treatment is the following lemma which does not hold for arbitrary contexts. The proof is standard (see e.g., \cite{dissvonlutz,ATS:esslli2019}) but we show it also in the appendix.
\begin{restatable}{lemma}{lemSwitch}\label{lem:switch}
  For all formulas $A,B$ and m-contexts $\Chole$, there are derivations
  $$
  \odv{\Chole[A\vlpa B]}{\deri}{A\vlpa\Chole[B]}{\set{\eqD,\sD}}
  \qquand
  \odv{A\vlte\Chole[B]}{\derib}{\Chole[A\vlte B]}{\set{\eqU,\sU}}
  $$
\end{restatable}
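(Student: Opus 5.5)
I will prove both statements by structural induction on the m-context $\Chole$, following the grammar $\Ctxs\grammareq\conhole\mid\Fms\vlte\Ctxs\mid\Ctxs\vlte\Fms\mid\Fms\vlpa\Ctxs\mid\Ctxs\vlpa\Fms$. It suffices to construct $\deri$; the derivation $\derib$ is then the exact dual, obtained by the same construction with $\vlte$ and $\vlpa$ exchanged, $\sD$ replaced by $\sU$, and $\eqD$ replaced by $\eqU$.

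\textbf{Base case.} If $\Chole=\conhole$, both sides are $A\vlpa B$, and $\deri$ is the trivial derivation.

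\textbf{Par cases.} Suppose $\Chole=C'\cons{\chole}\vlpa D$. By the induction hypothesis there is a derivation from $C'\cons{A\vlpa B}$ to $A\vlpa C'\cons{B}$ in $\set{\eqD,\sD}$. Putting it in the context $\cons{\chole}\vlpa D$ yields $(A\vlpa C'\cons{B})\vlpa D$, and one $\eqD$ step (associativity of $\vlpa$) gives $A\vlpa(C'\cons{B}\vlpa D)$. The case $D\vlpa C'\cons{\chole}$ is the same, with an additional use of commutativity in $\eqD$.

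\textbf{Tensor cases.} These are the key step. Suppose $\Chole=C'\cons{\chole}\vlte D$. By the induction hypothesis we reach $(A\vlpa C'\cons{B})\vlte D$. Since $\eqte$ is not available in the down-fragment, I cannot insert a unit $\vlone$ and use the standard switch directly. Instead I use $\eqD$ to rewrite $D$ as $\vlbot\vlpa D$, using $A'\vlpa\vlbot\eqpa A'$. This gives
\[
(A\vlpa C'\cons{B})\vlte(\vlbot\vlpa D).
\]
Now I apply $\sD$ with its four metavariables instantiated as $A$, $C'\cons{B}$, $\vlbot$, and $D$:
\[
(A\vlpa C'\cons{B})\vlte(\vlbot\vlpa D)\;\longrightarrow\;(A\vlte\vlbot)\vlpa(C'\cons{B}\vlpa D).
\]
This is not the shape I want. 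The variables must be paired differently: the rule $\sD$ maps $(A\vlpa C)\vlte(B\vlpa D)$ to $(A\vlte B)\vlpa(C\vlpa D)$. So I instead instantiate the rule's $A:=C'\cons{B}$, $C:=A$, $B:=D$, $D:=\vlbot$. I first use $\eqD$ to commute the left factor into $(C'\cons{B}\vlpa A)\vlte(D\vlpa\vlbot)$, and then $\sD$ yields
\[
(C'\cons{B}\vlte D)\vlpa(A\vlpa\vlbot).
\]
A final $\eqD$ step, removing $\vlbot$ and commuting, gives $A\vlpa(C'\cons{B}\vlte D)$, as required. The case $D\vlte C'\cons{\chole}$ is symmetric, but there commutativity of $\vlte$ is not in $\eqpa$. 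I handle it with the same instantiation trick, placing $D$ in the left slot of the rule's tensor: from $(D\vlpa\vlbot)\vlte(C'\cons{B}\vlpa A)$, the rule $\sD$ gives $(D\vlte C'\cons{B})\vlpa(\vlbot\vlpa A)$.

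\textbf{Main obstacle.} The only real subtlety is this careful bookkeeping so that only $\eqpa$ is used in $\deri$, and dually only $\eqte$ in $\derib$. This is exactly why the units $\vlbot$ and $\vlone$ are introduced via the respective unit laws of $\eqpa$ and $\eqte$.
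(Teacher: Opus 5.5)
Your proof is correct and is essentially the paper's own argument: induction on the m-context, the par cases closed by a single $\eqD$ step, and the two tensor cases handled by padding $D$ to $D\vlpa\vlbot$ via $\eqpa$ and applying $\sD$ to $(C'[B]\vlpa A)\vlte(D\vlpa\vlbot)$ and $(D\vlpa\vlbot)\vlte(C'[B]\vlpa A)$ respectively, with $\derib$ obtained by duality. These are exactly the instantiations the paper uses, so the only thing to change in a final write-up is to delete your abandoned first instantiation of $\sD$ in the tensor case.
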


\section{Generalized Splitting}
\label{sec:splitting}

The splitting lemma in a deep inference system~\cite{gug:SIS} says that when a formula $(A\vlte B)\vlpa K$ is provable, then the context $K$ can be split into two contexts $K_A$ and $K_B$, such that $A\vlpa K_A$ and $B\vlpa K_B$ are both provable. This is similar to what happens in the sequent calculus: when a sequent $\sqn{A\vlte B,\Gamma}$ is provable, then at some point in the proof, the formula $A\vlte B$ becomes principal and splits the context, via the $\vlte$-rule in linear logic, as indicated below:
$$
\begin{tikzpicture}[x=1em,y=1ex]
  \node (a) at (0,0) {$\vliiinf{\vlte}{}{
      \sqn{A\vlte B,\Gamma_1,\Gamma_2}}{\sqn{A,\Gamma_1}}{}{\sqn {B,\Gamma_2}}$};
  \node (b) at (-2,5.5) {$\ddots\iddots$};
  \node  at (3.5,5.5) {$\ddots\iddots$};
  \node at (2,-4.5) {$\ddots$};
  \node at (4,-8) {$\sqn{A\vlte B,\Gamma}$};
  \node at (6,-4.5) {$\iddots$};
\end{tikzpicture}
$$
A possible interpretation of the splitting lemma is that it finds the information about how to split the context, i.e., it finds $\Gamma_1$ and $\Gamma_2$, without actually constructing the sequent proof.

On the other hand, Maehara's method~\cite{maehara:interpolation} extracts the information about the interpolant from a (cut-free) sequent proof. A natural question to ask is then, whether a variant of the splitting lemma can also extract that information without actually constructing the sequent proof.\looseness=-1

This is the purpose of the \emph{generalized splitting lemma}. Unlike the standard splitting lemma, it does not start from a proof of a formula $(A\vlte B)\vlpa K$ but from a derivation with a non-empty premise. The lemma is stated below and proved in the appendix.

\begin{restatable}[Generalized Splitting]{lemma}{lemSplit}\label{lem:splitting}
  Let $J,K,A,B$ be formulas and $a$ be an atom or negated atom.
  \begin{enumerate}
  \item\label{s:tensor} If $\nnodv{J}{\deri}{(A\vlte B)\vlpa K}{\LScp}$ then there is an m-context $\Chole$ and formulas $J_A,J_B,K_A,K_B$ such that
    $$
    \odv{J}{\deri_J}{\Ccons{J_A\vlte J_B}}{\SLScpU}
    \quand
    \odv{\Ccons{K_A\vlpa K_B}}{\deri_K}{K}{\LScp}
    \quand
    \odv{J_A}{\deri_A}{A\vlpa K_A}{\LScp}
    \quand
    \odv{J_B}{\deri_B}{B\vlpa K_B}{\LScp}
    $$
  \item\label{s:with} If $\nnodv{J}{\deri}{(A\vlwi B)\vlpa K}{\LScp}$ then there is either
    (i) an m-context $\Chole$ and formulas $J_A,J_B$ such that
    $$
    \odv{J}{\deri_J}{\Ccons{J_A\vlwi J_B}}{\SLScpU}
    \quand
    \odv{\Ccons{\vlbot}}{\deri_K}{K}{\LScp}
    \quand
    \odv{J_A}{\deri_A}{A}{\LScp}
    \quand
    \odv{J_B}{\deri_B}{B}{\LScp}
    $$
    or (ii) an m-context $\Chole$ and formulas $J_A,J_B,K_A,K_B$ such that
    $$
    \odv{J}{\deri_J}{\Ccons{J_A\vlwi J_B}}{\SLScpU}
    \quand
    \odv{\Ccons{K_A\vlpl K_B}}{\deri_K}{K}{\LScp}
    \quand
    \odv{J_A}{\deri_A}{A\vlpa K_A}{\LScp}
    \quand
    \odv{J_B}{\deri_B}{B\vlpa K_B}{\LScp}
    $$
  \item\label{s:plus} If $\nnodv{J}{\deri}{(A\vlpl B)\vlpa K}{\LScp}$ then there is either
    (i) an m-context $\Chole$ and formulas $J_A,J_B$ such that
    $$
    \odv{J}{\deri_J}{\Ccons{J_A\vlpl J_B}}{\SLScpU}
    \quand
    \odv{\Ccons{\vlbot}}{\deri_K}{K}{\LScp}
    \quand
    \odv{J_A}{\deri_A}{A}{\LScp}
    \quand
    \odv{J_B}{\deri_B}{B}{\LScp}
    $$
    or (ii) a m-context $\Chole$ and formulas $J_A,J_B,K_A,K_B$ such that
    $$
    \odv{J}{\deri_J}{\Ccons{J_A\vlva J_B}}{\SLScpU}
    \quand
    \odv{\Ccons{K_A\vlva K_B}}{\deri_K}{K}{\LScp}
    \quand
    \odv{J_A}{\deri_A}{A\vlpa K_A}{\LScp}
    \quand
    \odv{J_B}{\deri_B}{B\vlpa K_B}{\LScp}
    $$
    where $\vlva$ is either $\vlpl$ or $\vlwi$.
  \item\label{s:oc} If $\nnodv{J}{\deri}{\vloc A\vlpa K}{\LScp}$ then there is either (i) an m-context $\Chole$ and a formula $J_A$ such that
    $$
    \odv{J}{\deri_J}{\Ccons{\vloc J_A}}{\SLScpU}
    \quand
    \odv{\Ccons{\vlbot}}{\deri_K}{K}{\LScp}
    \quand
    \odv{J_A}{\deri_A}{A}{\LScp}
    $$
    or (ii) an m-context $\Chole$ and formulas $J_A$ and $K_A$ such that
    $$
    \odv{J}{\deri_J}{\Ccons{\vloc J_A}}{\SLScpU}
    \quand
    \odv{\Ccons{\vlwn K_A}}{\deri_K}{K}{\LScp}
    \quand
    \odv{J_A}{\deri_A}{A\vlpa K_A}{\LScp}
    $$
  \item\label{s:wn} If $\nnodv{J}{\deri}{\vlwn A\vlpa K}{\LScp}$ then there is either (i) an m-context $\Chole$ and a formula $J_A$ such that
    $$
    \odv{J}{\deri_J}{\Ccons{\vlwn J_A}}{\SLScpU}
    \quand
    \odv{\Ccons{\vlbot}}{\deri_K}{K}{\LScp}
    \quand
    \odv{J_A}{\deri_A}{A}{\LScp}
    $$
    or (ii) an m-context $\Chole$ and formulas $J_A$ and $K_A$ such that
    $$
    \odv{J}{\deri_J}{\Ccons{\vlda J_A}}{\SLScpU}
    \quand
    \odv{\Ccons{\vlda K_A}}{\deri_K}{K}{\LScp}
    \quand
    \odv{J_A}{\deri_A}{A\vlpa K_A}{\LScp}
    $$
    where $\vlda$ is either $\vloc$ or $\vlwn$.
  \item\label{s:one} If $\nnodv{J}{\deri}{\vlone\vlpa K}{\LScp}$ then there is an m-context $\Chole$ such that
    $$
    \odv{J}{\deri_J}{\Ccons{\vlone}}{\SLScpU}
    \quand
    \odv{\Ccons{\vlbot}}{\deri_K}{K}{\LScp}
    $$
   \item\label{s:top} If $\nnodv{J}{\deri}{\vltop\vlpa K}{\LScp}$ then there is an m-context $\Chole$ and formulas $\ttt\in\set{\vlone,\vltop}$ and $\fff\in\set{\vlzer,\vlbot}$ such that
     $$
    \odv{J}{\deri_J}{\Ccons{\ttt}}{\SLScpU}
    \quand
    \odv{\Ccons{\fff}}{\deri_K}{K}{\LScp}
    $$
    (i.e., there is one of four possible cases). 
   \item\label{s:zero} If $\nnodv{J}{\deri}{\vlzer\vlpa K}{\LScp}$ then there is an m-context $\Chole$ such that
     $$
     \mbox{either (i)}\quad
    \odv{J}{\deri_J}{\Ccons{\vlzer}}{\SLScpU}
    \quand
    \odv{\Ccons{\fff}}{\deri_K}{K}{\LScp}
    \quad,\qquad\mbox{or (ii)}\quad
    \odv{J}{\deri_J}{\Ccons{\ttt}}{\SLScpU}
    \quand
    \odv{\Ccons{\vltop}}{\deri_K}{K}{\LScp}
    $$
    for some $\fff\in\set{\vlzer,\vlbot}$ or some $\ttt\in\set{\vlone,\vltop}$ (i.e., there is again one of four possible cases). 
   \item\label{s:atom} If $\nnodv{J}{\deri}{a\vlpa K}{\LScp}$ then there is an m-context $\Chole$ such that
     $$
     \mbox{either (i)}\quad
    \odv{J}{\deri_J}{\Ccons{a}}{\SLScpU}
    \quand
    \odv{\Ccons{\vlbot}}{\deri_K}{K}{\LScp}
     \quad,\qquad\mbox{or (ii)}\quad
    \odv{J}{\deri_J}{\Ccons{\vlone}}{\SLScpU}
    \quand
    \odv{\Ccons{\lneg a}}{\deri_K}{K}{\LScp}
    $$
 \end{enumerate}
\end{restatable}

\begin{remark}
  Observe that this lemma holds for system $\LScp$. First, it is crucial that the rules $\ppD$, $\dpD$ and $\dzpD$ are present. They are not needed for the standard splitting lemma (see \cite{dissvonlutz} or \cite{CGS:foccos}), but our generalized splitting cannot be proved without them. Second, it is crucial that we stay in the core fragment. Even though, it is possible to prove a splitting lemma for the whole system, including the non-core (see, e.g., \cite{dissvonlutz,CGS:foccos}), such a more general statement would make it much harder to derive the flipping lemma in the next section.  
\end{remark}

\section{Flipping plus Decomposition gives Interpolation}
\label{sec:flipping}

\begin{restatable}[Core Flipping]{lemma}{lemCoreFlip}\label{lem:coreflipping}
  Let $H,F,J,K$ be formulas.\\
  If there is a derivation
  $\od{\odd{\odd{\odh{H}}{\derib}{
        J}{\SLScpU}}{\deri}{
      F\vlpa K}{\LScp}}
  $
  then there is a formula $I$, such that we have
    $\od{\odd{\odd{\odh{\lneg F\vlte H}}{\deric}{
        I}{\SLScpU}}{\derid}{
      K}{\LScp}}
  $.
\end{restatable}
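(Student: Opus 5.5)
We argue by induction on the structure of the formula $F$, using the generalized splitting lemma (\Cref{lem:splitting}) at each step to peel off the main connective of $F$. Throughout, we compose the given up-derivation $\derib$ with the up-derivation $\deri_J$ produced by splitting. The resulting $\SLScpU$-derivations live in an m-context $\Chole$, and we move material across that context using \Cref{lem:switch}.

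For the base cases, suppose first that $F=a$ is an atom or negated atom. Case (i) of \Cref{lem:splitting}.\ref{s:atom} gives $H\to J\to\Ccons{a}$ in $\SLScpU$ and $\Ccons{\vlbot}\to K$ in $\LScp$. From $\lneg a\vlte H$ we then obtain $\lneg a\vlte\Ccons{a}$ via $\derib$ and $\deri_J$. By \Cref{lem:switch} (the $\set{\eqU,\sU}$ part) this becomes $\Ccons{\lneg a\vlte a}$, then $\aiU$ yields $\Ccons{\vlbot}$, so we can take $I=\Ccons{\vlbot}$ and let $\derid=\deri_K$. In case (ii), we apply $\eqU$ to obtain $\lneg a\vlte\Ccons{\vlone}$ and then $\Ccons{\lneg a\vlte\vlone}\eqte\Ccons{\lneg a}$, so $I=\Ccons{\lneg a}$. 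The unit cases \ref{s:one}, \ref{s:top} and \ref{s:zero} go the same way: we combine $\lneg F$ with the unit $\ttt$ placed in $\Chole$ by $\deri_J$, using $\eqU$, $\topU$ or $\dzU$/$\dzpU$, and obtain $I=\Ccons{\fff}$.

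For the inductive step with $F=A\vlte B$, \Cref{lem:splitting}.\ref{s:tensor} gives derivations $J_A\to A\vlpa K_A$ and $J_B\to B\vlpa K_B$ in $\LScp$, each of which is trivially preceded by an empty up-derivation. The induction hypothesis then yields $\lneg A\vlte J_A\to I_A\to K_A$ and $\lneg B\vlte J_B\to I_B\to K_B$. Starting from $(\lneg A\vlpa\lneg B)\vlte H$, we apply $\derib$ and $\deri_J$ to get $(\lneg A\vlpa \lneg B)\vlte\Ccons{J_A\vlte J_B}$. Using \Cref{lem:switch} and $\sU$, this becomes $\Ccons{(\lneg A\vlte J_A)\vlpa(\lneg B\vlte J_B)}$, which is exactly the shape required by $\sU$ with $A\vlpa C$ on one side. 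We then use the up-parts of the two induction hypotheses to reach $I=\Ccons{I_A\vlpa I_B}$, and finish with the down-parts followed by $\deri_K$. The cases $\vlwi$ and $\vlpl$ are analogous. Here $\dU$ or $\dpU$ distributes $\lneg F$, which is a $\vlpl$ or $\vlwi$, over $J_A\vlva J_B$, and case (i) additionally uses $\wpU$-free reasoning: $\lneg A\vlte J_A$ is reduced to $\vlbot$ via $\iU$ after $J_A\to A$. Concretely, the inductive hypothesis with $K=\vlbot$ gives $I_A\to\vlbot$. For the modalities, $\pU$ and $\ppU$ play the role that $\sU$ played for the tensor.

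The main obstacle is that the induction hypothesis produces \emph{up-then-down} derivations for the subformulas, and these must be glued without interleaving down-rules before up-rules. This is why we apply all the up-parts inside $\Chole$ first and only then all the down-parts, including $\deri_K$. That ordering is legitimate because the components are independent: they sit in disjoint subformulas of a $\vlpa$ or of an m-context. The subtle cases are the $\vlwi$/$\vlpl$ and $\vlwn$ cases with $\vlva$ mismatched, where we must check that the required dual rule, for instance $\dpU$ or $\ppU$, exists in $\SLScpU$. This is precisely where the primed rules are needed.
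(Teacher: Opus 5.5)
Your overall route matches the paper's: induction on $F$, \Cref{lem:splitting}, moving $\lneg F$ into the m-context with \Cref{lem:switch}, then the dual up-rule. Your atom, unit and $\vlte$ cases agree with it. The induction is nonetheless incomplete in two places.

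\textbf{The $\vlpa$ case is missing.} You never treat $F=A\vlpa B$, nor the trivial $F=\vlbot$, where $\lneg F=\vlone$ and $I=J$. The $\vlpa$ case cannot be done by ``peeling off the main connective with splitting'', because \Cref{lem:splitting} has no $\vlpa$ clause. The paper applies the induction hypothesis twice:
\begin{itemize}
\item first to $H\to J\to A\vlpa(B\vlpa K)$, with $A$ as $F$ and $B\vlpa K$ as $K$, which yields $\lneg A\vlte H\to J'$ in $\SLScpU$ followed by $J'\to B\vlpa K$ in $\LScp$;
\item then again with $B$ as $F$, $\lneg A\vlte H$ as $H$, and the up-part just obtained as $\derib$, which gives $\lneg B\vlte\lneg A\vlte H\to I\to K$.
\end{itemize}
This is where the up-prefix in the statement is actually exercised. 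You only ever invoke the hypothesis with an empty prefix.

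\textbf{Case (i) for $\vlwi$, $\vlpl$, $\vloc$, $\vlwn$ does not go through as written.} For the modalities you do not address case (i) at all.
\begin{itemize}
\item ``Running $J_A\to A$ and then a cut'' puts an $\LScp$-derivation inside $\deric$, breaking the up-then-down shape.
\item The induction hypothesis with $K=\vlbot$ only gives $\lneg A\vlte J_A\to I_A$ in $\SLScpU$ followed by $I_A\to\vlbot$ in $\LScp$.
\item If you keep $I_A,I_B$ in the interpolant, you still need $\Ccons{I_A\vlpl I_B}\to\Ccons{\vlbot}$ (resp.\ $\Ccons{\vlwn I_A}\to\Ccons{\vlbot}$) in $\LScp$. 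This is not available in general, since no core down-rule removes a $\vlpl$ or a $\vlwn$; $\cpD$ is non-core.
\end{itemize}
The missing observation is the paper's. By inspection of $\LScp$, every rule other than $\eqD$ puts a connective or unit other than $\vlpa,\vlbot$ into its conclusion. So an $\LScp$-derivation with conclusion $\vlbot$ is trivial, i.e.\ $I_A=I_B=\vlbot$. Hence $\lneg A\vlte J_A\to\vlbot$ lies entirely in $\SLScpU$, and you close in the up-part with $\dU$ followed by $\withU$ (resp.\ $\pU$ followed by $\eU$), taking $I=\Ccons{\vlbot}$ and $\derid=\deri_K$.

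Alternatively, dualize $J_A\to A$ to $\lneg A\to\lneg J_A$ in $\SLScpU$ via \Cref{prop:DI}.\ref{prop:flip2}, then use that the general cut is derivable in $\SLScU$. Either way some such argument is needed, and $\eU$ must appear among your up-rules.
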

The proof is a straightforward induction on $F$, repeatedly applying the generalized splitting to the derivation $\deri$. The details are in the appendix.

This lemma is enough to obtain interpolation for $\LScp$.
But in order to extend it to all of $\LSp$, we need to deal with the non-core rules (the last two lines in \Cref{fig:SLS}). This is done via a \emph{decomposition} result, that allows us to separate the core from the non-core in a derivation.

\begin{restatable}[Decomposition]{lemma}{lemDecomposition}\label{lem:decomposition}
  Let $n\ge1$ and $P,Q_1,\ldots, Q_n$ be formulas. Then any derivation
  $$
  \odV{P}{\deri}{Q_1\vlpa Q_2,\vlpa\cdots\vlpa Q_n}{\LSp}
  \mbox{\quad can be decomposed into \quad}
  \odv{P}{\deri'}{
    \odv{Q'_1}{\deri''_1}{Q_1}{\LSnc}
    \vlpa
    \cdots
    \vlpa
    \odv{Q'_n}{\deri''_n}{Q_n}{\LSnc}
  }{\LScp}
  $$
  for some formulas $Q'_1,\ldots, Q'_n$.
\end{restatable}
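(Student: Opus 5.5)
We prove this by permuting all non-core rules downward through the derivation until they sit at the very bottom, where they act only on the $\vlpa$-components $Q_i$ of the conclusion. We proceed by induction on the number of non-core rule instances in $\deri$, always working on the \emph{lowest} instance $\rr\in\LSnc$ that is still followed by core rules. So suppose $\deri$ has the shape of a derivation from $P$ to $\Ccons{R}$, then an instance of $\rr$ rewriting $\Ccons{R}$ to $\Ccons{R'}$, then a derivation $\deri_0$ in $\LScp$ from $\Ccons{R'}$ down to $Q_1\vlpa\cdots\vlpa Q_n$. The key step is to show that $\rr$ can be pushed below $\deri_0$.

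For this we do a case analysis on the first rule $\rr_0$ of $\deri_0$ and on how the redex of $\rr_0$ overlaps the contractum $R'$ of $\rr$. There are three kinds of overlap.
\begin{itemize}
\item If the two redexes are independent (disjoint), the rules simply commute.
\item If $R'$ lies entirely inside a schematic variable of $\rr_0$, as for example $A$ in $\sD$, $\pD$ or $\dD$, then we apply $\rr_0$ first and $\rr$ afterwards. Since all core rules are linear in their schematic variables, no duplication or erasure of $\rr$ occurs.
\item The critical cases are those where the redex of $\rr_0$ genuinely uses the outermost connective of $R'$. These cases are exactly where $\rr$ creates a $\vlwn$ or a $\vlpl$, or where $\wpD$ creates an arbitrary formula from $\vlzer$.
\end{itemize}

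In the critical cases we exploit the primed rules, which is precisely why the lemma is stated for $\LScp$. Consider for instance $\tD$ producing $\vlwn A$, followed by $\pD$ acting on $\vloc(B\vlpa\vlwn A)$. We replace this by $\pD$ with conclusion $\vloc B\vlpa\vlwn\vlwn A$ followed by $\gD$ and $\tD$; more generally, we push $\tD$ below into the $\vlwn$-position. A $\cD$ above $\ppD$ becomes $\ppD$ twice followed by $\cD$ and par-rearrangement. Similarly, $\wlD$, $\wrD$ and $\cpD$ against $\dD$ or $\dpD$ are repaired using $\dpD$. Finally, $\wpD$ from $\vlzer$ against a core rule is handled with $\dzpD$, or by replacing the core rule by weakenings on its components. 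In each case, the result is some core derivation followed by non-core rules applied inside the conclusion of $\rr_0$. We then iterate downward through $\deri_0$.

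Termination is the main obstacle. This is because rules like $\cD$ and $\cpD$ get duplicated when permuted below $\ppD$ or $\dpD$, and $\gD$ and $\tD$ get introduced. To control this, we use a measure in the style of Stra\ss burger's decomposition theorems for $\SLS$ (cf.~\cite{str:02,dissvonlutz}). The measure takes, for each non-core instance, the number of core rules below it, and ranks the instances by a multiset ordering weighted by rule kind, with weakenings and contractions treated in separate phases: first move all contractions down, then all weakenings, then $\tD$ and $\gD$. I would first try the naive multiset ordering on the pairs of rule kind and distance to the bottom. If duplication breaks it, I would fall back on the standard trick of first reducing $\cD$ and $\cpD$ to forms that commute more easily.

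Once all non-core rules sit below $\deri'\in\LScp$, they act on subformulas of $Q'_1\vlpa\cdots\vlpa Q'_n$. No non-core rule has a $\vlpa$ as its main connective: their conclusions are $\vlwn A$, $A\vlpl B$, or an arbitrary $A$ introduced from $\vlzer$. Hence each such rule acts within a single component. For a $\wpD$ with a $\vlpa$-conclusion, we split it via $\dzpD$ into $\wpD$ on each part. The remaining derivation therefore factors as $\deri''_1\vlpa\cdots\vlpa\deri''_n$, using $\eqD$ where associativity needs rearranging.
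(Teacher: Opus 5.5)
Your strategy is the paper's own: permute non-core instances below core ones, and let $\dzpD$, $\dpD$ and $\ppD$ absorb the critical pairs. The gap is an overlap missing from your case analysis, and it cannot be repaired. A non-core rule can produce a $\vlbot$, either by $\wpD$ with conclusion $\vlbot$ or by $\cpD$ applied to $\vlbot\vlpl\vlbot$. A subsequent $\eqD$ can then delete that $\vlbot$ via $A\vlpa\vlbot\eqpa A$. Here the core step erases exactly what the non-core step created. Neither $\dzpD$ nor ``weakenings on the components'' helps, and no core rule can consume a $\vlzer$.

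In fact, with the $\deri''_i$ literally in $\LSnc$ (which does not contain $\eqD$), the lemma is false. Take $n=1$ and $a\vlpa\vlzer\stto{\wpD}a\vlpa\vlbot\stto{\eqD}a$.
\begin{itemize}
\item The only $\LScp$ rules applicable to $a\vlpa\vlzer$ and to every formula reachable from it are $\eqD$ and $\dzpD$. So every candidate $Q'_1$ is a $\vlpa$-combination of $a$, at least one $\vlzer$, and some $\vlbot$s, hence a $\vlpa$-formula.
\item Evaluate formulas in $\{0<1\}$: all atoms and $\vlzer$ as $0$, every $\vlpa$- and $\vlwn$-formula as $1$, $\vlpl$ as $\max$, $\vlte,\vlwi$ as $\min$, and $\vloc$ as identity. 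All connectives are monotone, and every $\LSnc$ rule has premise value at most its conclusion value.
\item Hence no $\LSnc$-derivation leads from $Q'_1$ (value $1$) to $a$ (value $0$).
\end{itemize}
The same argument refutes the case $n=2$ that is actually used in \Cref{thm:flipping}, e.g.\ $a\vlpa b\vlpa\vlzer$ to $a\vlpa b$ with $Q_1=a$, $Q_2=b$. It also refutes the variant with $a\vlpa(\vlbot\vlpl\vlbot)$ and $\cpD$.

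The paper's proof has the same hole. It dismisses $\eqD$ with ``similarly'', on the grounds that the offending overlap always involves a $\vlpa$-subformula, which fails for the unit equation. The statement must be weakened, e.g.\ by allowing $\eqD$ in the $\deri''_i$. This costs nothing in \Cref{thm:flipping}, since $\eqD$ dualises to $\eqU\in\SLSpU$.

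There are also two smaller points.
\begin{itemize}
\item $\cpD$ also has an arbitrary conclusion, so its overlaps with $\sD$, $\pD$, $\ppD$ and $\eqD$ belong among your critical cases. Your final claim that no non-core rule has a $\vlpa$-conclusion also fails, e.g.\ for $(Q_1\vlpa Q_2)\vlpl(Q_1\vlpa Q_2)\stto{\cpD}Q_1\vlpa Q_2$. Both are repaired with $\dpD$, just as $\dzpD$ repairs $\wpD$.
\item Termination is not the obstacle you fear. You always push the lowest non-core instance, so everything below it is core and is never duplicated; contraction only duplicates core steps that end up above it. After each permutation, every descendant of the instance therefore has one fewer core step below it, and a plain multiset argument suffices. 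No phases are needed.
\end{itemize}
Finally, your $\tD$/$\pD$ example is actually a passive case. The genuine $\vlwn$-overlaps are $\wD$, $\tD$, $\gD$ and $\cD$ against $\ppD$.
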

This says that any derivation in $\LScp$ can be decomposed into a core-part and a non-core part, such that the $\vlpa$-structure of the conclusion is preserved.
\Cref{lem:decomposition} can be proved via straightforward rule permutations, where the two rules $\dpD$ and~$\dzpD$ play a crucial role. More details can be found in the appendix.

\begin{remark}
  The decomposition in \Cref{lem:decomposition} is much stronger than existing decompositions in the literature. For example, the work in~\cite{str:02} or~\cite{brunnler:tiu:01} can only permute atomic versions of contraction and weakening, and the work in~\cite{str:MELL,SIS-IV} cannot deal with the additives.
\end{remark}

\begin{restatable}[Flipping]{theorem}{thmFlipping}\label{thm:flipping}
  Let $H,F,J,K$ be formulas.\\
  If there is a derivation
  $\od{\odd{\odd{\odh{H}}{\derib}{
        J}{\SLSpU}}{\deri}{
      F\vlpa K}{\SLSpD}}
  $
  then there is a formula $I$ and a derivation
    $\od{\odd{\odd{\odh{\lneg F\vlte H}}{\deric}{
        I}{\SLSpU}}{\derid}{
      K}{\SLSpD}}
  $.
\end{restatable}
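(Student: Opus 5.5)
The plan is to reduce the statement to Core Flipping (\Cref{lem:coreflipping}), using Decomposition (\Cref{lem:decomposition}) to move the non-core rules out of the way. The non-core rules sit on both sides: in the down-part $\deri$ and, dually, in the up-part $\derib$.

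First, I treat the down-derivation $\deri\colon J\proves[\LSp] F\vlpa K$. Note that $\SLSpD=\LSp$. Applying \Cref{lem:decomposition} with $n=2$, $Q_1=F$ and $Q_2=K$ gives
\[
J\proves[\LScp] F'\vlpa K', \qquad F'\proves[\LSnc] F, \qquad K'\proves[\LSnc] K
\]
for some formulas $F',K'$.

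Second, I treat the up-derivation $\derib\colon H\proves[\SLSpU] J$. By the duality of \Cref{prop:DI}.\ref{prop:flip2}, applied to up/down, this is the same as a down-derivation $\lneg J\proves[\LSp]\lneg H$. Decomposing that one with $n=1$ and dualising back yields
\[
H\proves[\SLSncU] H', \qquad H'\proves[\SLScpU] J
\]
for some $H'$. So $\derib$ factors as a non-core up-part followed by a core up-part.

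Third, I apply \Cref{lem:coreflipping} to the core derivation $H'\proves[\SLScpU] J\proves[\LScp] F'\vlpa K'$. This gives a formula $I$ with
\[
\lneg{F'}\vlte H'\proves[\SLScpU] I\proves[\LScp] K'.
\]
Now I reassemble the pieces:
\begin{itemize}
\item Dualise $F'\proves[\LSnc] F$ by \Cref{prop:DI}.\ref{prop:flip2} to get $\lneg F\proves[\SLSncU]\lneg{F'}$.
\item Run this in parallel with $H\proves[\SLSncU] H'$ inside a tensor. Since derivations compose under connectives, this gives $\lneg F\vlte H\proves[\SLSpU]\lneg{F'}\vlte H'$, using only up-rules.
\item Follow with $\deric$ to reach $I$, then with $\derid$ to reach $K'$, and finally append $K'\proves[\LSnc] K$.
\end{itemize}
Everything above $I$ uses up-rules of $\SLSp$, and everything below $I$ uses down-rules of $\SLSp$, which is exactly the required shape.

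I expect the second step to be the main obstacle: transporting Decomposition to the up-fragment. \Cref{prop:DI}.\ref{prop:flip2} is stated for $A\proves[\sysD]B$ with whole systems $\sys$. I need it in the refined form where down-derivations in $\LScp$ (resp.\ $\LSnc$) correspond exactly to up-derivations in $\SLScpU$ (resp.\ $\SLSncU$). That holds because every rule in \Cref{fig:SLS} is the De Morgan dual of its partner, so negating a derivation and turning it upside down maps $\rD$ to $\rU$ rule by rule; the list in \Cref{prop:DI}.\ref{prop:flip2} already includes $\SLSnc$. I also need that the $\eqD$ and $\eqU$ steps dualise correctly, since $\lneg{}$ sends $\eqpa$ to $\eqte$. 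With these checks, the argument is just bookkeeping around \Cref{lem:coreflipping}.
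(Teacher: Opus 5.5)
Your proposal is correct and is essentially the paper's proof. The paper also applies \Cref{lem:decomposition} to $\deri$ (with $Q_1=F$, $Q_2=K$) and its dual to $\derib$. It then applies \Cref{lem:coreflipping} to the core derivation from $H'$ to $F'\vlpa K'$, and uses \Cref{prop:DI}.\ref{prop:flip2} (whose listed cases $\SLScp$ and $\SLSnc$ are exactly the duality checks you flag) to turn $\deri''_F$ into an up-derivation from $\lneg F$ to $\lneg{F'}$, placed in parallel with the non-core up-part from $H$ to $H'$.
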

\begin{proof}
  We can perform the following transformation:
  $$
  \od{\odd{\odd{\odh{H}}{\derib}{
        J}{\SLSpU}}{\deri}{
      F\vlpa K}{\SLSpD}}
  \qsto{1}
  \od{\odd{\odd{\odh{\hskip1.3em\odv{H}{\derib''}{H'}{\SLSncU}}}{\derib'}{
        J}{\SLScpU}}{\deri'}{
      \odv{F'}{\deri_F''}{F}{\SLSncD}
      \vlpa
      \odv{K'}{\deri_K''}{K}{\SLSncD}}{\SLScpD}}
  \qsto{2}
  \od{\odd{\odd{\odh{
          \odv{\lneg F}{\lneg{\deri_F''}}{\lneg {F'}}{\SLSncU}
          \vlte
          \odv{H}{\derib''}{H'}{\SLSncU}}}{\deric'}{
        I}{\SLScpU}}{\derid'}{\hskip1.2em
      \odv{K'}{\deri_K''}{K}{\SLSncD}}{\SLScpD}}
  $$
  where Step~1 is the application of \Cref{lem:decomposition} to $\deri$ and its dual to $\derib$. Then, Step~2 applies \Cref{lem:coreflipping} to the derivation from $H'$ to $F'\vlpa K'$ and \Cref{prop:DI}.\ref{prop:flip2} to $\deri_F''$.
\end{proof}

\begin{remark}
  Maybe at this point we should recall \Cref{rem:rulesSKS}, because all our design choices for the inference rules are motivated by our proof of \Cref{thm:flipping}. For example, the curious reader might have wondered why there are the three rules $\wpD$, $\wrD$ and $\wlD$. It is perfectly sensible to expect that either only $\wpD$, or only $\wrD$ and $\wlD$ should have been enough. But the need for all three rules comes from the fact that we want a separation of the proof system into core and non-core such that both, (i) the decomposition of \Cref{lem:decomposition} preserving the $\vlpa$-structure of the conclusion, and (ii) the generalized splitting in~\Cref{lem:splitting} allowing for the proof of the core flipping in \Cref{lem:coreflipping}. For the same reason, we have the two rules $\topD$ and $\dzD$ instead of just a rule $\vlinf{\topD}{}{\vltop\vlpa\vlzer}{\vlone}$.
%
\end{remark}


\begin{corollary}[Interpolation]\label{cor:interpolSLS}
  Every derivation \upsmash{$\odV{A}{\deri}{B}{\SLSp}$} can be decomposed into
  $\smash{
    \od{\odd{\odd{\odh{A}}{\deric}{
        I}{\SLSpU}}{\derid}{
      B}{\SLSpD}}
  }$.
\end{corollary}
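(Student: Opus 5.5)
The plan is to reduce the statement to the Flipping Theorem (\Cref{thm:flipping}), using the currying property \Cref{prop:DI}.\ref{prop:flip1} and cut elimination. We start from a derivation $\deri$ from $A$ to $B$ in $\SLSp$.

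First, \Cref{prop:DI}.\ref{prop:flip1} turns $\deri$ into a proof in $\SLSp$ with premise $\vlone$ and conclusion $\lneg A\vlpa B$. The up-rules in this proof are then removed. By \Cref{prop:DI}.\ref{prop:cutup} (with $\sys=\LSp$), the proof yields a proof of $\lneg A\vlpa B$ in $\LSp\cup\set{\iU}$. \Cref{thm:cutelim} then gives a proof in $\LSp=\SLSpD$.

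This cut-free proof already has the shape required by \Cref{thm:flipping}. Take $H=J=\vlone$, with $\derib$ the trivial derivation consisting of the formula $\vlone$ (an empty $\SLSpU$-derivation). Take $F=\lneg A$ and $K=B$, with the down-fragment being the cut-free proof of $\lneg A\vlpa B$. \Cref{thm:flipping} then provides a formula $I$, a derivation $\deric$ in $\SLSpU$ from $\llneg A\vlte\vlone$ to $I$, and a derivation $\derid$ in $\SLSpD$ from $I$ to $B$.

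It remains to replace the premise $\llneg A\vlte\vlone$ by $A$. Syntactically $\llneg A=A$. Moreover $A\eqte A\vlte\vlone$, so a single $\eqU$ step derives $A\vlte\vlone$ from $A$. Since $\eqU$ belongs to $\SLSpU$, prefixing this step to $\deric$ gives the required up-derivation from $A$ to $I$, followed by $\derid$ from $I$ down to $B$.

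No step here is a real obstacle, since the substantive work is in \Cref{thm:flipping} and cut elimination. The only care needed is bookkeeping. First, the trivial derivation $\vlone$ must count as a legitimate $\SLSpU$-derivation; it does, because a bare formula is a derivation using no rules. Second, the unit $\vlone$ must be absorbed by an up-rule and not a down-rule. This works because $\eqte$ is realised by $\eqU$, which lies in the up-fragment.
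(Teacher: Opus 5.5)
Your proposal is correct and matches the paper's proof: \Cref{prop:DI}.\ref{prop:flip1}, then cut elimination via \Cref{prop:DI}.\ref{prop:cutup} and \Cref{thm:cutelim}, then \Cref{thm:flipping} with $H=J=\vlone$, $F=\lneg A$, $K=B$. Your explicit $\eqU$ step from $A$ to $A\vlte\vlone$ spells out a unit-bookkeeping detail that the paper leaves implicit.
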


\begin{proof}
  We can perform the following transformations:
  \begin{equation}
    \label{eq:int-proof}
    \hfill
  \odv{A}{\deri}{B}{\SLSp}
  \qsto{1}
  \odv{\vlone}{\deri'}{\lneg A\vlpa B}{\SLSp}
  \qsto{2}
  \odv{\vlone}{\deri''}{\lneg A\vlpa B}{\LSp}
  \qsto{3}
  \upsmash{
    \od{\odd{\odd{\odh{A}}{\deric}{
        I}{\SLSpU}}{\derid}{
        B}{\SLSpD}}
    }
  \hfill
  \end{equation}
  where Step~1 is \Cref{prop:DI}.\ref{prop:flip1},
  Step~2 is cut elimination (\Cref{thm:cutelim} and \Cref{prop:DI}.\ref{prop:cutup}), and
  Step~3 is \Cref{thm:flipping} (with $H=J=\vlone$ and $F=\lneg A$ and $K=B$).
\end{proof}

\begin{remark}
  Even though we used cut elimination to prove interpolation, it should be observed that the decomposition in up- and down-fragment in \Cref{cor:interpolSLS} entails cut-elimination in two different ways. The first is obvious: if $A=\vlone$, then necessarily we also have $I=\vlone$ because no rule in the up-fragment has $\lone$ as premise.\footnote{Except $\wpU$, which we can see as an instance of $\topD$ if the premise is $\vlone$.} Hence, we also have a cut-free derivation of $B$. The second way is more interesting: we can flip $\deric$ via \Cref{prop:DI}.\ref{prop:flip2} and construct the following cut-free derivation of $A\vlli B$:
  \begin{equation}
    \label{eq:interflip}
    \hfil
    \odn{\vlone}{\iD}{
      \odv{\lneg I}{\lneg{\deric}}{\lneg A}{\LSp}
      \vlpa
      \odv{I}{\derid}{B}{\LSp}}{}
  \end{equation}
  This derivation is closely related to the third derivation in~\eqref{eq:int-proof}. More precisely, by inspecting the proof of the generalized splitting lemma in the previous section, one can observe that Step~3 in the proof of \Cref{cor:interpolSLS} above, and the ``down-flipping'' in \eqref{eq:interflip} above are inverse operations. However, to make this formal, we would need to find a way to deal with certain rule permutations. For this, we would need to develop a theory of atomic flows for $\SLSp$, similar to the recent development in~\cite{ASZ:BVcat} for the logic~$\BV$.
  In any case, our observation here is in accordance with the observation of~\cite{saurin:25} that in the sequent calculus, interpolation is a particular case of an inverse cut elimination.
\end{remark}
\section{Classical Logic}
\label{sec:classical}
The \defn{formulas} of classical logic are generated by the grammar
$$
\Fms\grammareq \Atms\mid\cneg\Atms\mid\vltt\mid\vlff\mid\Fms\vlan\Fms\mid\Fms\vlor\Fms
$$
where, as before, $\Atms=\set{a,b,c,\ldots}$ is a countable set of \defn{atoms}. The \defn{units} $\vltt,\vlff$ are called \defn{true} and \defn{false}, respectively, and the binary connectives $\vlan,\vlor$ are called \defn{and} and \defn{or}, respectively. \defn{Classical negation} is defined for all formulas via De~Morgan duality:
$$
\ccneg a=a
\qquad
\wcneg{A\vlan B}=\cneg A\vlor \cneg B
\qquad
\wcneg{A\vlor B}=\cneg A\vlan \cneg B
\qquad
\cneg\vltt=\vlff
\qquad
\cneg\vlff=\vltt
$$
It follows that $\ccneg A=A$ for all $A$, and we can write $A\vlim B$ for $\cneg A\vlor B$. We define the two equivalences $\eqor$ and $\eqan$ as follows:
$$
\begin{array}{r@{\;}c@{\;}l@{\hskip3em}r@{\;}c@{\;}l@{\hskip3em}r@{\;}c@{\;}l}
  A\vlor(B\vlor C) & \eqor &(A\vlor B)\vlor C
  &
  A\vlor  B&\eqor&B\vlor A
  &
  A\vlor\vlbot &\eqor&A
  \\
  A\vlan(B\vlan C) & \eqan&(A\vlan B)\vlan C
  &
  A\vlan  B&\eqan&B\vlan A
  &
  A\vlan\vlone &\eqan&A
\end{array}
$$
The inference rules are shown in \Cref{fig:SKS}, and we define the following proof systems:
\begin{equation}
  \label{eq:systemsKS}
  \hfil
  \begin{array}{@{\hskip-1em}l@{\;=\;}l@{\qquad\qquad}l@{\;=\;}l}
    \KSc&\set{\aiD,\sD,\andD,\eqD}
    &
    \SKSc&\KSc\cup\set{\aiU,\sU,\withU,\eqU}
    \\
    \KSnc&\set{\wD,\cD}
    &
    \SKSnc&\KSnc\cup\set{\cU,\wU}
    \\
    \KS&\KSc\cup\KSnc 
    &
    \SKS&\SKSc\cup\SKSnc 
  \end{array}
\end{equation}

\begin{figure}[!t]
  \def\mskipa{-1.5ex}
  \newcommand{\mcol}[2][c@{\quad}]{\multicolumn{2}{#1}{#2}}
  \vskip-\baselineskip
  $$
  \begin{array}{c@{\qquad}c@{\quad}|@{\quad}c@{\qquad}c} 
    \vlinf{\andD}{}{\vltt\vlan\vltt}{\vltt}
    &
    \vlinf{\eqD}{\proviso{if $A\eqor B$}}{\;B\;}{A}
    &
    \vlinf{\eqU}{\proviso{if $A\eqan B$}}{\;B\;}{A}
    &
    \vlinf{\andU}{}{\vlff}{\vlff\vlor\vlff}
    \\&&\\[\mskipa]
    \vlinf{\aiD}{}{\cneg a\vlor a}{\vltt}
    &
    \vlinf{\sD}{}{(A\vlan B)\vlor(C\vlor D)}{(A\vlor C)\vlan(B\vlor D)}       
    &
    \vlinf{\sU}{}{(A\vlan B)\vlor(C\vlan D)}{(A\vlor C)\vlan(B\vlan D)}
    &
    \vlinf{\aiU}{}{\vlff}{\cneg a\vlan a}
    \\&&\\[\mskipa]
    \vlinf{\wD}{}{A}{\vlff}
    &
    \vlinf{\cD}{}{A}{A\vlor A}
    &
    \vlinf{\cU}{}{A\vlan A}{A}
    &
    \vlinf{\wU}{}{\vltt}{A}
  \end{array}
  $$
  \caption{System $\SKS$, a deep inference system for classical logic. Left: Down-fragment, Right: Up-fragment. First two lines: core fragment $\SKSc$. Last line: non-core fragment $\SKSnc$.  Then, systems $\KS,\KSc,\KSnc$ are the corresponding systems without the up-fragment.}
  \label{fig:SKS}
\end{figure}
\begin{remark}
  As in the case of $\LS$ and $\LS'$ in \Cref{sec:prelim}, we have made some design choices with respect to the rules that make our variant of $\KS$ slightly differ from the ones in the literature \cite{brunnler:tiu:01,AGR:cls2017}. The main difference being the different treatment of the equational theory, using $\sD$ and $\sU$ instead of the self-dual \emph{switch} rule, and not having atomic weakening and contraction together with the so-called \emph{medial} rule. The motivation for this is two-fold. First, as in the case of linear logic we want to prove the Flipping~\Cref{thm:flipping} also for $\SKS$, and second, we do not want to do all the work again, i.e., the system is designed such that the result for $\SKS$ follows immediately from the result for $\SLSp$.
\end{remark}

In fact now all results that we stated for $\SLSp$ and $\LSp$ on the previous three sections also hold for $\SKS$ and $\KS$.\footnote{For the corresponding results for \Cref{sec:prelim} see~\cite{brunnler:tiu:01,brunnler:06:locality,brunnler:phd}.} In particular, we have the following. 

\begin{restatable}{theorem}{thmSoundKS}\label{thm:soundnessKS}
  Every proof system $\sysX$ with $\KS\subseteq\sysX\subseteq\SKS$ is sound and complete for classical propositional logic.
\end{restatable}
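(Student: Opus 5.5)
The proof splits into soundness and completeness, each reduced to what is already known for the linear systems.

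For soundness it suffices to treat the largest system $\SKS$, since every $\sysX\subseteq\SKS$ derives no more. I would check that each rule in \Cref{fig:SKS} is a valid classical implication, i.e., that for every instance with premise $A$ and conclusion $B$ the formula $A\vlim B$ is a tautology. This is immediate rule by rule:
\begin{itemize}
\item $\sD$ and $\sU$ are instances of distributivity followed by weakening.
\item $\aiD$, $\aiU$, $\andD$ and $\andU$ are trivial truth-table facts.
\item $\wD$, $\wU$, $\cD$ and $\cU$ are the usual weakening and idempotence laws.
\item $\eqD$ and $\eqU$ preserve truth value, since $\eqor$ and $\eqan$ are associativity, commutativity and unit laws.
\end{itemize}
Soundness then lifts from rules to derivations by induction on the structure of derivations in \Cref{def:derivation}. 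The key step is that a valid implication $A\vlim B$ yields a valid implication $\Ccons{A}\vlim\Ccons{B}$ for any context $\Chole$. This holds because $\vlan$ and $\vlor$ are monotone, and negation appears only on atoms in classical formulas, so every context is positive. Composition in the open-deduction sense is handled by transitivity of implication.

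For completeness it suffices to treat the smallest system $\KS$. My plan is to simulate a complete cut-free sequent calculus for classical propositional logic, namely one-sided $\mathsf{GS1}$ with rules for identity, $\vltt$, $\vlan$, $\vlor$, weakening and contraction. I would translate a sequent $\Gamma$ as the disjunction of its formulas, with the empty sequent read as $\vlff$, and show by induction on sequent proofs that $\proves[\KS]$ holds for the translation.
\begin{itemize}
\item Axioms $\cneg a,a$ are handled by $\aiD$ under the $\eqD$ unit laws.
\item The $\vltt$-axiom is the empty derivation, since the premise is $\vltt$.
\item The $\vlor$-rule needs only $\eqD$.
\item Weakening is done by $\wD$ applied to a $\vlff$ introduced via $\eqD$.
\item Contraction is a direct instance of $\cD$.
\end{itemize}
The $\vlan$-rule is the one case that needs care. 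Given proofs of $\Gamma\vlor A$ and $\Delta\vlor B$, each from premise $\vltt$, I would start from $\vltt$, apply $\andD$ to get $\vltt\vlan\vltt$, and place the two derivations inside the two conjuncts using the context clause of \Cref{def:derivation}. This yields $(A\vlor\Gamma)\vlan(B\vlor\Delta)$, and one application of $\sD$ gives $(A\vlan B)\vlor(\Gamma\vlor\Delta)$, as required.

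An alternative route to completeness is to import \Cref{thm:soundSLS} by reading $\vlan,\vlor$ as $\vlte,\vlpa$, but the direct sequent simulation is self-contained. I expect no real obstacle. The only delicate points are the treatment of units, namely that $\vlff$ is the $\vlor$-unit modulo $\eqor$, and making sure that the $\vlan$-case uses only $\andD$ and $\sD$ and not the up-rule $\eqU$. Once soundness holds for $\SKS$ and completeness for $\KS$, every $\sysX$ with $\KS\subseteq\sysX\subseteq\SKS$ inherits both.
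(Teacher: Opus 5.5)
Your proposal is correct and follows essentially the paper's proof. Soundness comes from every rule of $\SKS$ being a valid implication; the appendix runs the same induction over derivations, but translates into $\GSone$ plus cut rather than arguing by truth tables. Completeness comes from simulating the cut-free calculus $\GSone$ in $\KS$, with $\andD$ followed by $\sD$ in the conjunction case; your context-splitting $\vlan$-rule merely saves the final $\cD$ that the paper needs for its context-sharing version. Your side remark about importing completeness from linear logic via $\vlan\mapsto\vlte$, $\vlor\mapsto\vlpa$ would not work as stated: for instance $a\vlor\cneg a\vlor b$ becomes an unprovable $\MLL$ formula, and weakening and contraction on arbitrary formulas have no counterpart in $\LS$. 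You never rely on it, though.
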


\begin{proof}
  As for linear logic, the simplest way of proving this is to show that (i) every rule in $\SKS$ constitutes a correct implication, and (ii) every rule in any sound and complete cut-free sequent calculus for classical logic is derivable in $\KS$. For details, see~\cite{brunnler:06:locality,brunnler:phd} or the appendix.\footnote{Alternatively, one can also show equivalence to the variants of $\KS$ and $\SKS$ in~\cite{brunnler:tiu:01}.}
\end{proof}

\begin{corollary}[Cut Elimination]\label{thm:cutelimSKS}
  If\/ $\proves[\SKS] A$ then also\/ $\proves[\KS] A$.
\end{corollary}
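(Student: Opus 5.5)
The statement is a corollary of \Cref{thm:soundnessKS}, in the same way that \Cref{thm:cutelim} was obtained from \Cref{thm:soundSLS} and \Cref{prop:DI}.\ref{prop:cutup}. The argument is semantic.

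First, suppose $\proves[\SKS] A$, i.e.\ there is a derivation $\odr{\deri}{A}{\SKS}$ with premise $\vltt$. By the soundness direction of \Cref{thm:soundnessKS} (taking $\sysX=\SKS$), every rule of $\SKS$ is a valid classical implication. Each rule is applied inside an arbitrary context, and every context is monotone, since negation occurs only on atoms. Hence validity propagates through the derivation, and $A$ is a classical tautology.

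Second, we apply the completeness direction of \Cref{thm:soundnessKS} with $\sysX=\KS$. This gives a derivation $\odr{\deri'}{A}{\KS}$, so $\proves[\KS] A$. This step relies on the completeness proof sketched for \Cref{thm:soundnessKS}. There, a cut-free sequent calculus for classical logic, e.g.\ $\mathsf{GS1}$, is simulated in $\KS$. The sequent axiom becomes $\aiD$ (generalized to arbitrary $A$ via the classical analogue of \Cref{prop:DI}.\ref{prop:ai}). The $\vlan$-rule becomes $\sD$ together with $\andD$. Weakening and contraction become $\wD$ and $\cD$, and the $\vlor$-rule is just $\eqD$. Since a cut-free sequent proof exists for every tautology, we obtain a proof in $\KS$ with premise $\vltt$.

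The only delicate point is ensuring that the completeness direction of \Cref{thm:soundnessKS} really produces derivations with premise $\vltt$ using only down-rules. In particular, the generalized identity $\vltt\to\cneg B\vlor B$ must be derivable in $\KSc$ without $\eqU$. I would handle this by induction on $B$. The atomic case is $\aiD$, and the unit cases use $\wD$ or $\eqD$. For $B=C\vlan D$, apply $\andD$ to get $\vltt\vlan\vltt$, run the induction hypothesis in each conjunct, and then apply $\sD$ together with $\eqD$ to reach $(\cneg C\vlor\cneg D)\vlor(C\vlan D)$.

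An alternative syntactic route would eliminate the up-rules directly. By the classical analogue of \Cref{prop:DI}.\ref{prop:updown}, every up-rule reduces to the cut $\iU$ plus down-rules, and one would then need cut elimination for $\KS\cup\set{\iU}$. This is substantially harder, so I would rely on the semantic argument above.
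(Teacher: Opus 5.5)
Your argument is correct and is essentially the paper's: the corollary is read off from \Cref{thm:soundnessKS}. The only difference is presentational. The paper phrases the detour syntactically: translate into $\mathsf{GS1}$ with cut, eliminate the cut there, and translate back via \Cref{lem:GS1toKS}. You instead pass through semantic validity and the cut-free completeness of the sequent calculus.

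There is one small slip in your simulation sketch, which does not matter because you use completeness as a black box. With the paper's $\mathsf{GS1}$, the rules $\vlor_L,\vlor_R$ also need $\mathsf{w}\mathord{\downarrow}$, and the context-sharing $\vlan$-rule needs $\cD$ after $\sD$.
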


This follows immediately from \Cref{thm:soundnessKS}, as every $\SKS$-derivation can be translated into a sequent proof with cut. We can eliminate the cut, and then translate the cut-free sequent calculus into $\KS$. Alternatively, it is also possible to prove cut elimination directly in deep inference. For this, we refer the reader to~\cite{brunnler:06:locality,brunnler:phd}, as it would go beyond the scope of this paper.\looseness=-1

The notion of \defn{context} is defined analogously to the case of linear logic. But the situation here is simpler, because the notions of context and \defn{m-context} coincide for classical logic. 

\begin{restatable}[Generalized Splitting for $\KSc$]{lemma}{lemSplitKSc}\label{lem:splittingKSc}
  Let $J,K,A,B$ be formulas and $a$ be an atom or negated atom.
  \begin{enumerate}
  \item\label{s:and} If $\nnodv{J}{\deri}{(A\vlan B)\vlor K}{\KSc}$ then there is an m-context $\Chole$ and formulas $J_A,J_B,K_A,K_B$ such that
    $$
    \odv{J}{\deri_J}{\Ccons{J_A\vlan J_B}}{\SKScU}
    \quand
    \odv{\Ccons{K_A\vlor K_B}}{\deri_K}{K}{\KSc}
    \quand
    \odv{J_A}{\deri_A}{A\vlor K_A}{\KSc}
    \quand
    \odv{J_B}{\deri_B}{B\vlor K_B}{\KSc}
    $$
  \item\label{s:tt} If $\nnodv{J}{\deri}{\vltt\vlan K}{\KSc}$ then there is an m-context $\Chole$ such that
    $$
    \odv{J}{\deri_J}{\Ccons{\vltt}}{\SKScU}
    \quand
    \odv{\Ccons{\vlff}}{\deri_K}{K}{\KSc}
    $$
   \item\label{s:catom} If $\nnodv{J}{\deri}{a\vlor K}{\KSc}$ then there is an m-context $\Chole$ such that
     $$
     \mbox{either (i)}\quad
    \odv{J}{\deri_J}{\Ccons{a}}{\SKScU}
    \quand
    \odv{\Ccons{\vlff}}{\deri_K}{K}{\KSc}
     \quad,\qquad\mbox{or (ii)}\quad
    \odv{J}{\deri_J}{\Ccons{\vltt}}{\SKScU}
    \quand
    \odv{\Ccons{\cneg a}}{\deri_K}{K}{\KSc}
    $$
  \end{enumerate}
\end{restatable}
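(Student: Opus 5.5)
The paper says the results for $\SKS$ are designed to follow from those for $\SLSp$, so my plan is a reduction to \Cref{lem:splitting}. I would translate classical formulas into linear ones by the map $\vlan\mapsto\vlte$, $\vlor\mapsto\vlpa$, $\vltt\mapsto\vlone$, $\vlff\mapsto\vlbot$, with atoms and negated atoms kept fixed. Under this translation every rule of $\KSc$ becomes literally a rule of $\MLS\subseteq\LScp$: $\aiD\mapsto\aiD$, $\sD\mapsto\sD$, $\andD\mapsto\tensD$ and $\eqD\mapsto\eqD$, since $\eqor$ maps onto $\eqpa$. Likewise every rule of $\SKScU$ becomes a rule of $\SMLS\mathord{\uparrow}\subseteq\SLScpU$, with $\andU$ (written $\withU$ in the system definition) going to $\tensU$. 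The translation is injective on formulas and on rule instances. Hence a $\KSc$-derivation $J\to(A\vlan B)\vlor K$ is the same object as an $\MLS$-derivation of the translated formulas.

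The problem is the converse direction. \Cref{lem:splitting} returns derivations in $\LScp$ and $\SLScpU$, which a priori could use additive or exponential rules, or introduce formulas outside the image of the translation. So the first step is to argue, by inspecting the proof of \Cref{lem:splitting} (an induction on the length of $\deri$ with case analysis on the bottommost rule), that when $\deri$ lies in $\MLS$ the construction stays inside the multiplicative fragment. Concretely, the context $\Chole$, the formulas $J_A,J_B,K_A,K_B$, and all produced derivations should use only rules of $\MLS$ for down-derivations and $\SMLS\mathord{\uparrow}$ for up-derivations. This should hold because each case only rearranges the given rule instances and inserts $\sU$, $\eqU$ or $\tensU$ instances via \Cref{lem:switch}. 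Additive or exponential rules arise only when the input already contains additive or exponential connectives or rules.

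I therefore expect the main work to be an actual re-run of the induction restricted to $\MLS$, rather than a black-box appeal. Doing it directly for $\KSc$ is clean, so I would proceed as follows. Take the bottommost rule $\rr$ of $\deri$.
\begin{itemize}
\item If $\rr$ acts inside $K$, apply the induction hypothesis and append $\rr$ to $\deri_K$.
\item If $\rr$ acts inside $A$ or $B$, append it to $\deri_A$ or $\deri_B$.
\item If $\rr$ is an $\eqD$ reassociating the outer disjunction, absorb it.
\item If $\rr$ is an $\sD$ whose redex contains $A\vlan B$, i.e.\ it has conclusion $(A\vlan B)\vlor(K_1\vlor K_2)$ with premise $(A\vlor K_1)\vlan(B\vlor K_2)$, then the premise $J$ already yields the split. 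Take $\Chole$ as the hole and $J_A,J_B$ from the earlier part of the derivation; this is the base of a secondary induction that uses part~\ref{s:and} with $K=\vlff$ and the m-context to commute.
\item If $\rr$ is $\andD$ creating $A\vlan B=\vltt\vlan\vltt$, fall back on part~\ref{s:tt}.
\end{itemize}
The delicate subcase is when $\rr$ is an $\sD$ that moves $A\vlan B$ deeper, with conclusion $(A\vlan B)\vlor K$ and premise $((A\vlan B)\vlor K')\vlan K''\vlor\cdots$. Here the induction hypothesis must be applied to a subderivation whose conclusion is not of the required shape. This is exactly where the premise $J$ need not be $\vltt$, and where the up-derivation $\deri_J$ and the m-context $\Chole$ absorb the extra conjunct via \Cref{lem:switch}.

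Parts~\ref{s:tt} and~\ref{s:catom} go the same way, by induction on $\deri$ tracking the occurrence of $\vltt$ (respectively $a$) down to the $\andD$ or $\aiD$ instance that created it. An $\aiD$ creating $\cneg a\vlor a$ gives case~(ii), with $\vltt$ in $J$ and $\cneg a$ placed in the context of $K$. Tracing the atom up to $J$ gives case~(i). Since contexts and m-contexts coincide in classical logic, no additive cases arise, and the statement is exactly parts~\ref{s:tensor}, \ref{s:one} and~\ref{s:atom} of \Cref{lem:splitting}, specialised to the multiplicative fragment.
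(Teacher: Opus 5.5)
Your proposal is correct and follows the paper's own argument. The paper observes that $\KSc$ is $\MLS$ under the translation $\vlan\leftrightarrow\vlte$, $\vlor\leftrightarrow\vlpa$, $\vltt\leftrightarrow\vlone$, $\vlff\leftrightarrow\vlbot$, reads the three statements off parts 1, 6 and 9 of \Cref{lem:splitting}, and mentions redoing that proof as the alternative. Your worry that the outputs might leave the multiplicative fragment is one the paper passes over, and it needs no re-run of the induction: every rule of $\LScp$ outside $\MLS$ has a non-multiplicative conclusion, and every $\MLS$ rule with multiplicative conclusion has a multiplicative premise (dually for $\SLScpU$), so any $\LScp$-derivation with multiplicative conclusion, or $\SLScpU$-derivation with multiplicative premise, already lies in $\MLS$, resp.\ $\SMLS\mathord{\uparrow}$. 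If you do re-run it, two corrections apply. In your ``delicate'' $\sD$ case, the conclusion of the subderivation $\deri'$ \emph{is} of the required shape with respect to the outer conjunction, so, as in the paper, you apply the induction hypothesis to $\deri'$ and then again to the returned derivation of $(A\vlan B)\vlor\cdots$; this is legitimate because the returned derivations are no longer than $\deri'$. In the principal $\sD$ case, the m-context must come from the induction hypothesis rather than being the bare hole.
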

\begin{proof}
  We can either redo the proof that we did for $\LScp$, or we can observe that $\KSc$ is the same system as $\MLS$ (the multiplicative fragment of $\LScp$) under the translation
  \begin{equation}
    \label{eq:transl}
    \hfil
    \vlte\transl\vlan
    \qquad\vlpa\transl\vlor
    \qquad
    \vlone\transl\vltt
    \qquad
    \vlbot\transl\vlff
  \end{equation}
  Then the three statements above are \Cref{lem:splitting}.\ref{s:tensor}, \ref{lem:splitting}.\ref{s:one}, and \ref{lem:splitting}.\ref{s:atom}.
\end{proof}

\begin{restatable}[Decomposition for $\KS$]{lemma}{lemDecompositionKS}\label{lem:decompositionKS}
  Let $n\ge1$ and $P,Q_1,\ldots, Q_n$ be formulas. Then
  $$
  \odV{P}{\deri}{Q_1\vlor Q_2,\vlor\cdots\vlor Q_n}{\KS}
  \mbox{\quad can be decomposed into \quad}
  \downsmash{\odv{P}{\deri'}{
    \odv{Q'_1}{\deri''_1}{Q_1}{\KSnc}
    \vlor
    \cdots
    \vlor
    \odv{Q'_n}{\deri''_n}{Q_n}{\KSnc}
    }{\KSc}}
  $$
  for some formulas $Q'_1,\ldots, Q'_n$.
\end{restatable}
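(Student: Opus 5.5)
I would prove this by transferring \Cref{lem:decomposition} along the translation~\eqref{eq:transl}, as in the proof of \Cref{lem:splittingKSc}, extended by the additive reading of $\wD$ and $\cD$. The point is that $\KS$ is essentially a fragment of $\LSp$. Interpret $\vlan,\vlor,\vltt,\vlff$ as $\vlte,\vlpa,\vlone,\vlbot$ for the core rules. Read classical $\wD$ (premise $\vlff$, conclusion $A$) and $\cD$ (premise $A\vlor A$, conclusion $A$) as the non-core rules of $\LSnc$ that play the same role: weakening is $\wpD$, and contraction is $\cpD$. In classical logic the additives and multiplicatives collapse, so each classical non-core rule is a linear non-core rule up to this identification.

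Because of that collapse I would not literally translate and reuse the linear lemma. Instead I would redo its proof, a rule-permutation argument, in the much smaller classical setting. The goal is to push every $\wD$ and $\cD$ instance downward through the core rules $\aiD,\sD,\andD,\eqD$ until it sits inside one of the $\vlor$-components $Q_i$ of the conclusion.

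The proof goes by induction on the number of non-core rule instances that lie above some core rule, taking the lowest such non-core instance $\rho$ and the core instance $\sigma$ directly below it. There are three cases.

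\begin{itemize}
\item If $\rho$ and $\sigma$ act on independent parts of the formula, they commute trivially.
\item If $\rho$ produces a formula $A$ that $\sigma$ then rewrites, we need explicit permutations. For $\sigma=\sD$ with premise $(A\vlor C)\vlan(B\vlor D)$, a weakening producing $A$ from $\vlff$ is replaced by an $\sD$ on $(\vlff\vlor C)\vlan(B\vlor D)$, followed by a weakening at $\vlff$ inside $\vlff\vlan B$. A contraction producing $A$ from $A\vlor A$ is replaced by two switches and a contraction on a larger subformula, of shape $(A\vlan B)\vlor(A\vlan B)\to A\vlan B$ after adding a contraction on $B$. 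This mirrors how $\dpD$ is used in the linear case; classically it is derivable since $\vlpl$ and $\vlor$ coincide.
\item For $\eqD$, we either commute or absorb the non-core rule into a bigger formula. In particular $\vlff$ as a unit of $\vlor$ is handled by weakening the whole disjunct.
\end{itemize}

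I expect the main obstacle to be this last step. First, termination must be guaranteed despite contraction duplicating material when it passes a switch; I would use the standard measure from decomposition proofs, namely a multiset of the heights of non-core instances, combined with the observation that a contraction moving down only duplicates non-core rules above it in a controlled way. Second, the final non-core part must respect the $\vlor$-structure $Q_1\vlor\cdots\vlor Q_n$. A contraction or weakening whose conclusion spans several $Q_i$ must be split componentwise. For weakening this uses $\vlff\eqor\vlff\vlor\vlff$ and weakens each part. For contraction on $Q_i\vlor Q_j$, we pass from $(Q_i\vlor Q_j)\vlor(Q_i\vlor Q_j)$ to the regrouping $(Q_i\vlor Q_i)\vlor(Q_j\vlor Q_j)$ via $\eqD$, which is core, and then contract each $Q_i$ separately.
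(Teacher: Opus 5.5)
Your strategy, redoing the permutation argument of \Cref{lem:decomposition} directly in $\KS$, is the paper's route. However, the one step with real content, the critical pairs, is set up wrongly.

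The interaction you analyse for $\sD$ is a non-core rule producing the \emph{passive} component $A$ of the premise $(A\vlor C)\vlan(B\vlor D)$. That is a trivial commutation: rules apply in any context, so you just apply $\sD$ to $((A\vlor A)\vlor C)\vlan(B\vlor D)$ and then $\cD$ inside $(A\vlor A)\vlan B$. Your weakening replacement is exactly this. Your contraction replacement, by contrast, is not a valid $\KS$ derivation. Switches yielding $(A\vlan B)\vlor(A\vlan B)$ need two copies of $B$, but the premise has only one. No core rule can create another, and a contraction on $B$ merges copies rather than creating them; duplicating $B$ would be the up-rule $\cU$.

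The genuinely critical pairs are different. They arise when $\mathsf{w}\mathord{\downarrow}$ or $\cD$ produces the \emph{active} disjunction $A\vlor C$ (or $B\vlor D$) that $\sD$ takes apart, or a disjunction whose disjuncts an $\eqD$ redistributes. In these cases $\eqD$ does the job of $\dpD$ and $\dzpD$:
\begin{enumerate}
\item rewrite $(A\vlor C)\vlor(A\vlor C)$ as $(A\vlor A)\vlor(C\vlor C)$, respectively $\vlff$ as $\vlff\vlor\vlff$;
\item apply a single $\sD$, giving $((A\vlor A)\vlan B)\vlor((C\vlor C)\vlor D)$, respectively $(\vlff\vlan B)\vlor(\vlff\vlor D)$;
\item only then contract, respectively weaken, $A$ and $C$ separately.
\end{enumerate}
These are exactly the regroupings in your last paragraph. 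You need them in the permutation step itself, not only for the final $\vlor$-structure of $Q_1\vlor\cdots\vlor Q_n$.

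You should also treat explicitly the case where the core instance acts inside (or on all of) the conclusion of the non-core one: weakening erases the core instance and contraction duplicates it. So a contraction moving down duplicates the \emph{core} instances it crosses, not non-core ones as you state. Those copies, together with the new $\eqD$ instances, are why a multiset of heights is not obviously decreasing. Your first induction (on the number of non-core instances with a core instance below them) does work, with one change. In each step, push the lowest such non-core instance, together with all its pieces, through the whole block of core instances beneath it, not just the one directly below. All copies of core instances and all new $\eqD$ instances then end up above the pieces, so that number drops by one.
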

\begin{proof}
  This is a simple rule permutation that is very similar to the proof of \Cref{lem:decomposition} (but simpler, as there are less cases).
\end{proof}

\begin{remark}
  It might come as a surprise that \Cref{lem:decompositionKS} has not been stated for $\KS$ in the literature so far. But in fact, as it stands, \Cref{lem:decompositionKS} does not hold for the variant of $\KS$ used in \cite{brunnler:tiu:01,bru:gug:PC-DI,AGR:cls2017} because our $\eqU$-rule is part of it. And indeed, \Cref{lem:decomposition} and \Cref{lem:decompositionKS} are the reason for separating $\eqD$ and $\eqU$ in this paper. However, a weaker variant of \Cref{lem:decompositionKS} is implicitly present in the work on combinatorial proofs~\cite{hughes:pws,hughes:invar,ralph:str:tableaux19}, which make use of the fact that whenever $A$ is a classical tautology, then there is a formula $A'$ such that $\vltt\proves[\KSc+\eqU]A'$ and $A'\proves[\KSnc]A$ (i.e., $P=\vltt$ and $\eqU$ is present in combinatorial proofs).
  %
\end{remark}

\begin{theorem}
  \Cref{thm:flipping} and \Cref{cor:interpolSLS} (\Cref{thm:interpol}) also hold for $\SKS$.
\end{theorem}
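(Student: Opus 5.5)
The proof will replay the argument for $\SLSp$ with the classical ingredients, following the same modular structure as before. The first step is a core flipping lemma for $\KSc$: if $H\proves[\SKScU]J$ and $J\proves[\KSc]F\vlor K$, then there is a formula $I$ with $\cneg F\vlan H\proves[\SKScU]I$ and $I\proves[\KSc]K$. The cheapest way is to use the translation~\eqref{eq:transl}. Under it, $\KSc$ becomes exactly $\MLS$ and $\SKSc$ becomes $\SMLS$, because $\andD$ and $\andU$ correspond to $\tensD$ and $\tensU$. Classical formulas built from atoms, negated atoms, $\vltt,\vlff,\vlan,\vlor$ correspond to $\MLL$ formulas, and classical negation corresponds to linear negation. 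So the proof of \Cref{lem:coreflipping}, restricted to the multiplicative cases of \Cref{lem:splitting} (the cases $\vlte$, $\vlone$, atom, and the unit case $\vlbot$, which is handled by $\eqD$), transfers verbatim.

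The one point to check here is that the induction in \Cref{lem:coreflipping} never leaves $\SMLS$ when $F$ is multiplicative. Concretely, the splitting cases used yield derivations in $\MLS$ and $\SMLS\mathord{\uparrow}$ and not in the full $\LScp$. I expect this to hold because the proof of \Cref{lem:splittingKSc} already claims that the multiplicative splitting cases stay in $\MLS$. Formally, I would argue that a derivation in $\LScp$ whose premise and conclusion are multiplicative contains only multiplicative rules, and that the constructions in the splitting proof introduce only rules corresponding to connectives already present.

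Next I obtain the full flipping theorem for $\SKS$ exactly as in the proof of \Cref{thm:flipping}.
\begin{enumerate}
\item Apply \Cref{lem:decompositionKS} to the down-part $\deri$, which separates it into a $\KSc$ part followed by $\KSnc$ parts on each disjunct.
\item Apply the dual of \Cref{lem:decompositionKS}, obtained by negating the derivation, to the up-part $\derib$. This requires the classical version of \Cref{prop:DI}.\ref{prop:flip2}: $A\proves[\KS]B$ iff $\cneg B\proves[\SKSU]\cneg A$. It holds because each up-rule is the rule-wise De~Morgan dual of a down-rule ($\cU/\cD$, $\wU/\wD$, $\sU/\sD$, $\aiU/\aiD$, $\andU/\andD$, $\eqU/\eqD$).
\item Apply core flipping to the middle core derivation $H'\proves[\SKScU]J\proves[\KSc]F'\vlor K'$, which produces the interpolant $I$.
\item Negate $\deri''_F$ into an $\SKSnc\mathord{\uparrow}$ derivation from $\cneg F$ to $\cneg{F'}$ and place it in the conjunction with $\derib''$.
\end{enumerate}

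Finally, interpolation (\Cref{cor:interpolSLS}) follows as in~\eqref{eq:int-proof}. Starting from $A\proves[\SKS]B$, we obtain $\proves[\SKS]\cneg A\vlor B$ by the classical analogue of \Cref{prop:DI}.\ref{prop:flip1}, using deep $\iD$ and $\iU$ derivable via $\sD$ and $\aiD$. Cut elimination (\Cref{thm:cutelimSKS}) then gives $\proves[\KS]\cneg A\vlor B$. Flipping is then applied with $H=J=\vltt$, $F=\cneg A$, $K=B$, which yields $\vltt\vlan A$ and hence $A$ via $\eqU$.

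The main obstacle is justifying the classical analogues of the auxiliary facts in \Cref{prop:DI}, in particular that $\iU$ is derivable in $\SKScU$ and that the up/down flip works for $\SKSnc$. For these I would rely on the same inductions on formula structure, which involve only the multiplicative connectives.
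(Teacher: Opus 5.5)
Your proposal is correct and follows essentially the paper's own route. You obtain core flipping for $\SKSc$ by transporting \Cref{lem:coreflipping} along the translation~\eqref{eq:transl}, whereas the paper redoes it via \Cref{lem:splittingKSc}; you then apply \Cref{lem:decompositionKS} and its dual exactly as in the proof of \Cref{thm:flipping}, and derive interpolation as in~\eqref{eq:int-proof}. The point you flag, that the multiplicative cases never leave $\MLS$ and $\SMLS$, is left implicit in the paper, and your monotonicity argument settles it: core down-rules never erase non-multiplicative symbols, and dually up-rules never create them.
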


\begin{proof}
  With \Cref{lem:splittingKSc} we can prove flipping for $\SKSc$, like we did in \Cref{lem:coreflipping}. Together with \Cref{lem:decompositionKS}, the proof of flipping for $\SKS$ is then literally the same as for $\SLS'$ in \Cref{thm:flipping}. Then, interpolation follows in exactly the same way as in \Cref{cor:interpolSLS}.
\end{proof}

\begin{remark}
  It is quite astonishing that something simple and elegant as \Cref{thm:interpol} and \Cref{thm:flipping} have not been observed for classical logic before. More so, as deep inference proof systems like $\SKS$ have been around for more than 25 years~\cite{brunnler:tiu:01}. Now that we have these results, we can conjecture that (i) simpler proofs will be found eventually, and that (ii) also for classical logic, Step~3 in the proof of \Cref{cor:interpolSLS} and the ``down-flipping'' in \eqref{eq:interflip} are inverse operations. 
\end{remark}

\section{Modal Logic}
\label{sec:modal}

In this section we show how the work on linear logic allows us to extend the results of the previous section from classical logic to the modal logics $\K$, $\KT$, $\Kfour$, and $\Sfour$.

The formulas of modal logic are generated from the grammar:
$$
\Fms\grammareq \Atms\mid\cneg\Atms\mid\vltt\mid\vlff\mid
\Fms\vlan\Fms\mid\Fms\vlor\Fms\mid\vlbox\Fms\mid\vldia\Fms
$$
which is classical logic extended by the two modalities $\vlbox$ (called \defn{box}) and $\vldia$ (called \defn{diamond}), that are De~Morgan dual to each other:
$
\wcneg{\vlbox A}=\vldia\cneg A
$ and $
\wcneg{\vldia A}=\vlbox\cneg A
$.
We need the additional inference rules shown in~\Cref{fig:modalrules}, and we define the following proof systems:
\begin{equation}
  \label{eq:modalsys}
  \hfil
  \begin{array}{@{\hskip-1em}l@{\;=\;}l@{\qquad\qquad}l@{\;=\;}l}
    \KSkc&\KSc\cup\set{\eD,\pD,\ppD}
    &
    \SKSkc&\KSkc\cup\set{\eU,\pU,\ppU}
    \\
    \KSknc&\KSnc
    &
    \SKSknc&\SKSnc
    \\
    \KSktnc&\KSnc\cup\set{\tD}
    &
    \SKSktnc&\SKSnc\cup\set{\tU}
    \\
    \KSkfnc&\KSnc\cup\set{\gD}
    &
    \SKSkfnc&\SKSnc\cup\set{\gU}
    \\
    \KSsfnc&\KSnc\cup\set{\tD,\gD}
    &
    \SKSsfnc&\SKSnc\cup\set{\tU,\gU}
    \\
    \KSy&\KSkc\cup\KSync 
    &
    \SKSy&\SKSkc\cup\SKSync 
  \end{array}
\end{equation}
where the last line is for every $\sysY\in\set{\K,\KT,\Kfour,\Sfour}$.\footnote{To be precise, we should remove the rules $\ppD$ and $\ppU$ from $\KSkc$ and $\SKSkc$ (because they are not needed for completeness) and introduce $\KSkc'$ and $\SKSkc'$ as we did for linear logic. However, there are already quite a lot of different systems, and there is no ``standard'' for modal logic in deep inference in the literature, so we decided against this additional confusion.}

\begin{figure}[!t]
  \def\mskipa{-1.5ex}
  \newcommand{\mcol}[2][c@{\quad}]{\multicolumn{2}{#1}{#2}}
  \vskip-\baselineskip
  $$
  \begin{array}{c@{\qquad}c@{\qquad}c@{\quad}|@{\quad}c@{\qquad}c@{\qquad}c} 
    \vlinf{\eD}{}{\vlbox\vltt}{\vltt}
    &
    \vlinf{\pD}{}{\vlbox A\vlor \vldia B}{\vlbox(A\vlor B)}
    &
    \vlinf{\ppD}{}{\vldia A\vlor \vldia B}{\vldia(A\vlor B)}
    &
    \vlinf{\ppU}{}{\vlbox(A\vlan B)}{\vlbox A\vlan \vlbox B}
    &
    \vlinf{\pU}{}{\vldia(A\vlan B)}{\vldia A\vlan \vlbox B}
    &
    \vlinf{\eU}{}{\vlff}{\vldia\vlff}
    \\&&\\[\mskipa]
    &
    \vlinf{\tD}{}{\vldia A}{A}
    &
    \vlinf{\gD}{}{\vldia A}{\vldia\vldia A}
    &
    \vlinf{\gU}{}{\vlbox\vlbox A}{\vlbox A}
    &
    \vlinf{\tU}{}{A}{\vlbox A}
    \\[-2ex]
  \end{array}
  $$
  \caption{Additional rules for systems $\SKSk$, $\SKSkt$, $\SKSkf$, $\SKSsf$, deep inference proof systems for modal logics $\K$, $\KT$, $\Kfour$, $\Sfour$, respectively. Left: Down-fragment, Right: Up-fragment. First line: core-fragment. Second line: non-core fragment.}
  \label{fig:modalrules}
\end{figure}

Again, all results that we presented in Sections~\ref{sec:prelim}--\ref{sec:classical} do also hold for all $\KSy$ and $\SKSy$. But surprisingly, there are no standard deep inference systems for modal logic in the literature.\footnote{The only work we could find are~\cite{hein:stewart:05,stewart:stouppa:05}, where rules $\tD$ and $\tU$ are inversed.} This means that even for soundness and completeness we cannot refer to published work, although the proof is literally the same as for linear logic. 

\begin{restatable}{theorem}{thmSoundKSk}\label{thm:soundnessKSk}
  For every $\sysY\in\set{\K,\KT,\Kfour,\Sfour}$ we have that every proof system $\sysX$ with $\KSy\setminus\set{\ppD}\subseteq\sysX\subseteq\SKSy$ is sound and complete for the modal logic~$\sysY$.
\end{restatable}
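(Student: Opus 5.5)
We prove soundness and completeness separately, as for \Cref{thm:soundSLS} and \Cref{thm:soundnessKS}. Soundness: it suffices to check that every rule $\vlinf{\rr}{}{B}{A}$ in $\SKSy$ yields a valid implication $A\vlim B$ in $\sysY$, and that validity of implications is preserved under arbitrary contexts. The classical rules of \Cref{fig:SKS} are propositional tautologies. For the modal rules of \Cref{fig:modalrules}: $\eD$ is necessitation of $\vltt$; $\pD$ ($\vlbox(A\vlor B)\vlim\vlbox A\vlor\vldia B$) is the contrapositive form of axiom $\mathsf{k}$; $\ppD$ is the distribution of $\vldia$ over $\vlor$, a $\K$-theorem; $\tD$ ($A\vlim\vldia A$) is $\mathsf{T}$; $\gD$ ($\vldia\vldia A\vlim\vldia A$) is axiom $\mathsf{4}$. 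The up-rules are the De~Morgan duals of the down-rules (contrapositives), so they are also valid. Monotonicity under contexts holds because in normal modal logics, $\vlan$, $\vlor$, $\vlbox$ and $\vldia$ are monotone: from $\vdash A\vlim B$ we get $\vdash\vlbox A\vlim\vlbox B$ and $\vdash\vldia A\vlim\vldia B$ via necessitation and $\mathsf{k}$. Negation occurs only on atoms, so every context is positive. An induction on the derivation then gives soundness of $\SKSy$, and hence of every subsystem.

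Completeness: it suffices to show $\KSy\setminus\set{\ppD}$ is complete, since larger systems prove more. I would take a standard cut-free one-sided sequent calculus for $\sysY$. It has the classical rules $\vlan$, $\vlor$, identity, weakening and contraction, together with modal rules:
\begin{itemize}
\item for $\K$: from $\sqn{A,B_1,\ldots,B_n}$ infer $\sqn{\vlbox A,\vldia B_1,\ldots,\vldia B_n,\Gamma}$;
\item for $\KT$: additionally, from $\sqn{A,\vldia A,\Gamma}$ infer $\sqn{\vldia A,\Gamma}$;
\item for $\Kfour$: additionally, the $\Box$-rule premise may retain $\vldia B_i$ alongside $B_i$;
\item for $\Sfour$: both.
\end{itemize}
Cut-free completeness of these calculi is a standard result. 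We translate a sequent $\sqn{\Gamma}$ as the disjunction of its formulas and show by induction on the sequent proof that $\vltt\proves\Gamma$ in $\KSy\setminus\set{\ppD}$.

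The classical rules are simulated as for $\KS$, using $\sD$, $\aiD$ with general identity, $\wD$ and $\cD$ (contraction on $\vldia$ formulas is a special case). For the $\K$ rule, we place the derivation of $A\vlor B_1\vlor\cdots\vlor B_n$ under a $\vlbox$: start with $\eD$ to get $\vlbox\vltt$, and run the derivation inside the box context. Then apply $\pD$ repeatedly to obtain $\vlbox A\vlor\vldia B_1\vlor\cdots\vlor\vldia B_n$, weaken in $\Gamma$ with $\wD$, and use $\eqD$. The case $n=0$ is simply $\eD$ followed by the derivation in the box context. For $\KT$, the reflexivity rule is $\tD$ applied to $A$ followed by $\cD$ on $\vldia A\vlor\vldia A$. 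For $\Kfour$, we use $\gD$ to turn $\vldia\vldia B_i$ produced by $\pD$ into $\vldia B_i$, and contract with $\cD$.

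The main obstacle is keeping this completeness step within the chosen system. In particular we must make sure that $\ppD$ is genuinely unnecessary (this is why the statement removes it), and that the $\vldia B_i$ retained in the $\Kfour$ premise can be handled by $\pD$ together with $\gD$ and $\cD$. Checking that $\pD$ applied in sequence correctly peels off each disjunct under the box is routine bookkeeping with $\eqD$.
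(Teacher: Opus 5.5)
Your proposal is correct and follows essentially the same route as the paper. Soundness holds because every rule of $\SKSy$ is a $\sysY$-valid implication; the paper gets closure under deep contexts by translating derivations into $\GSy$ with cut, while you use monotonicity of the positive connectives. Completeness comes from simulating a cut-free sequent calculus in $\KSy\setminus\set{\ppD}$ via $\eD$, iterated $\pD$, $\tD$, $\gD$ and $\cD$, exactly as in the appendix, and your variant modal sequent rules (T with built-in contraction, the K4 rule keeping both $B_i$ and $\vldia B_i$) differ only cosmetically from the paper's $\kr$, $\ktr$, $\kfr$.
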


\begin{proof}
  First, for every inference rule $\vlinf{\rr}{}GF$ in $\SKSy$, we have that the implication $F\vlim G$ is a theorem of the modal logic $\sysY$. This entails soundness. For completeness, we can pick a cut-free sequent calculus for $\sysY$ (e.g.~\cite{troelstra:schwichtenberg:00,Wansing:02,poggiolesi:11}), 
  and show that each rule is derivable in $\KSy\setminus\set{\ppD}$.
  More details are in \Cref{app:modal}.
\end{proof}

\begin{restatable}[Cut Elimination]{corollary}{thmCutElimSKSk}\label{thm:cutelimSKSk}
  For all $\sysY\in\set{\K,\KT,\Kfour,\Sfour}$, if\/ $\proves[\SKSy] A$ then also\/ $\proves[\KSy] A$.
\end{restatable}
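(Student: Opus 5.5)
We argue exactly as for \Cref{thm:cutelimSKS}, deriving the corollary from \Cref{thm:soundnessKSk} through a cut-free sequent calculus. Fix $\sysY\in\set{\K,\KT,\Kfour,\Sfour}$ and suppose $\proves[\SKSy]A$, witnessed by a derivation $\deri$ with premise $\vltt$ and conclusion $A$. The first step is soundness: by \Cref{thm:soundnessKSk} applied with $\sysX=\SKSy$, every rule of $\SKSy$ is a valid implication of $\sysY$. Hence $A$ is a theorem of $\sysY$; concretely, we prove by induction on $\deri$ that $\prem\deri\vlim\conc\deri$ is a theorem. In the inductive step, rules applied inside a context are handled by monotonicity of all connectives. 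This includes $\vlbox$ and $\vldia$ via necessitation and the $\K$-axiom, and uses the fact that our formulas are in negation normal form, so every context is positive.

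The second step is completeness of the cut-free fragment. By \Cref{thm:soundnessKSk} applied with $\sysX=\KSy\setminus\set{\ppD}$, which is allowed because $\KSy\setminus\set{\ppD}\subseteq\sysX\subseteq\SKSy$, every theorem of $\sysY$ is provable in $\KSy\setminus\set{\ppD}$. In particular $A$ is provable there, and therefore also in the larger system $\KSy$. This gives $\proves[\KSy]A$.

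The only point needing care is that this argument is not syntactic: it relies on completeness of $\KSy$ being established without any up-rule. That is exactly what \Cref{thm:soundnessKSk} provides. Its completeness direction simulates a \emph{cut-free} sequent calculus for $\sysY$ inside $\KSy\setminus\set{\ppD}$, and the cut admissibility of that calculus is the known result for $\K$, $\KT$, $\Kfour$, $\Sfour$.

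If a proof internal to deep inference were wanted, one would instead replay \Cref{prop:DI}. Every up-rule would be reduced via \Cref{prop:DI}.\ref{prop:updown} (modal analogue) to its down-version plus $\iU$, and the resulting single cut would then be eliminated. I would not pursue this route, since the semantic route above suffices and the statement is phrased only in terms of provability.
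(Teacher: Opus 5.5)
Your proof is correct and follows the paper's route. The corollary is read off directly from Theorem~\ref{thm:soundnessKSk}: soundness of $\SKSy$, combined with completeness of $\KSy\setminus\set{\ppD}$, which rests on simulating a cut-free sequent calculus for $\sysY$ (Lemmas~\ref{lem:SKSytoGS1y} and~\ref{lem:GS1ytoKSy} together with cut admissibility for $\GSy$).
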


All other Propositions and Lemmas of \Cref{sec:prelim} also follow trivially for the proof systems defined in~\eqref{eq:modalsys}. And at this point it should come at no surprise that the generalized splitting lemma and the decomposition lemma also hold for the four modal logics mentioned above. The notion of m-context is the same as for classical logic, i.e., in the case of modal logic, an \defn{m-context} is a context where the hole $\conhole$ does not occur inside the scope of a $\vlbox$ or $\vldia$.

\begin{lemma}[Generalized Splitting for $\KSkc$]\label{lem:splittingKSkc}
  Let $J,K,A,B$ be formulas and $a$ be an atom or negated atom. The three statements of \Cref{lem:splittingKSc} do also hold for $\KSkc$. Furthermore: 
  \begin{enumerate}\setcounter{enumi}{3}
  \item\label{s:box} If $\nnodv{J}{\deri}{\vlbox A\vlor K}{\KSkc}$ then there is either (i) an m-context $\Chole$ and a formula $J_A$ such that
    $$
    \odv{J}{\deri_J}{\Ccons{\vlbox J_A}}{\SKSkcU}
    \quand
    \odv{\Ccons{\vlff}}{\deri_K}{K}{\KSkc}
    \quand
    \odv{J_A}{\deri_A}{A}{\KSkc}
    $$
    or (ii) an m-context $\Chole$ and formulas $J_A$ and $K_A$ such that
    $$
    \odv{J}{\deri_J}{\Ccons{\vlbox J_A}}{\SKScU}
    \quand
    \odv{\Ccons{\vldia K_A}}{\deri_K}{K}{\LScp}
    \quand
    \odv{J_A}{\deri_A}{A\vlor K_A}{\LScp}
    $$
  \item\label{s:dia} If $\nnodv{J}{\deri}{\vldia A\vlor K}{\KSkc}$ then there is either (i) an m-context $\Chole$ and a formula $J_A$ such that
    $$
    \odv{J}{\deri_J}{\Ccons{\vldia J_A}}{\SKSkcU}
    \quand
    \odv{\Ccons{\vlff}}{\deri_K}{K}{\KSkc}
    \quand
    \odv{J_A}{\deri_A}{A}{\KSkc}
    $$
    or (ii) an m-context $\Chole$ and formulas $J_A$ and $K_A$ such that
    $$
    \odv{J}{\deri_J}{\Ccons{\vlca J_A}}{\SKSkcU}
    \quand
    \odv{\Ccons{\vlca K_A}}{\deri_K}{K}{\KSkc}
    \quand
    \odv{J_A}{\deri_A}{A\vlor K_A}{\KSkc}
    $$
    where $\vlca$ is either $\vlbox$ or $\vldia$.
  \end{enumerate}
\end{lemma}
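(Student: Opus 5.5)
The plan is to reduce \Cref{lem:splittingKSkc} to the linear case \Cref{lem:splitting}, as was done for \Cref{lem:splittingKSc}. Extend the translation~\eqref{eq:transl} by $\vlbox\transl\vloc$ and $\vldia\transl\vlwn$. Under this translation the rules of $\KSkc$ become rules of $\LScp$: $\aiD,\sD,\eqD$ stay unchanged; $\andD$ becomes $\tensD$; and $\eD,\pD,\ppD$ become the linear rules with the same names. Dually, the up-rules of $\SKSkc$ become rules of $\SLScpU$. The translation is a bijection on formulas, and m-contexts correspond to m-contexts. It commutes with negation, since $\wcneg{\vlbox A}=\vldia\cneg A$ matches $\lneg{(\vloc A)}=\vlwn\lneg A$. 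So a derivation $J\proves[\KSkc]\vlbox A\vlor K$ translates into a derivation in the image of $\KSkc$ inside $\LScp$, which is the system $\MLS\cup\set{\eD,\pD,\ppD}$.

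The first step is to check that the proof of \Cref{lem:splitting} (parts \ref{s:tensor}, \ref{s:one}, \ref{s:atom}, \ref{s:oc}, \ref{s:wn}) stays inside this sub-fragment. The three old statements of \Cref{lem:splittingKSc} then hold for $\KSkc$, and the new items~\ref{s:box} and~\ref{s:dia} are the translations of \Cref{lem:splitting}.\ref{s:oc} and \ref{lem:splitting}.\ref{s:wn}. That proof goes by induction on the length of $\deri$, looking at the bottommost rule instance and at how it interacts with the distinguished subformula. If the input uses only rules from the image, each case only produces derivations using rules that already occurred in the input, together with their up-duals for $\deri_J$ (for instance $\pD$ yields $\pU$, and $\ppD$ yields $\ppU$). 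The additive rules $\dD,\dpD,\dzD,\withD,\topD$ never appear, so the additive cases never fire. I would state this as a refined version of \Cref{lem:splitting}: for any subsystem $\sysS\subseteq\LScp$ closed in this sense, the lemma holds with $\LScp$ replaced by $\sysS$ and $\SLScpU$ replaced by the corresponding up-system. The shape of the conclusions of cases (i) and (ii), with $\vlda\in\set{\vloc,\vlwn}$, maps exactly onto the displayed conclusions with $\vlca\in\set{\vlbox,\vldia}$.

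The main obstacle is this closure check. I have to verify that every case of the inductive proof of \Cref{lem:splitting}.\ref{s:oc}/\ref{s:wn} that passes through a $\pD$ or $\ppD$ instance in the context only introduces $\pU$, $\ppU$, $\eU$ and multiplicative up-rules in $\deri_J$. I also need to check that the case where $\vlone$ in the context is turned into $\vloc\vlone$ by $\eD$ only uses $\eD$ and $\eU$. I would go through these cases explicitly, using \Cref{lem:switch} for moving formulas through m-contexts; that lemma involves only $\sD,\sU,\eqD,\eqU$ and so is harmless.

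A further subtlety is the equations. $\eqD$ in $\KSkc$ is modulo $\eqor$, which translates to $\eqpa$, so the equations correspond exactly. Finally, for the derivations in item~(ii) of~\ref{s:box} that are labelled $\SKScU$ and $\LScp$ in the statement, the refined lemma yields derivations in $\SKSkcU$ and $\KSkc$ respectively. I read these labels as meaning exactly that (the labels $\SKScU$/$\LScp$ there look like typos), and the translation back gives that form.
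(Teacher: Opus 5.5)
Your route is the paper's: extend the translation by $\vlbox\leftrightarrow\vloc$ and $\vldia\leftrightarrow\vlwn$, and read items 4 and 5 off \Cref{lem:splitting}.\ref{s:oc} and \ref{lem:splitting}.\ref{s:wn}. The paper's proof is exactly this one line. Your extra requirement, that the linear induction stay inside the image fragment, is a point the paper does not address, and it is the right thing to check. But the check fails, irreparably, at precisely the case you single out.

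Suppose the bottommost rule of $\deri$ is an $\eD$ instance creating the principal $\vlbox\vltt$. (The same happens when $\eD$ creates a $\vlbox(K_1\vlor A)$ from which a later $\pD$ extracts the principal $\vldia A$.) Then the induction hypothesis, via the $\vltt$-case, only gives $J\proves[\SKSkcU]\Ccons{\vltt}$ and $\Ccons{\vlff}\proves[\KSkc]K$. To reach $\Ccons{\vlbox\vltt}$ you must append that very $\eD$, which is a down-rule, so $\deri_J$ leaves $\SKSkcU$. Saying the case ``only uses $\eD$ and $\eU$'' is not the property you need. The paper's linear case for $\eD$ has the same defect: it sets $\Chole=\conhole\vlpa K$ and $J_A=\vlone$, which tacitly takes the down-derivation $\deri$ itself as $\deri_J$.

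No other choice of context or $J_A$ rescues this. No up-rule of $\SKSkc$ introduces a modality, so from a modality-free $J$ no formula $\Ccons{\vlbox J_A}$ or $\Ccons{\vldia J_A}$ is reachable in $\SKSkcU$. Hence items 4 and 5 are false as stated. Concretely, $\vltt\proves[\KSkc]\vlbox a\vlor\vldia\cneg a$ by $\eD$, then $\aiD$ under the $\vlbox$, then $\pD$. With $J=\vltt$, every alternative of item 4 (for $A=a$, $K=\vldia\cneg a$) and of item 5 (for $A=\cneg a$, $K=\vlbox a$) demands $\vltt\proves[\SKSkcU]\Ccons{\vlca J_A}$, which is impossible.

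What is missing is an extra alternative in which the modality is created below $J$, not inside it.
\begin{itemize}
\item For item 4: $J\proves[\SKSkcU]\Ccons{\vltt}$, together with either $\vltt\proves[\KSkc]A$ and $\Ccons{\vlff}\proves[\KSkc]K$, or $\vltt\proves[\KSkc]A\vlor K_A$ and $\Ccons{\vldia K_A}\proves[\KSkc]K$.
\item For item 5: $J\proves[\SKSkcU]\Ccons{\vltt}$, $\vltt\proves[\KSkc]A\vlor K_A$ and $\Ccons{\vlbox K_A}\proves[\KSkc]K$.
\end{itemize}
With this alternative added, your closure argument goes through; the new case propagates through the $\pD$, $\ppD$ and $\sD$ cases just like the old ones. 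The flipping lemma can also use it: replace the $\pU$/$\ppU$ step by an $\eqU$ step that absorbs the $\vltt$, plus an $\eU$ in the first form of item 4.
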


\begin{proof}
  The proof is the same as in the previous sections. We can extend the translation in~\eqref{eq:transl} by $\vloc\transl\vlbox$ and $\vlwn\transl\vldia$. Then the two statements above are exactly the same as \Cref{lem:splitting}.\ref{s:oc} and \ref{lem:splitting}.\ref{s:wn}.
\end{proof}

\begin{lemma}[Decomposition for Modal Logics]\label{lem:decompositionModal}
  Let $\sysY\in\set{\K,\KT,\Kfour,\Sfour}$. Then \Cref{lem:decompositionKS} does also hold for $\KSy$ (i.e., $\KSc$ is replaced by $\KSkc$ and $\KSnc$ is replaced by $\KSync$). 
\end{lemma}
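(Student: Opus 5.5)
We prove the Decomposition Lemma for $\KSy$ by reduction to the linear case. We use the translation~\eqref{eq:transl} extended by $\vloc\transl\vlbox$ and $\vlwn\transl\vldia$, just as in the proof of \Cref{lem:splittingKSkc}. Under this translation the rules of $\KSkc$ ($\aiD,\sD,\andD,\eqD,\eD,\pD,\ppD$) are literally rules of $\LScp$ ($\aiD,\sD,\tensD,\eqD,\eD,\pD,\ppD$). The non-core rules of $\KSync$ are instances of non-core rules of $\LSnc$: $\cD$ and $\wD$ of classical logic are $\cpD$ and $\wpD$ on the $\vlpa$-level (more precisely, we read $A\vlor A\to A$ and $\vlff\to A$ with $\vlpa$ as $\vlor$). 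The rules $\tD$ and $\gD$ are exactly the linear $\tD,\gD$.

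This matching is not exact, because the classical $\cD$ and $\wD$ do not correspond to $\cpD,\wpD$ in the linear syntax. The $\vlor$ of classical logic is translated as $\vlpa$, not $\vlpl$. Hence a direct appeal to \Cref{lem:decomposition} is not literally possible. Instead we redo the rule permutation argument of \Cref{lem:decomposition} / \Cref{lem:decompositionKS}, now with the extra modal core rules $\eD,\pD,\ppD$ and the modal non-core rules $\tD,\gD$.

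\textbf{Plan.} We take the derivation $\deri$ in $\KSy$ and repeatedly permute non-core rule instances below core rule instances. We proceed by induction on the number of non-core instances, picking a topmost one that still has core rules below it. For each pair consisting of a non-core rule $\rho\in\set{\wD,\cD,\tD,\gD}$ directly above a core rule $\sigma\in\KSkc$, we show that $\sigma$ followed by (possibly several copies of) $\rho$ derives the same conclusion.

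There are two trivial cases: $\sigma$ acts in a context independent of the redex of $\rho$, or the redex of $\rho$ sits inside a metavariable of $\sigma$. In both, the rules simply commute, duplicating $\rho$ if $\sigma$ duplicates material. No core rule duplicates material except possibly through $\eqD$, which does not.

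The critical cases are those where the conclusion of $\rho$ overlaps the premise structure of $\sigma$, i.e. $\rho$ produces a $\vldia$, a $\vlor$, or a generic formula consumed by $\sigma$. The classical cases ($\wD,\cD$ against $\sD,\aiD,\andD,\eqD$) are exactly those of \Cref{lem:decompositionKS}. The new cases are the following.
\begin{itemize}
\item $\tD$ or $\gD$ producing $\vldia B$ that is then used by $\pD$ or $\ppD$. For example, $\vlbox(A\vlor B)$ with $\vldia B$ obtained from $\vldia\vldia B$ by $\gD$: we move $\gD$ below $\pD$, which is possible since the $\vldia B$ ends up as a $\vlor$-component of the conclusion $\vlbox A\vlor\vldia B$.
\item $\tD$ producing $\vldia B$ from $B$ directly inside $\vlbox(A\vlor\cdot)$. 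This is handled analogously.
\item $\wD$ or $\cD$ producing $\vlbox(\ldots)$ or $\vldia(\ldots)$ that feeds $\pD$ or $\ppD$. Here we need to push weakening and contraction through the modalities. A contraction $\vldia(A\vlor B)\vlor\vldia(A\vlor B)\to\vldia(A\vlor B)$ followed by $\ppD$ becomes two $\ppD$ followed by contractions on $\vldia A$ and $\vldia B$. For $\vlbox$ under $\pD$ it is similar but uses $\pD$ twice plus $\cD$. A weakening of $\vlbox(A\vlor B)$ from $\vlff$ followed by $\pD$ becomes two weakenings.
\end{itemize}

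\textbf{Main obstacle.} The hardest step is ensuring that the conclusion's $\vlor$-structure $Q_1\vlor\cdots\vlor Q_n$ is preserved. Non-core rules must end up applied inside each $Q_i$ separately, so a $\cD$ creating some $Q_i$ from $Q_i\vlor Q_i$ must be pushed to the very bottom, acting only within the $Q_i$-component. As in the linear proof, the analogues of $\dpD$ and $\dzpD$ were needed there. Here we first check that $\vlor$ being itself the $\vlpa$ avoids this issue, and that $\ppD$ plays the role of $\dpD$ for contraction of diamonds.

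\textbf{Termination.} We use a measure on the multiset of heights and sizes of non-core instances, as in \Cref{lem:decomposition}. Contraction cases can duplicate $\rho$ but always strictly lower it.
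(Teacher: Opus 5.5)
Your overall route matches the paper's. Its proof just says the argument is the same as for \Cref{lem:decomposition}: redo the rule permutations, using that $\eD,\pD,\ppD,\tD,\gD$ are literally the linear exponential rules under $\vloc\mapsto\vlbox$, $\vlwn\mapsto\vldia$. The classical $\cD,\wD$ act on arbitrary formulas like $\cpD,\wpD$. The role of $\dpD$ and $\dzpD$ is taken by $\eqD$, via $(X\vlor Y)\vlor(X\vlor Y)\eqor(X\vlor X)\vlor(Y\vlor Y)$ and $\vlff\eqor\vlff\vlor\vlff$. So it is $\eqD$, not $\ppD$, that replaces $\dpD$; a contraction whose conclusion is a $\vldia$-formula never needs splitting.

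\textbf{The new critical pairs.} Your list is off exactly where the modal rules enter.
\begin{itemize}
\item The premise $\vlbox(A\vlor B)$ of $\pD$ has no $\vldia$-node outside its metavariables. So $\tD$ or $\gD$ directly above $\pD$ gives only non-overlapping cases. The $\vldia B$ in your first bullet is created by $\pD$ itself, and your second bullet is also a passive case.
\item The real overlaps for $\tD,\gD$ are with $\ppD$, when they produce its redex $\vldia(A\vlor B)$. Then $\tD$ followed by $\ppD$ becomes two $\tD$'s, on $A$ and on $B$. And $\gD$ followed by $\ppD$ becomes $\vldia\vldia(A\vlor B)\to\vldia(\vldia A\vlor\vldia B)\to\vldia\vldia A\vlor\vldia\vldia B$, using two $\ppD$'s (the first under the outer $\vldia$), followed by two $\gD$'s.
\item These are the $\vlwn$-cases of the linear proof, and your proposal never states them.
\item You also omit $\cD$ or $\wD$ producing the inner $A\vlor B$ under the modality of $\pD$ or $\ppD$. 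There one applies $\eqD$ under $\vlbox$ resp.\ $\vldia$, then the modal rule, then $\cD$ resp.\ $\wD$ on each component.
\end{itemize}

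\textbf{The induction.} It does not work as stated.
\begin{itemize}
\item The number of non-core instances is not decreasing: the $\tD$/$\ppD$ and $\gD$/$\ppD$ cases and every splitting case turn one instance into two.
\item Your fallback, a multiset of heights, need not decrease either. The reason is a case you do not list among the trivial ones: a core rule acting strictly inside the conclusion of $\cD$. That core rule is copied above the contraction (it is the core rule that is duplicated, not $\rho$), which raises every non-core instance sitting above it.
\item A measure that works is the length $m$ of the core derivation below. Show, by induction on $m$, that a family of non-core instances at pairwise independent positions followed by a core derivation of length $m$ can be rearranged into core followed by non-core. One permutation step yields a core block followed by another such family.
\item Then obtain the lemma by induction on the derivation: push the topmost rule through the already decomposed remainder. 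Finally, use $\eqD$ to split those contractions and weakenings whose conclusion spans several $Q_i$.
\end{itemize}
The paper is silent on termination, so here you would have to supply more than it does, but the argument you sketch does not yet do it.
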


\begin{proof}
  The proof is the same as for \Cref{lem:decomposition}.
\end{proof}
\begin{restatable}[Flipping for Modal Logic]{theorem}{thmFlippingModal}\label{thm:flippingModal}
  Let $\sysY\in\set{\K,\KT,\Kfour,\Sfour}$ and let $H,F,J,K$ be formulas.\\[-1.5ex]
  If there is a derivation
  \nosmash{\scalebox{1}{$\od{\odd{\odd{\odh{H}}{\derib}{
        J}{\SKSyU}}{\deri}{
      F\vlor K}{\SKSyD}}
  $}}
  then there is a formula $I$ and a derivation
  \nosmash{\scalebox{1}{$\od{\odd{\odd{\odh{\lneg F\vlan H}}{\deric}{
        I}{\SKSyU}}{\derid}{
      K}{\SKSyD}}
  $}}.
\end{restatable}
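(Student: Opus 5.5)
The proof follows the same two-step pattern as the proof of \Cref{thm:flipping}, with \Cref{lem:splittingKSkc} and \Cref{lem:decompositionModal} in place of their linear logic counterparts. First we establish a \emph{core flipping} for modal logic, the analogue of \Cref{lem:coreflipping}. Given
$\od{\odd{\odd{\odh{H}}{\derib}{J}{\SKSkcU}}{\deri}{F\vlor K}{\KSkc}}$,
we produce a formula $I$ together with derivations $\cneg F\vlan H\proves[\SKSkcU] I$ and $I\proves[\KSkc] K$. This is done by induction on $F$, applying \Cref{lem:splittingKSkc} to $\deri$ according to the main connective of $F$.

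We go through the cases of the induction.
\begin{itemize}
\item If $F=A\vlan B$, splitting gives $\deri_J$, $\deri_K$, $\deri_A$, $\deri_B$. We apply the induction hypothesis to $\deri_A$ (with premise $J_A$ in place of $H$) and to $\deri_B$, obtaining $\cneg A\vlan J_A\to I_A\to K_A$ and $\cneg B\vlan J_B\to I_B\to K_B$. We then build the up-part
$\cneg F\vlan H\to(\cneg A\vlor\cneg B)\vlan\Ccons{J_A\vlan J_B}\to\Ccons{(\cneg A\vlan J_A)\vlor(\cneg B\vlan J_B)}$.
The last step uses \Cref{lem:switch} (second derivation, with $\sU$ and $\eqU$) together with one application of $\sU$ to distribute the disjunction over the conjunction $J_A\vlan J_B$. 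We take $I=\Ccons{I_A\vlor I_B}$, and the down-part is $\Ccons{I_A\vlor I_B}\to\Ccons{K_A\vlor K_B}\to K$ via $\deri_K$.
\item If $F=\vlbox A$ in case (ii), we take $I=\Ccons{\vldia I_A}$. The up-part uses $\pU$ in the form $\vldia\cneg A\vlan\vlbox J_A\to\vldia(\cneg A\vlan J_A)$, after moving $\vldia\cneg A$ into the m-context $\Ccons{\cdot}$ with \Cref{lem:switch}.
\item If $F=\vlbox A$ in case (i), the term $\cneg F$ is simply absorbed. Here $I=\Ccons{\vlff}$, reached from $\vlbox J_A$ by flipping $\deri_A$ (\Cref{prop:DI}.\ref{prop:flip2} gives $\vldia\cneg A\vlan\vlbox J_A\to\vldia(\cneg A\vlan A)$) and then using $\aiU$/$\iU$ together with $\eU$.
\item If $F=\vldia A$, the cases are symmetric and use $\ppU$ or $\pU$, according to the modality $\vlca$ returned by the lemma.
\item The unit and atom cases follow \Cref{lem:splittingKSc}. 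The $\vltt$ case yields $\vlff\vlan\Ccons{\vltt}\to\Ccons{\vlff}$. The atom cases yield either $\cneg a\vlan\Ccons{a}\to\Ccons{\vlff}$ by $\aiU$, or $I=\Ccons{\cneg a}$.
\end{itemize}
Alternatively, and more economically, we observe as in the proof of \Cref{lem:splittingKSkc} that under the translation \eqref{eq:transl}, extended by $\vloc\transl\vlbox$ and $\vlwn\transl\vldia$, the system $\KSkc$ is the fragment of $\LScp$ that has no additives and no $\dzD$, $\topD$, $\withD$. The proof of \Cref{lem:coreflipping}, restricted to the multiplicative and exponential cases, then transfers verbatim. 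The only thing to check is that it never introduces additive rules when the input contains none, which holds because the constructed $I$ and the derivations only use the connectives of $F$, $H$, $J$, $K$.

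With core flipping in hand, the rest copies the proof of \Cref{thm:flipping}. Apply \Cref{lem:decompositionModal} to $\deri$, which yields $\odv{F'}{}{F}{\KSync}$ and $\odv{K'}{}{K}{\KSync}$ below a $\KSkc$-derivation of $F'\vlor K'$. Apply its dual, obtained via \Cref{prop:DI}.\ref{prop:flip2}, to $\derib$, which pushes the $\SKSync\mathord{\uparrow}$ rules to the top as $H\to H'$. Then apply core flipping to $H'\to J\to F'\vlor K'$. Finally, flip the non-core derivation $F'\to F$ into $\cneg F\to\cneg{F'}$ in the up-fragment, using the duality $\tD/\tU$, $\gD/\gU$, $\cD/\cU$, $\wD/\wU$.

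The main obstacle I expect is the core flipping in the modal cases. One must check that \Cref{lem:splittingKSkc} produces exactly the shapes that $\pU$ and $\ppU$ can consume: in particular that $\vldia\cneg A\vlan\vldia J_A$, arising when $\vlca=\vldia$ for $F=\vldia A$, can be turned into $\vldia(\cneg A\vlan J_A)$. Since $\cneg F=\vlbox\cneg A$, the required rule is in fact $\pU$ applied to $\vlbox\cneg A\vlan\vldia J_A$, so the polarities match. For $\vlca=\vlbox$ the required rule is $\ppU$, which is why $\ppD$ and $\ppU$ were kept in the core.
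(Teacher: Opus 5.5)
Your proposal is correct and follows the paper's route. You prove core flipping for $\SKSkc$ by induction on $F$ via \Cref{lem:splittingKSkc} (or by transporting \Cref{lem:coreflipping} along the translation~\eqref{eq:transl}), and then use \Cref{lem:decompositionModal} and flip the non-core parts exactly as in \Cref{thm:flipping}.

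There are two small slips in your explicit case analysis:

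\begin{enumerate}
\item It omits $F=\vlff$, which is trivial with $I=J$. More importantly, it omits $F=A\vlor B$, which has no splitting clause. That case has to be done as in the paper (and your transport route inherits this): apply the induction hypothesis twice, first with $K$ replaced by $B\vlor K$, then with $H$ replaced by $\cneg A\vlan H$.
\item In case (i) for $\vlbox A$, flipping $\deri_A$ gives $\cneg A\to\cneg{J_A}$ in the up-fragment. So from $\vldia(\cneg A\vlan J_A)$ you reach $\vldia(\cneg{J_A}\vlan J_A)$, not $\vldia(\cneg A\vlan A)$, before applying $\iU$ and $\eU$.
\end{enumerate}
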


\begin{proof}
  First, with the help of the Splitting \Cref{lem:splittingKSkc}, we can prove flipping for $\SKSkc$. Then, together with \Cref{lem:decompositionModal}, the proof is exactly the same as for \Cref{thm:flipping}. 
\end{proof}

\begin{corollary}
  For every $\sysY\in\set{\K,\KT,\Kfour,\Sfour}$, \Cref{thm:interpol} holds for $\SKSy$.
\end{corollary}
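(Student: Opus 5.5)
The plan is to follow the proof of \Cref{cor:interpolSLS} step by step, replacing each ingredient for linear logic by its modal counterpart. Let $\sysY\in\set{\K,\KT,\Kfour,\Sfour}$ and let $\deri$ be a derivation from $A$ to $B$ in $\SKSy$.

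First, we turn the derivation into a proof of an implication. We use the modal analogue of \Cref{prop:DI}.\ref{prop:flip1}, namely $A\proves[\SKSy]B$ iff $\proves[\SKSy]\cneg A\vlor B$, which the text asserts carries over to the systems in~\eqref{eq:modalsys}. Concretely, we start from $\vltt$ and apply the general interaction rule $\iD$ to obtain $\cneg A\vlor A$. Then we plug $\deri$ into the right disjunct. This needs the modal version of \Cref{prop:DI}.\ref{prop:ai}: $\iD$ is derivable in $\KSkc$. We prove this by structural induction on the formula, using $\eD$ and $\pD$ for the case $\vlbox C\vlor\vldia\cneg C$. The result is a derivation from $\vltt$ to $\cneg A\vlor B$ in $\SKSy$.

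Second, we make this proof cut-free. By \Cref{thm:cutelimSKSk} (cut elimination for $\SKSy$) we obtain a proof of $\cneg A\vlor B$ in $\KSy$. Every rule of $\KSy$ is a down-rule, so $\KSy=\SKSyD$.

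Third, we apply \Cref{thm:flippingModal}. We take $H=J=\vltt$, so the up-part is empty, together with $F=\cneg A$ and $K=B$. The theorem yields a formula $I$ and a derivation consisting of an $\SKSyU$-part from $\ccneg A\vlan\vltt$ to $I$, followed by an $\SKSyD$-part from $I$ to $B$. We then remove the bureaucracy: $\ccneg A=A$ syntactically, and $A\vlan\vltt\eqan A$. The step from $A$ to $A\vlan\vltt$ is a single $\eqU$ instance, and $\eqU$ is an up-rule, so we can prepend it to the up-part. This gives the required decomposition with premise $A$ and interpolant $I$.

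The main obstacle is the first step. It depends on the modal analogue of \Cref{prop:DI}.\ref{prop:flip1}, which the paper only asserts follows trivially. The proof proposed here goes as follows. For the ``only if'' direction, we derive $\iD$ in $\KSkc$ as described, then compose with $\deri$ inside the context $\cneg A\vlor\cdot$, which is allowed because deep inference permits rules in arbitrary contexts.

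For the converse direction, which the argument does not actually need, one would apply $\iU$ together with the switch rule $\sD$ instead. The derivability of $\iU$ in $\SKSkc$ is the exact dual of the derivability of $\iD$, using $\ppU$, $\pU$ and $\eU$.

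Once these points are checked, the remaining steps are direct citations of \Cref{thm:cutelimSKSk} and \Cref{thm:flippingModal}.
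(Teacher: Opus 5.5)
Your proposal is correct and is essentially the paper's proof, which just repeats the argument of \Cref{cor:interpolSLS}. That argument turns $A\proves[\SKSy]B$ into $\proves[\SKSy]\cneg A\vlor B$, eliminates cuts with \Cref{thm:cutelimSKSk}, and applies \Cref{thm:flippingModal} with $H=J=\vltt$, $F=\cneg A$, $K=B$. Your explicit $\eqU$ step from $A$ to $A\vlan\vltt$ makes precise a bookkeeping detail the paper leaves implicit; also, in the converse direction you do not need, cut at a modality uses only $\pU$ and $\eU$, not $\ppU$.
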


\begin{proof}
  The proof is the same as for \Cref{cor:interpolSLS}.
\end{proof}

\section{Conclusion and Future Work}
\label{sec:conclusion}


In \cite{girard:87:a}, Girard argues that the lack of modularity is one of the main technical limitations in current proof theory. Even though his statement concerns mainly cut elimination, where a minor change in the proof system means that one has to redo the whole cut elimination proof from scratch, this is also true for all traditional proof theoretical methods of proving interpolation. And even after four decades, not much progress has been made. 

However, with the use of deep inference in this paper, the proof of interpolation becomes \emph{modular}. The key ingredient is the separation of the proof system into  \emph{core} and \emph{non-core}, such that (i) every derivation is decomposed into a core-part and a non-core part, and (ii) the core-fragment of the proof system satisfies a generalized splitting lemma.

Traditional sequent systems do not allow this separation, as a typical sequent calculus rule is a mixture of core and non-core parts, as for example the additive conjunction rule:
$$
\vliinf{\vlwi}{}{\sqn{A\vlwi B,\Gamma}}{\sqn{A,\Gamma}}{\sqn{B,\Gamma}}
\qquato
\vlinf{\dD}{}{(A\vlwi B)\vlpa(C\vlpl D)}{(A\vlpa C)\vlwi(B\vlpa D)}
\quad
+
\quad
\vlinf{\cD}{}{A}{A\vlpl A}
$$
This raises the question whether this also possible for other logics, with or without an existing sequent calculus representation.

\subparagraph*{Other Modal Logics}
Deep inference is not yet very well developed for modal logics. For this reason we considered in this paper only the modal logics $\K,\KT,\Kfour,\Sfour$. But we conjecture that our method can also be applied to other modal logics, for which there is no sequent calculus. However, for investigating this, we first need to be able to prove cut elimination inside the deep inference system for these modal logics. 

\subparagraph*{First-Order Logic}
Craig's original proof~\cite{craig:57:interpol} was mainly concerned with first-order logic whereas we are here only concerned with propositional logics. So, in some sense our work here is orthogonal to Craig's proof. But there are some striking similarities, As Craig uses so-called \emph{linear inferences}~\cite{craig:57:linear}, that can be seen as a precursor to deep inference. So, it is only natural to ask if we can extend our approach to first order logic. There are already several different deep inference systems for first-order logic~\cite{brunnler:phd,brunnler:06:herbrand,ralph:phd,HSW:focp}, but it is not clear which of them (if any) allows for a separation between core and non-core.

\subparagraph*{Relation to Combinatorial Proofs}
In the case of classical logic (see~\Cref{sec:classical}), the separation into \emph{core} and \emph{non-core} coincides with the separation \emph{linear} and \emph{resource management} part in combinatorial proofs~\cite{hughes:pws,str:FSCD17,omi:str:wollic22,ralph:str:tableaux19}. Something similar can be observed for combinatorial proofs for modal logics~\cite{acc:str:modal}. This raises several questions. For example, could combinatorial proofs for first-order logic~\cite{HSW:focp} help to find the right core/non-core systems to apply our method to first-order logic? Or could our core/non-core separation of system $\SLSp$ help to design a new form of combinatorial proofs for linear logic? Or is there even a closer relationship  between the two properties of having combinatorial proofs and having interpolation for a logic?

\subparagraph*{General Criteria for Interpolation}
There are many logics that do not enjoy interpolation, for example the modal logic $\Kfourthree$~\cite{maksimova:79}. This opens two possible research directions: (i) can we use deep inference to find general criteria, when a logic has interpolation? And (ii) can we use our method to investigate the \emph{interpolant existence problem}~\cite{KWZ:InterpolSeparate}.


\bibliography{lutzrefs.bib}

\begin{thebibliography}{10}

\bibitem{acc:str:modal}
Matteo Acclavio and Lutz Stra{\ss}burger.
\newblock On combinatorial proofs for modal logic.
\newblock In Serenella Cerrito and Andrei Popescu, editors, {\em Automated
  Reasoning with Analytic Tableaux and Related Methods - 28th International
  Conference, {TABLEAUX} 2019, London, UK, September 3-5, 2019, Proceedings},
  volume 11714 of {\em Lecture Notes in Computer Science}, pages 223--240.
  Springer, 2019.
\newblock \href {https://doi.org/10.1007/978-3-030-29026-9\_13}
  {\path{doi:10.1007/978-3-030-29026-9\_13}}.

\bibitem{ASZ:BVcat}
Matteo Acclavio, Lutz Stra{\ss}burger, and Vladimir Zamdzhiev.
\newblock Proof identity and categorical models of {BV}.
\newblock {\em CoRR}, abs/2604.25501, 2026.
\newblock to appear in FSCD 2026.
\newblock URL: \url{https://doi.org/10.48550/arXiv.2604.25501}, \href
  {https://arxiv.org/abs/2604.25501} {\path{arXiv:2604.25501}}, \href
  {https://doi.org/10.48550/ARXIV.2604.25501}
  {\path{doi:10.48550/ARXIV.2604.25501}}.

\bibitem{AGR:cls2017}
Andrea Aler~Tubella, Alessio Guglielmi, and Benjamin Ralph.
\newblock {Removing Cycles from Proofs}.
\newblock In Valentin Goranko and Mads Dam, editors, {\em 26th EACSL Annual
  Conference on Computer Science Logic (CSL 2017)}, volume~82 of {\em Leibniz
  International Proceedings in Informatics (LIPIcs)}, pages 9:1--9:17,
  Dagstuhl, Germany, 2017. Schloss Dagstuhl -- Leibniz-Zentrum f{\"u}r
  Informatik.
\newblock URL:
  \url{https://drops.dagstuhl.de/entities/document/10.4230/LIPIcs.CSL.2017.9},
  \href {https://doi.org/10.4230/LIPIcs.CSL.2017.9}
  {\path{doi:10.4230/LIPIcs.CSL.2017.9}}.

\bibitem{benedikt:26}
Michael Benedikt.
\newblock Interpolation and query rewriting, 2026.
\newblock URL: \url{https://arxiv.org/abs/2606.15737}, \href
  {https://arxiv.org/abs/2606.15737} {\path{arXiv:2606.15737}}.

\bibitem{BCI:interpolation}
Nick Bezhanishvili, Balder ten Cate, and Rosalie Iemhoff.
\newblock Six proofs of interpolation for the modal logic {K}, 2025.
\newblock URL: \url{https://arxiv.org/abs/2510.16398}, \href
  {https://arxiv.org/abs/2510.16398} {\path{arXiv:2510.16398}}.

\bibitem{brunnler:phd}
Kai Br{\"u}nnler.
\newblock {\em Deep Inference and Symmetry for Classical Proofs}.
\newblock PhD thesis, Tech\-ni\-sche Uni\-ver\-si\-t{\"a}t Dres\-den, 2003.

\bibitem{brunnler:06:herbrand}
Kai Br{\"u}nnler.
\newblock Cut elimination inside a deep inference system for classical
  predicate logic.
\newblock {\em Studia Logica}, 82(1):51--71, 2006.

\bibitem{brunnler:06:locality}
Kai Br{\"u}nnler.
\newblock Locality for classical logic.
\newblock {\em Notre Dame Journal of Formal Logic}, 47(4):557--580, 2006.
\newblock URL: \url{http://www.iam.unibe.ch/~kai/Papers/LocalityClassical.pdf}.

\bibitem{brunnler:tiu:01}
Kai Br{\"u}nnler and Alwen~Fernanto Tiu.
\newblock A local system for classical logic.
\newblock In R.~Nieuwenhuis and A.~Voronkov, editors, {\em {LPAR} 2001}, volume
  2250 of {\em LNAI}, pages 347--361. Springer, 2001.

\bibitem{bru:gug:PC-DI}
Paola Bruscoli and Alessio Guglielmi.
\newblock On the proof complexity of deep inference.
\newblock {\em ACM Transactions on Computational Logic}, 10(2):1--34, 2009.
\newblock Article 14.

\bibitem{CGRS:foccos-long}
Kaustuv Chaudhuri, Nicolas Guenot, Mikheil Rukhaia, and Lutz Strassburger.
\newblock {Splitting and Focusing for Full Propositional Linear Logic in the
  Calculus of Structures}.
\newblock preprint, 2017.
\newblock URL: \url{https://inria.hal.science/hal-05493721}.

\bibitem{CGS:foccos}
Kaustuv Chaudhuri, Nicolas Guenot, and Lutz Stra{\ss}burger.
\newblock The focused calculus of structures.
\newblock In Marc Bezem, editor, {\em CSL'11}, volume~12 of {\em LIPIcs}, pages
  159--173. Schloss Dagstuhl -- Leibniz-Zentrum fuer Informatik, 2011.

\bibitem{craig:57:linear}
William Craig.
\newblock Linear reasoning: A new form of the {H}erbrand-{G}entzen theorem.
\newblock {\em The Journal of Symbolic Logic}, 22:250--268, 1957.

\bibitem{craig:57:interpol}
William Craig.
\newblock Three uses of the {H}erbrand-{G}entzen theorem in relating model
  theory and proof theory.
\newblock {\em The Journal of Symbolic Logic}, 22:269--285, 1957.

\bibitem{fitting:kuznets:15}
Melvin Fitting and Roman Kuznets.
\newblock Modal interpolation via nested sequents.
\newblock {\em Ann. Pure Appl. Logic}, 166(3):274--305, 2015.

\bibitem{girard:87}
Jean-Yves Girard.
\newblock Linear logic.
\newblock {\em Theoretical Computer Science}, 50:1--102, 1987.

\bibitem{girard:87:a}
Jean-Yves Girard.
\newblock {\em Proof Theory and Logical Complexity, Volume I}, volume~1 of {\em
  Studies in Proof Theory}.
\newblock Bibliopolis, edizioni di filosofia e scienze, 1987.

\bibitem{gug:SIS}
Alessio Guglielmi.
\newblock A system of interaction and structure.
\newblock {\em ACM Transactions on Computational Logic}, 8(1):1--64, 2007.

\bibitem{GGP:open}
Alessio Guglielmi, Tom Gundersen, and Michel Parigot.
\newblock {A Proof Calculus Which Reduces Syntactic Bureaucracy}.
\newblock In Christopher Lynch, editor, {\em Proceedings of the 21st
  International Conference on Rewriting Techniques and Applications}, volume~6
  of {\em Leibniz International Proceedings in Informatics (LIPIcs)}, pages
  135--150, Dagstuhl, Germany, 2010. Schloss Dagstuhl -- Leibniz-Zentrum
  f{\"u}r Informatik.
\newblock URL:
  \url{https://drops.dagstuhl.de/entities/document/10.4230/LIPIcs.RTA.2010.135},
  \href {https://doi.org/10.4230/LIPIcs.RTA.2010.135}
  {\path{doi:10.4230/LIPIcs.RTA.2010.135}}.

\bibitem{gug:str:01}
Alessio Guglielmi and Lutz Stra{\ss}burger.
\newblock Non-commutativity and {MELL} in the calculus of structures.
\newblock In Laurent Fribourg, editor, {\em Computer Science Logic, {CSL}
  2001}, volume 2142 of {\em LNCS}, pages 54--68. Springer-Verlag, 2001.

\bibitem{hein:stewart:05}
Robert Hein and Charles Stewart.
\newblock Purity through unravelling.
\newblock In Paola Bruscoli, Fran{\c c}ois Lamarche, and Charles Stewart,
  editors, {\em Structures and Deduction}, pages 126--143. Technische
  Universit{\"a}t Dresden, 2005.
\newblock ICALP Workshop. ISSN 1430-211X.
\newblock URL: \url{http://bitschnitzer.de/rh+cas--sd05.pdf}.

\bibitem{hughes:pws}
Dominic Hughes.
\newblock Proofs {W}ithout {S}yntax.
\newblock {\em Annals of Mathematics}, 164(3):1065--1076, 2006.

\bibitem{hughes:invar}
Dominic Hughes.
\newblock Towards {H}ilbert's 24\({}^{\mbox{th}}\) problem: Combinatorial proof
  invariants: (preliminary version).
\newblock {\em Electr. Notes Theor. Comput. Sci.}, 165:37--63, 2006.

\bibitem{HSW:focp}
Dominic J.~D. Hughes, Lutz Stra{\ss}burger, and Jui{-}Hsuan Wu.
\newblock Combinatorial proofs and decomposition theorems for first-order
  logic.
\newblock In {\em 36th Annual {ACM/IEEE} Symposium on Logic in Computer
  Science, {LICS} 2021, Rome, Italy, June 29 - July 2, 2021}, pages 1--13.
  {IEEE}, 2021.
\newblock \href {https://doi.org/10.1109/LICS52264.2021.9470579}
  {\path{doi:10.1109/LICS52264.2021.9470579}}.

\bibitem{JKK:25:interpol}
Jean~Christoph Jung, Patrick Koopmann, and Matthias Knorr.
\newblock Interpolation in knowledge representation, 2025.
\newblock URL: \url{https://arxiv.org/abs/2512.08833}, \href
  {https://arxiv.org/abs/2512.08833} {\path{arXiv:2512.08833}}.

\bibitem{KWZ:InterpolSeparate}
Agi Kurucz, Frank Wolter, and Michael Zakharyaschev.
\newblock From interpolating formulas to separating languages and back again,
  2025.
\newblock URL: \url{https://arxiv.org/abs/2508.12805}, \href
  {https://arxiv.org/abs/2508.12805} {\path{arXiv:2508.12805}}.

\bibitem{kuznets:lellmann:21}
Roman Kuznets and Björn Lellmann.
\newblock Interpolation for intermediate logics via injective nested sequents.
\newblock {\em Journal of Logic and Computation}, 31(3):797--831, 04 2021.
\newblock \href {https://doi.org/10.1093/logcom/exab015}
  {\path{doi:10.1093/logcom/exab015}}.

\bibitem{lyndon:59}
Roger~C. Lyndon.
\newblock An interpolation theorem in the predicate calculus.
\newblock {\em Pacific Journal of Mathematics}, 9:129--142, 1959.
\newblock URL: \url{https://doi.org/10.2140/pjm.1959.9.129}.

\bibitem{maehara:interpolation}
Shoji Maehara.
\newblock On the interpolation theorem of {C}raig.
\newblock {\em S{\^u}gaku}, 12(4), 1960.

\bibitem{maksimova:79}
L.L. Maksimova.
\newblock Interpolation theorems in modal logics and amalgamable varieties of
  topological boolean algebras.
\newblock {\em Algebra and Logic}, 18, 1979.
\newblock \href {https://doi.org/doi.org/10.1007/BF01673502}
  {\path{doi:doi.org/10.1007/BF01673502}}.

\bibitem{omi:str:wollic22}
Giti Omidvar and Lutz Stra{\ss}burger.
\newblock Combinatorial flows as bicolored atomic flows.
\newblock In Agata Ciabattoni, Elaine Pimentel, and Ruy J. G.~B. de~Queiroz,
  editors, {\em Logic, Language, Information, and Computation - 28th
  International Workshop, WoLLIC 2022, Ia{\c{s}}i, Romania, September 20-23,
  2022, Proceedings}, volume 13468 of {\em Lecture Notes in Computer Science},
  pages 141--157. Springer, 2022.
\newblock \href {https://doi.org/10.1007/978-3-031-15298-6\_9}
  {\path{doi:10.1007/978-3-031-15298-6\_9}}.

\bibitem{poggiolesi:11}
Francesca Poggiolesi.
\newblock {\em Gentzen Calculi for Modal Propositional Logic}.
\newblock Springer, 2011.
\newblock \href {https://doi.org/doi.org/10.1007/978-90-481-9670-8}
  {\path{doi:doi.org/10.1007/978-90-481-9670-8}}.

\bibitem{ralph:phd}
Benjamin Ralph.
\newblock {\em Modular Normalisation of Classical Proofs}.
\newblock PhD thesis, University of Bath, 2019.

\bibitem{ralph:str:tableaux19}
Benjamin Ralph and Lutz Stra{\ss}burger.
\newblock Towards a combinatorial proof theory.
\newblock In Serenella Cerrito and Andrei Popescu, editors, {\em Automated
  Reasoning with Analytic Tableaux and Related Methods - 28th International
  Conference, {TABLEAUX} 2019, London, UK, September 3-5, 2019, Proceedings},
  volume 11714 of {\em Lecture Notes in Computer Science}, pages 259--276.
  Springer, 2019.
\newblock \href {https://doi.org/10.1007/978-3-030-29026-9\_15}
  {\path{doi:10.1007/978-3-030-29026-9\_15}}.

\bibitem{roorda:94}
Dirk Roorda.
\newblock Interpolation in fragments of classical linear logic.
\newblock {\em Journal of Symbolic Logic}, 59(2):419--444, 1994.
\newblock \href {https://doi.org/10.2307/2275398} {\path{doi:10.2307/2275398}}.

\bibitem{rummer:26}
Philipp Rümmer.
\newblock Craig interpolation in program verification, 2026.
\newblock URL: \url{https://arxiv.org/abs/2602.08532}, \href
  {https://arxiv.org/abs/2602.08532} {\path{arXiv:2602.08532}}.

\bibitem{saurin:25}
Alexis Saurin.
\newblock {Interpolation as Cut-Introduction: On the Computational Content of
  Craig-Lyndon Interpolation}.
\newblock In Maribel Fern\'{a}ndez, editor, {\em 10th International Conference
  on Formal Structures for Computation and Deduction ({FSCD} 2025)}, volume 337
  of {\em Leibniz International Proceedings in Informatics (LIPIcs)}, pages
  32:1--32:21, Dagstuhl, Germany, 2025. Schloss Dagstuhl -- Leibniz-Zentrum
  f{\"u}r Informatik.
\newblock URL:
  \url{https://drops.dagstuhl.de/entities/document/10.4230/LIPIcs.FSCD.2025.32},
  \href {https://doi.org/10.4230/LIPIcs.FSCD.2025.32}
  {\path{doi:10.4230/LIPIcs.FSCD.2025.32}}.

\bibitem{stewart:stouppa:05}
Charles Stewart and Phiniki Stouppa.
\newblock A systematic proof theory for several modal logics.
\newblock In R.~A. Schmidt, I.~Pratt-Hartmann, M.~Reynolds, and H.~Wansing,
  editors, {\em Advances in Modal Logic, Volume 5}, pages 309--333. King's
  College Publications, 2005.

\bibitem{str:02}
Lutz {Stra\ss burger}.
\newblock A local system for linear logic.
\newblock In Matthias Baaz and Andrei Voronkov, editors, {\em Logic for
  Programming, Artificial Intelligence, and Reasoning, LPAR 2002}, volume 2514
  of {\em LNAI}, pages 388--402. Springer-Verlag, 2002.
\newblock URL:
  \url{http://www.lix.polytechnique.fr/~lutz/papers/LocSysLinLog.pdf}.

\bibitem{dissvonlutz}
Lutz Stra{\ss}burger.
\newblock {\em Linear Logic and Noncommutativity in the Calculus of
  Structures}.
\newblock PhD thesis, Tech\-ni\-sche Uni\-ver\-si\-t{\"a}t Dres\-den, 2003.

\bibitem{str:MELL}
Lutz Stra{\ss}burger.
\newblock {MELL} in the {C}alculus of {S}tructures.
\newblock {\em Theoretical Computer Science}, 309(1--3):213--285, 2003.

\bibitem{str:FSCD17}
Lutz Stra{\ss}burger.
\newblock Combinatorial flows and their normalisation.
\newblock In Dale Miller, editor, {\em 2nd International Conference on Formal
  Structures for Computation and Deduction, {FSCD} 2017, September 3-9, 2017,
  Oxford, {UK}}, volume~84 of {\em LIPIcs}, pages 31:1--31:17. Schloss Dagstuhl
  - Leibniz-Zentrum fuer Informatik, 2017.

\bibitem{SIS-IV}
Lutz Stra{\ss}burger and Alessio Guglielmi.
\newblock A system of interaction and structure {IV}: The exponentials and
  decomposition.
\newblock {\em ACM Trans. Comput. Log.}, 12(4):23, 2011.

\bibitem{troelstra:schwichtenberg:00}
Anne~Sjerp Troelstra and Helmut Schwichtenberg.
\newblock {\em Basic Proof Theory}.
\newblock Cambridge University Press, second edition, 2000.

\bibitem{ATS:esslli2019}
Andrea~Aler Tubella and Lutz Stra{\ss}burger.
\newblock Introduction to deep inference.
\newblock Lecture notes for ESSLLI'19, 2019.
\newblock URL: \url{https://hal.inria.fr/hal-02390267}.

\bibitem{GJK:interpolation}
Iris van~der Giessen, Raheleh Jalali, and Roman Kuznets.
\newblock Interpolation in proof theory, 2026.
\newblock URL: \url{https://arxiv.org/abs/2602.16318}, \href
  {https://arxiv.org/abs/2602.16318} {\path{arXiv:2602.16318}}.

\bibitem{Wansing:02}
Heinrich Wansing.
\newblock Sequent systems for modal logics.
\newblock In D.~M. Gabbay and F.~Guenthner, editors, {\em Handbook of
  Philosophical Logic: Volume 8}, pages 61--145. Springer Netherlands,
  Dordrecht, 2002.
\newblock \href {https://doi.org/10.1007/978-94-010-0387-2_2}
  {\path{doi:10.1007/978-94-010-0387-2_2}}.

\end{thebibliography}

\clearpage

\appendix

\section{Additional Material for Section~\ref{sec:prelim}}\label{app:prelim}

This appendix is mainly for the convenience of the reader, as the material is already available in the literature (e.g., \cite{dissvonlutz,str:02,str:MELL,ATS:esslli2019,CGS:foccos}). The same is true for \Cref{app:classical}. Only \Cref{app:splitting}, \Cref{app:flipping}, and part of \Cref{app:modal} contain novel material.

\propAI*

\begin{proof}
  \begin{enumerate}
  \item 
  For $\iD$, We proceed by induction on $A$, using the following derivation:
  $$
  \od{\odi{\odi{\odh{\vlone}}{
        \tensD}{
        \odv{\vlone}{\deri_A}{\lneg A\vlpa A}{\LSc}
        \vlte
        \odv{\vlone}{\deri_B}{\lneg B\vlpa B}{\LSc}}{}}{
      \sD}{(\lneg A\vlte \lneg B)\vlpa (A\vlpa B)}{}}
  \qquad
  \od{\odi{\odi{\odh{\vlone}}{
        \withD}{
        \odv{\vlone}{\deri_A}{\lneg A\vlpa A}{\LSc}
        \vlwi
        \odv{\vlone}{\deri_B}{\lneg B\vlpa B}{\LSc}}{}}{
      \sD}{(\lneg A\vlwi \lneg B)\vlpa (A\vlpl B)}{}}
  \qquad
  \od{\odi{\odi{\odh{\vlone}}{
        \eD}{
        \vloc
        \odv{\vlone}{\deri_A}{\lneg A\vlpa A}{\LSc}}{}}{
      \sD}{(\loc\lneg A\vlpa \vlwn A)}{}}
  $$
  The case for $\iU$ is dual.
\item 
  An instance of $\vlinf{\rU}{}BA$ can be derived with the following derivation:
  \begin{equation}
    \label{eq:rU}
    \hfill
    \od{\odi{\odi{\odi{\odh{A}}{
            \eqU}{
            \odn{A}{\eqD}{A\vlpa \vlbot}{}
            \vlte
            \odn{\vlone}{\iD}{\lneg B\vlpa B}{}}{}}{
          \sD}{
          \odn{A\vlte\odn{\lneg B}{\rD}{\lneg A}{}}{\iU}{\vlbot}{}
          \vlpa(\vlbot\vlpa B)}{}}{
        \eqD}{B}{}}
    \hfill
  \end{equation}
\item
  This follows immediately from \Cref{prop:ai} and the two derivations:
  $$
  \odn{\vlone}{\iD}{\lneg A\vlpa\odv{A}{}{B}{\sysX}}{}
  \qquad
  \qquad
  \od{\odi{\odi{\odi{\odh{\vlone}}{
          \eqU}{
          \odn{A}{\eqD}{A\vlpa\vlbot}{}
          \vlte
          \odv{\vlone}{}{\lneg A\vlpa B}{\sysX}}{}}{
        \sD}{
        \odn{A\vlte\lneg A}{\iU}{\vlbot}{}
        \vlpa (\vlbot\vlpa B)}{}}{
      \eqD}{B}{}}
  $$
\item
  This follows trivially by dualizing every inference rule instance in the derivation.
\item 
  This follows immediately from \Cref{prop:ai} and \Cref{prop:updown}:\\
  Given a derivation of $A$ in $\sysX\cup\set{\iU}$, we can with \Cref{prop:ai} obtain a derivation of $A$ in $\sysSX$ because $\SLScU$ is contained in all of them. \\
  Conversely, Given a derivation of $A$ in $\sysSX$, we can by \Cref{prop:updown} obtain a derivation in $\sysX\cup\set{\iU,\eqU}$. By inspecting the derivation in~\eqref{eq:rU}, we see that the only use of $\eqU$ is in the form
  $$
  \vlinf{\eqU}{}{A\vlte\vlone}{A}
  $$
  This can be straightforwardly permuted up, until it is of the form
  $$
  \vlinf{\eqU}{}{\vlone\vlte\vlone}{\vlone}
  $$
  when it becomes an instance of $\tensD$.
  \qedhere
  \end{enumerate}
\end{proof}

\lemSwitch*

\begin{proof}
  We proceed by induction on $\Chole$. The case when $\Chole=\conhole$
  is trivial. The other cases for $\deri$ are:
  $$
  \od{\odi{\odi{\odi{\odh{
            D\vlte\odv{\Cpcons{A\vlpa B}}{\deri'}{A\vlpa\Cpcons{B}}{\set{\eqD,\sD}}}}{
          \eqD}{
          (D\vlpa\vlbot)\vlte(\Cpcons{B}\vlpa A')}{}}{
        \sD}{
        (D\vlte\Cpcons{B})\vlpa (\vlbot\vlpa A')}{}}{
      \eqD}{A\vlpa(D\vlte\Cpcons{B})}{}}
  \hskip4em
  \od{\odi{\odi{\odi{\odh{
            \odv{\Cpcons{A\vlpa B}}{\deri'}{A\vlpa\Cpcons{B}}{\set{\eqD,\sD}}\vlte D}}{
          \eqD}{
          (\Cpcons{B}\vlpa A')\vlte(D\vlpa\vlbot)}{}}{
        \sD}{
        (\Cpcons{B}\vlte D)\vlpa (A'\vlpa\vlbot)}{}}{
      \eqD}{A\vlpa(\Cpcons{B}\vlte D)}{}}
  $$
  $$
  \odn{D\vlpa\odv{\Cpcons{A\vlpa B}}{\deri'}{A\vlpa\Cpcons{B}}{\set{\eqD,\sD}}}{
    \eqD}{A\vlpa(D\vlpa\Cpcons{B})}{}
  \hskip6em
  \odn{\odv{\Cpcons{A\vlpa B}}{\deri'}{A\vlpa\Cpcons{B}}{\set{\eqD,\sD}}\vlpa D}{
    \eqD}{A\vlpa(\Cpcons{B}\vlpa D)}{}
  $$
  The cases for $\derib$ are dual.
\end{proof}

\section{Proofs for Section~\ref{sec:splitting}}\label{app:splitting}

\begin{definition}
  We define the size $\sizeof F$ of a formula $F$ inductively as follows:
  $$
  \sizeof\vlbot=\sizeof\vlone=0
  \qquad\qquad
  \sizeof{a}=\sizeof{\lneg a}=\sizeof\vltop=\sizeof\vlzer=1
  \qquad\qquad
  \sizeof{\vloc A}=\sizeof{\vlwn A}=1+\sizeof{ A}
  $$
  $$
  \sizeof{A\vlte B}=\sizeof{A\vlpa B}=
  \sizeof{A\vlwi B}=\sizeof{A\vlpl B}=\sizeof{A}+\sizeof B
  $$
\end{definition}

\lemSplit*
\begin{proof}
  Let $\deri$ be the given derivation for $\nnodv{J}{\deri}{X\vlpa K}{\LScp}$.
  We prove all statements simultaneously by induction on the lexicographic pair $\tuple{\sizeof\deri,\sizeof{X\vlpa K}}$ where $\sizeof\deri$ is the number of inference rule instances in $\deri$ that are not $\eqD$ or $\eqU$, and where $\sizeof {X\vlpa K}$ is the size of the conclusion of the derivation $\deri$. We now proceed by case analysis on $X$, always picking some bottom-most rule instance in $\deri$, and making another case analysis on that rule instance.
  \begin{enumerate}
  \item Let $\deri$ be the given derivation for $\nnodv{J}{\deri}{(A\vlte B)\vlpa K}{\LScp}$.
    \begin{enumerate}
    \item if $J=(A\vlte B)\vlpa K$ (i.e., $\sizeof\deri=0$) then we let $\Chole=\conhole\vlpa K$ and $J_A=J_B=\vlone$ and $K_A=K_B=\vlbot$.
    \item\label{s:trivial} If $\deri$ is of shape
      $$
      \odv{J}{\deri'}{(\odn{A'}{\rD}{A}{}\vlte B)\vlpa K}{\LScp}
      \quor
      \odv{J}{\deri'}{(A\vlte\odn{B'}{\rD}{B}{})\vlpa K}{\LScp}
      \quor
      \odv{J}{\deri'}{(A\vlte B)\vlpa \odn{K'}\rD K{}}{\LScp}
      $$
      then we can immediately proceed by induction hypothesis.
    \item\label{s:commsplit}
      If $\deri$ is of shape\footnote{For better readability, we omit some instances of $\eqD$.}
      $$\odv{J}{\deri'}{\odn{
        ((A\vlte B)\vlpa K_1\vlpa K_2)\vlte(K_3\vlpa K_4)}\sD{
        (A\vlte B)\vlpa K_1 \vlpa (K_2\vlte K_3)\vlpa K_4}{}\vlpa K_5}{\LScp}
      $$
      we apply the induction hypothesis to $\deri'$ and obtain
      $$
      \odv{J}{\deri'_J}{\Ccons[1]{J_L\vlte J_R}}{\SLScpU}
      \quand
      \odv{\Ccons[1]{K_L\vlpa K_R}}{\deri'_K}{K_5}{\LScp}
      \quand
      \odv{J_L}{\deri_L}{(A\vlte B)\vlpa K_1\vlpa K_2\vlpa K_L}{\LScp}
      \quand
      \odv{J_R}{\deri_R}{K_3\vlpa K_4\vlpa K_R}{\LScp}
      $$
      Then we apply the induction hypothesis again to $\deri_L$ and obtain
      $$
      \odv{J_L}{\deri''_J}{\Ccons[2]{J_A\vlte J_B}}{\SLScpU}
      \quand
      \odv{\Ccons[2]{K_A\vlpa K_B}}{\deri''_K}{K_1\vlpa K_2\vlpa K_L}{\LScp}
      \quand
      \odv{J_A}{\deri_A}{A\vlpa K_A}{\LScp}
      \quand
      \odv{J_B}{\deri_B}{B\vlpa K_B}{\LScp}
      $$
      We let $\Chole=\Ccons[1]{\Ccons[2]{\cdot}\vlte J_R}$ and construct $\deri_J$ and $\deri_K$ as follows:
      $$
      \odv{J}{\deri'_J}{
        \CCons[1]{
          \odv{J_L}{\deri''_J}{\Ccons[2]{J_A\vlte J_B}}{\SLScpU}
          \vlte J_R}}{\SLScpU}
      \qquand
      \odv{\CCons[1]{\odn{
            \odv{\Ccons[2]{K_A\vlpa K_B}}{\deri''_K}{K_1\vlpa K_2\vlpa K_L}{\LScp}
            \vlte
            \odv{J_R}{\deri_R}{K_3\vlpa K_4\vlpa K_R}{\LScp}
          }\sD{
        K_1 \vlpa (K_2\vlte K_3)\vlpa K_4\vlpa K_L\vlpa K_R}{}
      }}{\derisw}{
        K_1 \vlpa (K_2\vlte K_3)\vlpa K_4\vlpa
        \odv{\Ccons[1]{K_L\vlpa K_R}}{\deri'_K}{K_5}{\LScp}}{\set{\eqD,\sD}}
      $$
      where $\derisw$ exist by \Cref{lem:switch}. 

    \item
      If $\deri$ is of shape
      $$\odv{J}{\deri'}{\odn{
        (A\vlpa K_1)\vlte(B\vlpa K_2)}\sD{
        (A\vlte B)\vlpa (K_1\vlpa K_2)}{}\vlpa K_3}{\LScp}
      $$
      Then we obtain by applying the induction hypothesis to $\deri'$ the following:
      $$
      \odv{J}{\deri_J}{\Ccons{J_A\vlte J_B}}{\SLScpU}
      \quand
      \odv{\Ccons{K_L\vlpa K_R}}{\deri'_K}{K_3}{\LScp}
      \quand
      \odv{J_A}{\deri_A}{A\vlpa K_1\vlpa K_L}{\LScp}
      \quand
      \odv{J_B}{\deri_B}{B\vlpa K_2\vlpa K_R}{\LScp}
      $$
      and we can let $K_A=K_1\vlpa K_L$ and $K_B=K_2\vlpa K_R$ and construct the derivation $\deri_K$ as follows:
      $$
      \odv{
        \Ccons{K_1\vlpa K_L\vlpa K_2\vlpa K_R}}{\derisw}{
        K_1\vlpa K_2\vlpa\odv{\Ccons{K_L\vlpa K_R}}{\deri'_K}{K_3}{\LScp}}{\set{\eqD,\sD}}
      $$
      where $\derisw$ exist by \Cref{lem:switch}.
    \item 
      If $\deri$ is of shape
      $$\odv{J}{\deri'}{\odn{
        \vlone}\tensD{
        \vlone\vlte\vlone}{}\vlpa K}{\LScp}
      $$
      then we can apply Case~\ref{s:one} to $\deri'$ and conclude by letting $J_A=J_B=\vlone$ and $K_A=K_B=\vlbot$.
    \end{enumerate}
  \item Let $\deri$ be the given derivation for $\nnodv{J}{\deri}{(A\vlwi B)\vlpa K}{\LScp}$.
    \begin{enumerate}
    \item if $J=(A\vlwi B)\vlpa K$ (i.e., $\sizeof\deri=0$) then we are in case (i) and let $\Chole=\conhole\vlpa K$ and $J_A=A$ and  $J_B=B$.
    \item If $\deri$ is of shape
      $$
      \odv{J}{\deri'}{(\odn{A'}{\rD}{A}{}\vlwi B)\vlpa K}{\LScp}
      \quor
      \odv{J}{\deri'}{(A\vlwi\odn{B'}{\rD}{B}{})\vlpa K}{\LScp}
      \quor
      \odv{J}{\deri'}{(A\vlwi B)\vlpa \odn{K'}\rD K{}}{\LScp}
      $$
      then, as in Case~\ref{s:trivial}, we can immediately proceed by induction hypothesis.
    \item
      If $\deri$ is of shape
      $$\odv{J}{\deri'}{\odn{
        ((A\vlwi B)\vlpa K_1\vlpa K_2)\vlte(K_3\vlpa K_4)}\sD{
        (A\vlwi B)\vlpa K_1 \vlpa (K_2\vlte K_3)\vlpa K_4}{}\vlpa K_5}{\LScp}
      $$
      then we proceed analogously to Case~\ref{s:commsplit}.
    \item
      If $\deri$ is of shape
      $$\odv{J}{\deri'}{\odn{
        (A\vlpa K_1)\vlwi(B\vlpa K_2)}\dD{
        (A\vlwi B)\vlpa (K_1\vlpl K_2)}{}\vlpa K_3}{\LScp}
      $$
      Then we obtain by applying the induction hypothesis to $\deri'$ either 
      $$
      \odv{J}{\deri_J}{\Ccons{J_A\vlwi J_B}}{\SLScpU}
      \quand
      \odv{\Ccons{\vlbot}}{\deri'_K}{K_3}{\LScp}
      \quand
      \odv{J_A}{\deri_A}{A\vlpa K_1}{\LScp}
      \quand
      \odv{J_B}{\deri_B}{B\vlpa K_2}{\LScp}
      $$
      or
      $$
      \odv{J}{\deri_J}{\Ccons{J_A\vlwi J_B}}{\SLScpU}
      \quand
      \odv{\Ccons{K_L\vlpl K_R}}{\deri'_K}{K_3}{\LScp}
      \quand
      \odv{J_A}{\deri_A}{A\vlpa K_1\vlpa K_L}{\LScp}
      \quand
      \odv{J_B}{\deri_B}{B\vlpa K_2\vlpa K_R}{\LScp}
      $$
      In both cases we arrive at situation (ii), by letting in the first case $K_A=K_1$ and $K_B=K_2$, and in the second case $K_A=K_1\vlpa K_L$ and $K_B=K_2\vlpa K_R$, where $\deri_k$ is in the first case the derivation on the left, and in the second case the derivation on the right below:
      $$
      \odv{
        \Ccons{K_1\vlpl K_2}}{\derisw}{
        (K_1\vlpl K_2)\vlpa\odv{\Ccons{\vlbot}}{\deri'_K}{K_3}{\LScp}}{\set{\eqD,\sD}}
      \qquor
      \odv{
        \CCons{\odn{(K_1\vlpa K_L)\vlpl (K_2\vlpa K_R)}{\dpD}{
         (K_1\vlpl K_2)\vlpa (K_L\vlpl K_R)}{}}}{\derisw}{
        (K_1\vlpl K_2)\vlpa\odv{\Ccons{K_L\vlpl K_R}}{\deri'_K}{K_3}{\LScp}}{\set{\eqD,\sD}}
      $$
      where in both cases $\derisw$ exist by \Cref{lem:switch}.
    \item 
      If $\deri$ is of shape
      $$\odv{J}{\deri'}{\odn{
        \vlone}\withD{
        \vlone\vlwi\vlone}{}\vlpa K}{\LScp}
      $$
      then we can apply Case~\ref{s:one} to $\deri'$ and arrive at case (i) by letting $\Chole=\conhole\vlpa K$ and $J_A=J_B=\vlone$.
    \end{enumerate}
  \item Let $\deri$ be the given derivation for $\nnodv{J}{\deri}{(A\vlpl B)\vlpa K}{\LScp}$.
    \begin{enumerate}
    \item if $J=(A\vlpl B)\vlpa K$ (i.e., $\sizeof\deri=0$) then we are in case (i) and let $\Chole=\conhole\vlpa K$ and $J_A=A$ and  $J_B=B$.
    \item If $\deri$ is of shape
      $$
      \odv{J}{\deri'}{(\odn{A'}{\rD}{A}{}\vlpl B)\vlpa K}{\LScp}
      \quor
      \odv{J}{\deri'}{(A\vlpl\odn{B'}{\rD}{B}{})\vlpa K}{\LScp}
      \quor
      \odv{J}{\deri'}{(A\vlpl B)\vlpa \odn{K'}\rD K{}}{\LScp}
      $$
      then, as in Case~\ref{s:trivial}, we can immediately proceed by induction hypothesis.
    \item
      If $\deri$ is of shape
      $$\odv{J}{\deri'}{\odn{
        ((A\vlpl B)\vlpa K_1\vlpa K_2)\vlte(K_3\vlpa K_4)}\sD{
        (A\vlpl B)\vlpa K_1 \vlpa (K_2\vlte K_3)\vlpa K_4}{}\vlpa K_5}{\LScp}
      $$
      then we proceed analogously to Case~\ref{s:commsplit}.
    \item
      If $\deri$ is of shape
      $$\odv{J}{\deri'}{\odn{
        (A\vlpa K_1)\vlva(B\vlpa K_2)}\dhD{
        (A\vlpl B)\vlpa (K_1\vlva K_2)}{}\vlpa K_3}{\LScp}
      $$
      for some $\vlva\in\set{\vlpl,\vlwi}$ and $\dhD\in\set{\dD,\dpD}$.
      Then we obtain by applying the induction hypothesis to $\deri'$ either 
      $$
      \odv{J}{\deri_J}{\Ccons{J_A\vlva J_B}}{\SLScpU}
      \quand
      \odv{\Ccons{\vlbot}}{\deri'_K}{K_3}{\LScp}
      \quand
      \odv{J_A}{\deri_A}{A\vlpa K_1}{\LScp}
      \quand
      \odv{J_B}{\deri_B}{B\vlpa K_2}{\LScp}
      $$
      or 
      $$
      \odv{J}{\deri_J}{\Ccons{J_A\vlwa J_B}}{\SLScpU}
      \quand
      \odv{\Ccons{K_L\vlwa K_R}}{\deri'_K}{K_3}{\LScp}
      \quand
      \odv{J_A}{\deri_A}{A\vlpa K_1\vlpa K_L}{\LScp}
      \quand
      \odv{J_B}{\deri_B}{B\vlpa K_2\vlpa K_R}{\LScp}
      $$
      for some $\vlwa\in\set{\vlpl,\vlwi}$ if  $\vlva=\vlpl$, or 
      $$
      \odv{J}{\deri_J}{\Ccons{J_A\vlwi J_B}}{\SLScpU}
      \quand
      \odv{\Ccons{K_L\vlpl K_R}}{\deri'_K}{K_3}{\LScp}
      \quand
      \odv{J_A}{\deri_A}{A\vlpa K_1\vlpa K_L}{\LScp}
      \quand
      \odv{J_B}{\deri_B}{B\vlpa K_2\vlpa K_R}{\LScp}
      $$
      if $\vlva=\vlwi$.
      In all three cases we arrive at situation (ii), by letting in the first case $K_A=K_1$ and $K_B=K_2$, and in the second and third case $K_A=K_1\vlpa K_L$ and $K_B=K_2\vlpa K_R$,  where $\deri_k$ is one of the following four derivations (1: in the first case above; 2: in the second case with $\vlva=\vlpl$ and $\vlwa=\vlpl$; 3: in the second case with $\vlva=\vlpl$ and $\vlwa=\vlwi$; and 4: in the third case above with $\vlva=\vlwi$):
      $$
      1:\; 
      \odv{
        \Ccons{K_1\vlva K_2}}{\derisw}{
        (K_1\vlva K_2)\vlpa\odv{\Ccons{\vlbot}}{\deri'_K}{K_3}{\LScp}}{\set{\eqD,\sD}}
      \hskip4em
      2:\;
      \odv{
        \CCons{\odn{(K_1\vlpa K_L)\vlpl (K_2\vlpa K_R)}{\dpD}{
         (K_1\vlpl K_2)\vlpa (K_L\vlpl K_R)}{}}}{\derisw}{
        (K_1\vlpl K_2)\vlpa\odv{\Ccons{K_L\vlpl K_R}}{\deri'_K}{K_3}{\LScp}}{\set{\eqD,\sD}}
      $$
      $$
      3:\;\odv{
        \CCons{\odn{(K_1\vlpa K_L)\vlwi (K_2\vlpa K_R)}{\dpD}{
         (K_1\vlpl K_2)\vlpa (K_L\vlwi K_R)}{}}}{\derisw}{
        (K_1\vlpl K_2)\vlpa\odv{\Ccons{K_L\vlwi K_R}}{\deri'_K}{K_3}{\LScp}}{\set{\eqD,\sD}}
      \qquad
      4:\;
      \odv{
        \CCons{\odn{(K_1\vlpa K_L)\vlwi (K_2\vlpa K_R)}{\dpD}{
         (K_1\vlwi K_2)\vlpa (K_L\vlpl K_R)}{}}}{\derisw}{
        (K_1\vlwi K_2)\vlpa\odv{\Ccons{K_L\vlpl K_R}}{\deri'_K}{K_3}{\LScp}}{\set{\eqD,\sD}}
      $$
      where in all cases $\derisw$ exist by \Cref{lem:switch}.
    \end{enumerate}
  \item Let $\deri$ be the given derivation for $\nnodv{J}{\deri}{\vloc A\vlpa K}{\LScp}$.
    \begin{enumerate}
    \item if $J=\vloc A\vlpa K$ (i.e., $\sizeof\deri=0$) then we are in case (i) and let $\Chole=\conhole\vlpa K$ and $J_A=A$.
    \item If $\deri$ is of shape
      $$
      \odv{J}{\deri'}{\vloc \odn{A'}{\rD}{A}{}\vlpa K}{\LScp}
      \quor
      \odv{J}{\deri'}{\vloc A\vlpa \odn{K'}\rD K{}}{\LScp}
      $$
      then, as in Case~\ref{s:trivial}, we can immediately proceed by induction hypothesis.
    \item
      If $\deri$ is of shape
      $$\odv{J}{\deri'}{\odn{
        (\vloc A\vlpa K_1\vlpa K_2)\vlte(K_3\vlpa K_4)}\sD{
        \vloc A \vlpa K_1 \vlpa (K_2\vlte K_3)\vlpa K_4}{}\vlpa K_5}{\LScp}
      $$
      then we proceed analogously to Case~\ref{s:commsplit}.
    \item
      If $\deri$ is of shape
      $$\odv{J}{\deri'}{\odn{
        \vloc (A\vlpa K_1)}\pD{
        \vloc A \vlpa \vlwn K_1}{}\vlpa K_2}{\LScp}
      $$
      Then we obtain by applying the induction hypothesis to $\deri'$ either 
      $$
      \odv{J}{\deri_J}{\Ccons{\vloc J_A}}{\SLScpU}
      \quand
      \odv{\Ccons{\vlbot}}{\deri'_K}{K_2}{\LScp}
      \quand
      \odv{J_A}{\deri_A}{A\vlpa K_1}{\LScp}
      $$
      or
      $$
      \odv{J}{\deri_J}{\Ccons{\vloc J_A}}{\SLScpU}
      \quand
      \odv{\Ccons{\vlwn K'}}{\deri'_K}{K_2}{\LScp}
      \quand
      \odv{J_A}{\deri_A}{A\vlpa K_1\vlpa K'}{\LScp}
      $$
      In both cases we arrive at situation (ii), by letting in the first case $K_A=K_1$ and $\deri_k$ is in the derivation on the left below, and in the second case $K_A=K_1\vlpa K'$ and $\deri_k$ is the derivation on the right below:
      $$
      \odv{
        \Ccons{\vlwn K_1}}{\derisw}{
        \vlwn K_1\vlpa\odv{\Ccons{\vlbot}}{\deri'_K}{K_2}{\LScp}}{\set{\eqD,\sD}}
      \qquad\qquad
      \odv{
        \CCons{\odn{\vlwn(K_1\vlpa K')}{\ppD}{
         \vlwn K_1\vlpa \vlwn K'}{}}}{\derisw}{
        \vlwn K_1\vlpa\odv{\Ccons{\vlwn K'}}{\deri'_K}{K_2}{\LScp}}{\set{\eqD,\sD}}
      $$
      where in both cases $\derisw$ exist by \Cref{lem:switch}.
    \item 
      If $\deri$ is of shape
      $$\odv{J}{\deri'}{\odn{
        \vlone}\eD{
        \vloc\vlone}{}\vlpa K}{\LScp}
      $$
      then we can apply Case~\ref{s:one} to $\deri'$ and have case (i) by letting $\Chole=\conhole\vlpa K$ and $J_A=\vlone$.
    \end{enumerate}
  \item Let $\deri$ be the given derivation for $\nnodv{J}{\deri}{\vlwn A\vlpa K}{\LScp}$.
    \begin{enumerate}
    \item if $J=\vlwn A\vlpa K$ (i.e., $\sizeof\deri=0$) then we are in case (i) and let $\Chole=\conhole\vlpa K$ and $J_A=A$.
    \item If $\deri$ is of shape
      $$
      \odv{J}{\deri'}{\vlwn \odn{A'}{\rD}{A}{}\vlpa K}{\LScp}
      \quor
      \odv{J}{\deri'}{\vlwn A\vlpa \odn{K'}\rD K{}}{\LScp}
      $$
      then, as in Case~\ref{s:trivial}, we can immediately proceed by induction hypothesis.
    \item
      If $\deri$ is of shape
      $$\odv{J}{\deri'}{\odn{
        (\vlwn A\vlpa K_1\vlpa K_2)\vlte(K_3\vlpa K_4)}\sD{
        \vlwn A \vlpa K_1 \vlpa (K_2\vlte K_3)\vlpa K_4}{}\vlpa K_5}{\LScp}
      $$
      then we proceed analogously to Case~\ref{s:commsplit}.
    \item
      If $\deri$ is of shape
      $$
      \odv{J}{\deri'}{\odn{
        \vlwn (A\vlpa K_1)}\ppD{
          \vlwn A \vlpa \vlwn K_1}{}\vlpa K_2}{\LScp}
      \qquor
      \odv{J}{\deri'}{\odn{
        \vloc (A\vlpa K_1)}\pD{
        \vlwn A \vlpa \vloc K_1}{}\vlpa K_2}{\LScp}
      $$
      Then we obtain by applying the induction hypothesis to $\deri'$ either 
      $$
      \odv{J}{\deri_J}{\Ccons{\vlda J_A}}{\SLScpU}
      \quand
      \odv{\Ccons{\vlbot}}{\deri'_K}{K_2}{\LScp}
      \quand
      \odv{J_A}{\deri_A}{A\vlpa K_1}{\LScp}
      $$
      for $\vlda\in\set{\vloc,\vlwn}$, or
      $$
      \odv{J}{\deri_J}{\Ccons{\vlda J_A}}{\SLScpU}
      \quand
      \odv{\Ccons{\vlda K'}}{\deri'_K}{K_2}{\LScp}
      \quand
      \odv{J_A}{\deri_A}{A\vlpa K_1\vlpa K'}{\LScp}
      $$
      for $\vlda\in\set{\vloc,\vlwn}$, or
      $$
      \odv{J}{\deri_J}{\Ccons{\vloc J_A}}{\SLScpU}
      \quand
      \odv{\Ccons{\vlwn K'}}{\deri'_K}{K_2}{\LScp}
      \quand
      \odv{J_A}{\deri_A}{A\vlpa K_1\vlpa K'}{\LScp}
      $$
      In all three cases we arrive at situation (ii), by letting in the first case $K_A=K_1$ and $\deri_k$, and in the second and third case $K_A=K_1\vlpa K'$. Then, for each case, the corresponding derivation $\deri_k$ is shown below:
      $$
      \odv{
        \Ccons{\vlda K_1}}{\derisw}{
        \vlda K_1\vlpa\odv{\Ccons{\vlbot}}{\deri'_K}{K_2}{\LScp}}{\set{\eqD,\sD}}
      \qquad\qquad
      \odv{
        \CCons{\odn{\vlda(K_1\vlpa K')}{\ppD}{
         \vlwn K_1\vlpa \vlda K'}{}}}{\derisw}{
        \vlwn K_1\vlpa\odv{\Ccons{\vlda K'}}{\deri'_K}{K_2}{\LScp}}{\set{\eqD,\sD}}
      \qquad\qquad
      \odv{
        \CCons{\odn{\vloc(K_1\vlpa K')}{\ppD}{
         \vloc K_1\vlpa \vlwn K'}{}}}{\derisw}{
        \vloc K_1\vlpa\odv{\Ccons{\vlwn K'}}{\deri'_K}{K_2}{\LScp}}{\set{\eqD,\sD}}
      $$
      where in all three cases $\derisw$ exist by \Cref{lem:switch}.
    \item 
      If $\deri$ is of shape
      $$\odv{J}{\deri'}{\odn{
        \vlone}\eD{
        \vloc\vlone}{}\vlpa K}{\LScp}
      $$
      then we can apply Case~\ref{s:one} to $\deri'$ and have case (i) by letting $\Chole=\conhole\vlpa K$ and $J_A=\vlone$.
    \end{enumerate}
  \item Let $\deri$ be a derivation for $\nnodv{J}{\deri}{\vlone\vlpa K}{\LScp}$.
    \begin{enumerate}
    \item if $J=\vlone\vlpa K$ (i.e., $\sizeof\deri=0$) then we let $\Chole=\conhole\vlpa K$ and we are done because we can let $\deri_K$ be $\vlinf{\eqD}{}{K}{\vlbot\vlpa K}$.
    \item If $\deri$ is of shape
      $$
      \odv{J}{\deri'}{\vlone\vlpa \odn{K'}\rD K{}}{\LScp}
      $$
      then we can immediately proceed by induction hypothesis, as in Case~\ref{s:trivial}.
    \item
      If $\deri$ is of shape
      $$\odv{J}{\deri'}{\odn{
        (\vlone\vlpa K_1\vlpa K_2)\vlte(K_3\vlpa K_4)}\sD{
        \vlone\vlpa K_1 \vlpa (K_2\vlte K_3)\vlpa K_4}{}\vlpa K_5}{\LScp}
      $$
      we proceed analogously to Case~\ref{s:commsplit}, using the induction hypothesis and Case~\ref{s:tensor}.
    \end{enumerate}
  \item Let $\deri$ be a derivation for $\nnodv{J}{\deri}{\vltop\vlpa K}{\LScp}$.
    \begin{enumerate}
    \item\label{s:toptrivial} if $J=\vltop\vlpa K$ (i.e., $\sizeof\deri=0$) then we let $\Chole=\conhole\vlpa K$ and $\ttt=\vltop$ and $\fff=\vlbot$ and we can let $\deri_K$ be $\vlinf{\eqD}{}{K}{\vlbot\vlpa K}$.
    \item If $\deri$ is of shape
      $$
      \odv{J}{\deri'}{\vltop\vlpa \odn{K'}\rD K{}}{\LScp}
      $$
      then we can immediately proceed by induction hypothesis, as in Case~\ref{s:trivial}.
    \item
      If $\deri$ is of shape
      $$\odv{J}{\deri'}{\odn{
        (\vltop\vlpa K_1\vlpa K_2)\vlte(K_3\vlpa K_4)}\sD{
        \vltop\vlpa K_1 \vlpa (K_2\vlte K_3)\vlpa K_4}{}\vlpa K_5}{\LScp}
      $$
      we proceed analogously to Case~\ref{s:commsplit}, using the induction hypothesis and Case~\ref{s:tensor}.
    \item If $\deri$ is of shape
      $$\odv{J}{\deri'}{\odn{
        \vltop}\dzD{
        \vltop\vlpa\vlzer}{}\vlpa K}{\LScp}
      $$
      then we apply the induction hypothesis to $\delta'$ and get 
      $$
      \odv{J}{\deri_J}{\Ccons{\ttt}}{\SLScpU}
      \quand
      \odv{\Ccons{\fff}}{\deri'_K}{K}{\LScp}
      $$
      for some $\ttt\in\set{\vlone,\vltop}$ and $\fff\in\set{\vlzer,\vlbot}$,
      Depending on $\fff$, we can construct $\deri_K$ as follows:
      $$
      \odv{
        \Ccons{\vlzer}}{\derisw}{
        \vlzer\vlpa\odv{\Ccons{\vlbot}}{\deri'_K}{K}{\LScp}}{\set{\eqD,\sD}}
      \qquor
      \odv{
        \CCons{\odn{\vlzer}\dzpD{
            \vlzer\vlpa\vlzer}{}
        }
      }{\derisw}{
        \vlzer\vlpa\odv{\Ccons{\vlzer}}{\deri'_K}{K}{\LScp}}{\set{\eqD,\sD}}
      $$
      where in both cases $\derisw$ exist by \Cref{lem:switch}.
    \item If $\deri$ is of shape
      $$\odv{J}{\deri'}{\odn{
        \vlone}\topD{
        \vltop}{}\vlpa K}{\LScp}
      $$
      then we apply the induction hypothesis  (Case~\ref{s:one}) to $\delta'$ and get 
      $$
      \odv{J}{\deri_J}{\Ccons{\vlone}}{\SLScpU}
      \quand
      \odv{\Ccons{\vlbot}}{\deri_K}{K}{\LScp}
      \quadfs
      $$
    \end{enumerate}
  \item Let $\deri$ be  a derivation for $\nnodv{J}{\deri}{\vlzer\vlpa K}{\LScp}$.
    \begin{enumerate}
    \item if $J=\vlzer\vlpa K$ (i.e., $\sizeof\deri=0$) then we let $\Chole=\conhole\vlpa K$ and we are in Case (i) with $\fff=\vlbot$ and we can let $\deri_K$ be $\vlinf{\eqD}{}{K}{\vlbot\vlpa K}$.
    \item If $\deri$ is of shape
      $$
      \odv{J}{\deri'}{\vltop\vlpa \odn{K'}\rD K{}}{\LScp}
      $$
      then we can immediately proceed by induction hypothesis, as in Case~\ref{s:trivial}.
    \item
      If $\deri$ is of shape
      $$\odv{J}{\deri'}{\odn{
        (\vltop\vlpa K_1\vlpa K_2)\vlte(K_3\vlpa K_4)}\sD{
        \vltop\vlpa K_1 \vlpa (K_2\vlte K_3)\vlpa K_4}{}\vlpa K_5}{\LScp}
      $$
      we proceed analogously to Case~\ref{s:commsplit}, using the induction hypothesis and Case~\ref{s:tensor}.
    \item If $\deri$ is of shape
      $$\odv{J}{\deri'}{\odn{
        \vltop}\dzD{
        \vlzer\vlpa\vltop}{}\vlpa K}{\LScp}
      \qquor      
      $$
      then we apply the induction hypothesis (Case~\ref{s:top}) to $\delta'$ and get 
      $$
      \odv{J}{\deri_J}{\Ccons{\ttt}}{\SLScpU}
      \quand
      \odv{\Ccons{\fff}}{\deri'_K}{K}{\LScp}
      $$
      for some $\ttt\in\set{\vlone,\vltop}$ and $\fff\in\set{\vlzer,\vlbot}$,
      and depending on $\fff$, we can construct $\deri_K$ as follows:
      $$
      \odv{
        \Ccons{\vltop}}{\derisw}{
        \vltop\vlpa\odv{\Ccons{\vlbot}}{\deri'_K}{K}{\LScp}}{\set{\eqD,\sD}}
      \qquor
      \odv{
        \CCons{\odn{\vltop}\dzD{
            \vltop\vlpa\vlzer}{}
        }
      }{\derisw}{
        \vltop\vlpa\odv{\Ccons{\vlzer}}{\deri'_K}{K}{\LScp}}{\set{\eqD,\sD}}
      $$
      where in both cases $\derisw$ exist by \Cref{lem:switch}, and we have Case~(ii).
    \item If $\deri$ is of shape
      $$\odv{J}{\deri'}{\odn{
        \vlzer}\dzpD{
        \vlzer\vlpa\vlzer}{}\vlpa K}{\LScp}
      \qquor      
      $$
      then we apply the induction hypothesis to $\delta'$ and get
      either
      $$
      \mbox{(i)}\quad
      \odv{J}{\deri_J}{\Ccons{\vlzer}}{\SLScpU}
      \quand
      \odv{\Ccons{\fff}}{\deri'_K}{K}{\LScp}
      \qquor
      \mbox{(ii)}\quad
      \odv{J}{\deri_J}{\Ccons{\ttt}}{\SLScpU}
      \quand
      \odv{\Ccons{\vltop}}{\deri'_K}{K}{\LScp}
      $$
      for some $\fff\in\set{\vlzer,\vlbot}$ and $\ttt\in\set{\vlone,\vltop}$.
      In the first case, we construct $\deri_K$, depending on $\fff$, as shown on the left or the middle below, and in the second case, we can construct $\deri_K$ as shown on the right below:
      $$
      \odv{
        \Ccons{\vlzer}}{\derisw}{
        \vlzer\vlpa\odv{\Ccons{\vlbot}}{\deri'_K}{K}{\LScp}}{\set{\eqD,\sD}}
      \qquor
      \odv{
        \CCons{\odn{\vlzer}\dzpD{
            \vlzer\vlpa\vlzer}{}
        }
      }{\derisw}{
        \vlzer\vlpa\odv{\Ccons{\vlzer}}{\deri'_K}{K}{\LScp}}{\set{\eqD,\sD}}
       \qquor
      \odv{
        \CCons{\odn{\vltop}\dzD{
            \vltop\vlpa\vlzer}{}
        }
      }{\derisw}{
        \vlzer\vlpa\odv{\Ccons{\vltop}}{\deri'_K}{K}{\LScp}}{\set{\eqD,\sD}}
     $$
      where in all three cases $\derisw$ exist by \Cref{lem:switch}.
    \end{enumerate}
  \item Let a derivation $\nodv{J}{\deri}{a\vlpa K}{\LScp}$ be given.
    \begin{enumerate}
    \item if $J=a\vlpa K$ (i.e., $\sizeof\deri=0$) then we let $\Chole=\conhole\vlpa K$ and we are in Case (i), similarly to Case~\ref{s:toptrivial}.
    \item If $\deri$ is of shape
      $$
      \odv{J}{\deri'}{a\vlpa \odn{K'}\rD K{}}{\LScp}
      $$
      then we can immediately proceed by induction hypothesis, as in Case~\ref{s:trivial}.
    \item
      If $\deri$ is of shape
      $$\odv{J}{\deri'}{\odn{
        (a\vlpa K_1\vlpa K_2)\vlte(K_3\vlpa K_4)}\sD{
        a\vlpa K_1 \vlpa (K_2\vlte K_3)\vlpa K_4}{}\vlpa K_5}{\LScp}
      $$
      we proceed analogously to Case~\ref{s:commsplit}, using the induction hypothesis and Case~\ref{s:tensor}.
    \item If $\deri$ is of shape
      $$\odv{J}{\deri'}{\odn{
        \vlone}\topD{
        a\vlpa\lneg a}{}\vlpa K}{\LScp}
      $$
      then we apply the induction hypothesis to $\delta'$ and get 
      $$
      \odv{J}{\deri_J}{\Ccons{\vlone}}{\SLScpU}
      \quand
      \odv{\Ccons{\vlbot}}{\deri'_K}{K}{\LScp}
      $$
      Then we have Case (ii) and construct $\deri_K$ as follows:
      $$
      \odv{
        \Ccons{\lneg a}}{\derisw}{
        \lneg a\vlpa\odv{\Ccons{\vlbot}}{\deri'_K}{K}{\LScp}}{\set{\eqD,\sD}}
      $$
      where $\derisw$ exist by \Cref{lem:switch}.
    \end{enumerate}
  \end{enumerate}
\end{proof}

\section{Proofs for \Cref{sec:flipping}}\label{app:flipping}

\lemCoreFlip*

\begin{proof}
  We proceed by structural induction on $F$, applying \Cref{lem:splitting} to $\deri$. Before we begin the case analysis, observe that in all cases in \Cref{lem:splitting} we obtain a context $\Chole$ and a derivation $\downsmash{\odv{J}{\deri_J}{\Ccons{\Jast}}{\SLScpU}}$. With this we can construct the following derivation
  \begin{equation}
    \label{eq:deriH}
    \hfill
    \odv{
      \lneg F\vlte
      \od{\odd{\odd{\odh{H}}{\derib}{J}{\SLScpU}}{
          \deri_J}{\Ccons{\Jast}}{\SLScpU}}}{
      \deri_s}{\CCons{\lneg F\vlte \Jast}
    }{\SLScpU}
    \hfill
  \end{equation}
  where $\deri_s$ exits by \Cref{lem:splitting}. We use $\deri_H$ to denote the derivation in~\eqref{eq:deriH} above. What follows is a case analysis on $F$.
  \begin{enumerate}
  \item\label{f:atom} $F=a$. We apply \Cref{lem:splitting}.\ref{s:atom} to $\deri$.
    In Case~(i) we let $I=\Ccons{\vlbot}$, and in Case~(ii) we let $I=\Ccons{\lneg a}$.
    In both cases we let $\derid=\deri_K$, and $\deric$ is constructed as follows:
    $$
    \mbox{(i)}\quad
    \odV{\lneg a\vlte H}{
      \deri_H}{\CCons{\odn{\lneg a\vlte a}{\aiU}{\vlbot}{}}
    }{\SLScpU}
    \qquor
    \mbox{(ii)}\quad
    \odV{\lneg a\vlte H}{
      \deri_H}{\CCons{\odt{\lneg a\vlte\lone}{\eqU}{\lneg a}{}}
    }{\SLScpU}
    $$
    where in both cases, $\deri_J$ comes from \Cref{lem:splitting}.\ref{s:atom}, and $\deri_s$ exits by \Cref{lem:splitting}.
  \item\label{f:bot} $F=\vlbot$. This case is trivial with $I=J$.
  \item\label{f:one} $F=\vlone$. We apply \Cref{lem:splitting}.\ref{s:one} to $\deri$, let $I=\Ccons{\vlbot}$ and $\derid=\deri_K$ and $\deric$ is constructed analogously to Case~\ref{f:atom}.(ii) above.
  \item \label{f:top} $F=\vltop$. We apply \Cref{lem:splitting}.\ref{s:top} to $\deri$, giving us four possibilities. In all we have $\derid=\deri_K$ and $\deric$ is one of the following:
    $$
    \odV{\vlzer\vlte H}{
      \deri_H}{\CCons{\odn{\vlzer\vlte\vltop}{\dzU}{\vlzer}{}}
    }{\SLScpU}
    \quor
    \odV{\vlzer\vlte H}{
      \deri_H}{\CCons{\odt{\vlzer\vlte\vlone}{\eqU}{\vlzer}{}}
    }{\SLScpU}
    \quor
    \odV{\vlzer\vlte H}{
      \deri_H}{\CCons{
        \od{\odi{\odi{\odh{\vlzer\vlte\vltop}}{
              \dzU}{\vlzer}{}}{
            \topU}{\vlbot}{}}}
    }{\SLScpU}
    \quor
    \odV{\vlzer\vlte H}{
      \deri_H}{\CCons{
        \od{\odi{\odo{\odh{\vlzer\vlte\vlone}}{
              \eqU}{\vlzer}{}}{
            \topU}{\vlbot}{}}}
    }{\SLScpU}
    $$
  \item \label{f:zero} $F=\vlzer$. We apply \Cref{lem:splitting}.\ref{s:zero} to $\deri$, giving us four possibilities. In all we have $\derid=\deri_K$ and $\deric$ is one of the following:
    $$
    \odV{\vltop\vlte H}{
      \deri_H}{\CCons{
        \od{\odi{\odi{\odh{\vlzer\vlte\vltop}}{
              \dzU}{\vlzer}{}}{
            \topU}{\vlbot}{}}}
    }{\SLScpU}
    \quor
    \odV{\vltop\vlte H}{
      \deri_H}{\CCons{\odn{\vlzer\vlte\vltop}{\dzU}{\vlzer}{}}
    }{\SLScpU}
    \quor
    \odV{\vltop\vlte H}{
      \deri_H}{\CCons{\odt{\vltop\vlte\vlone}{\eqU}{\vltop}{}}
    }{\SLScpU}
    \quor
    \odV{\vltop\vlte H}{
      \deri_H}{\CCons{\odn{\vltop\vlte\vltop}{\dzpU}{\vltop}{}}
    }{\SLScpU}
    $$
  \item \label{f:tensor} $F=A\vlte B$. We apply \Cref{lem:splitting}.\ref{s:tensor} to $\deri$, giving us derivations $\deri_J,\deri_K,\deri_A,\deri_B$. We apply the induction hypothesis to the derivations $\deri_A$ and $\deri_B$, giving us formulas $I_A$ and $I_B$, and the following derivations:
    $$
    \od{\odd{\odd{\odh{\lneg A \vlte J_A}}{\deric_A}{I_A}{\SLScpU}}{
        \derid_A}{K_A}{\LScp}}
    \qquand
    \od{\odd{\odd{\odh{\lneg B \vlte J_B}}{\deric_B}{I_B}{\SLScpU}}{
        \derid_B}{K_B}{\LScp}}
    $$
    Now we can let $I=\Ccons{I_A\vlpa I_B}$, and $\deric$ and $\derid$ are the two derivations below:
    $$
    \deric =
    \odV{(\lneg A\vlpa\lneg B)\vlte H}{
      \deri_H}{\CCons{\odn{
          (\lneg A\vlpa\lneg B)\vlte (J_A\vlte J_B)}{
          \sU}{\odv{\lneg A \vlte J_A}{\deric_A}{I_A}{\SLScpU}
          \vlpa
          \odv{\lneg B \vlte J_B}{\deric_B}{I_B}{\SLScpU}
        }{}}
    }{\SLScpU}
    \qquand
    \derid=
    \odV{\CCons{
        \odv{I_A}{\derid_A}{K_A}{\LScp}
        \vlpa
        \odv{I_B}{\derid_B}{K_B}{\LScp}
    }}{\deri_K}{K}{\LScp}
    $$
    where $\deri_K$ comes from \Cref{lem:splitting}.\ref{s:tensor}.
  \item \label{f:with} $F=A\vlwi B$. We apply \Cref{lem:splitting}.\ref{s:with} to $\deri$, giving us derivations $\deri_J,\deri_K,\deri_A,\deri_B$ (there are two cases). We apply the induction hypothesis to the derivations $\deri_A$ and $\deri_B$, giving us formulas $I_A$ and $I_B$, such that we have the following derivations:
    \begin{equation}
      \label{eq:AB}
    \mbox{(i)}\quad
    \od{\odd{\odd{\odh{\lneg A \vlte J_A}}{\deric_A}{I_A}{\SLScpU}}{
        \derid_A}{\vlbot}{\LScp}}
    \quand
    \od{\odd{\odd{\odh{\lneg B \vlte J_B}}{\deric_B}{I_B}{\SLScpU}}{
        \derid_B}{\vlbot}{\LScp}}
    \quad,\quad
    \mbox{or (ii)}\quad
    \od{\odd{\odd{\odh{\lneg A \vlte J_A}}{\deric_A}{I_A}{\SLScpU}}{
        \derid_A}{K_A}{\LScp}}
    \quand
    \od{\odd{\odd{\odh{\lneg B \vlte J_B}}{\deric_B}{I_B}{\SLScpU}}{
        \derid_B}{K_B}{\LScp}}
    \end{equation}
    By inspecting the rules for $\LScp$, it follows that in case (i), we have $I_A=I_B=\vlbot$.
    Then, in case (i), we let $I=\Ccons{\vlbot}$, and in case (ii) we let  $I=\Ccons{I_A\vlpl I_B}$. The derivation $\deric$ is constructed as follows:
    $$
    \mbox{(i) }
    \odV{(\lneg A\vlpl\lneg B)\vlte H}{
      \deri_H}{\CCons{\od{\odi{\odi{\odh{
          (\lneg A\vlpl\lneg B)\vlte (J_A\vlwi J_B)}}{
          \dU}{\odv{\lneg A \vlte J_A}{\deric_A}{\vlbot}{\SLScpU}
          \vlpl
          \odv{\lneg B \vlte J_B}{\deric_B}{\vlbot}{\SLScpU}
            }{}}{
            \withU}{\vlbot}{}}}
    }{\SLScpU}
    \quor
    \mbox{(ii) }
    \odV{(\lneg A\vlpl\lneg B)\vlte H}{
      \deri_H}{\CCons{\odn{
          (\lneg A\vlpl\lneg B)\vlte (J_A\vlwi J_B)}{
          \dU}{\odv{\lneg A \vlte J_A}{\deric_A}{I_A}{\SLScpU}
          \vlpl
          \odv{\lneg B \vlte J_B}{\deric_B}{I_B}{\SLScpU}
        }{}}
    }{\SLScpU}
    $$
    and $\derid$ is in case (i) just $\deri_K$, and in case (ii) analogously to Case~\ref{f:tensor} above.
  \item \label{f:plus} $F=A\vlpl B$. We apply \Cref{lem:splitting}.\ref{s:plus} to $\deri$, giving us derivations $\deri_J,\deri_K,\deri_A,\deri_B$ (there are again two cases). We apply the induction hypothesis to the derivations $\deri_A$ and $\deri_B$, giving us formulas $I_A$ and $I_B$, such that we have the same derivations as in~\eqref{eq:AB} above.
    As before, by inspecting the rules for $\LScp$, it follows that in case (i), we have $I_A=I_B=\vlbot$.
    In case~(i), we let $I=\Ccons{\vlbot}$, and in case~(ii) we let  $I=\Ccons{I_A\vlva I_B}$, where $\vlva\in\set{\vlwi,\vlpl}$ as given by \Cref{lem:splitting}.\ref{s:plus}. The derivation $\deric$ is constructed as follows:
    $$
    \mbox{(i) }
    \odV{(\lneg A\vlwi\lneg B)\vlte H}{
      \deri_H}{\CCons{\od{\odi{\odi{\odh{
          (\lneg A\vlwi\lneg B)\vlte (J_A\vlpl J_B)}}{
          \dU}{\odv{\lneg A \vlte J_A}{\deric_A}{\vlbot}{\SLScpU}
          \vlpl
          \odv{\lneg B \vlte J_B}{\deric_B}{\vlbot}{\SLScpU}
            }{}}{
            \withU}{\vlbot}{}}}
    }{\SLScpU}
    \quor
    \mbox{(ii) }
    \odV{(\lneg A\vlwi\lneg B)\vlte H}{
      \deri_H}{\CCons{\odn{
          (\lneg A\vlwi\lneg B)\vlte (J_A\vlva J_B)}{
          \dU}{\odv{\lneg A \vlte J_A}{\deric_A}{I_A}{\SLScpU}
          \vlva
          \odv{\lneg B \vlte J_B}{\deric_B}{I_B}{\SLScpU}
        }{}}
    }{\SLScpU}
    $$
    and $\derid$ is in case (i) just $\deri_K$, and in case (ii) analogously to Case~\ref{f:tensor} above.
  \item \label{f:oc} $F=\vloc A$. We apply \Cref{lem:splitting}.\ref{s:oc} to $\deri$, giving us derivations $\deri_J,\deri_K,\deri_A$ (there are two cases). We apply the induction hypothesis to $\deri_A$, giving us a formula $I_A$ and one of the following derivations:
    \begin{equation}
      \label{eq:bangA}
      \hfil
      \mbox{either (i)}\quad
      \od{\odd{\odd{\odh{\lneg A \vlte J_A}}{\deric_A}{I_A}{\SLScpU}}{
          \derid_A}{\vlbot}{\LScp}}
    \qquad,\qquad
    \mbox{or (ii)}\quad
    \od{\odd{\odd{\odh{\lneg A \vlte J_A}}{\deric_A}{I_A}{\SLScpU}}{
        \derid_A}{K_A}{\LScp}}
    \end{equation}
    where, as before, it follows that in the first case we have $I_A=\vlbot$.
    Then, in case (i), we let $I=\Ccons{\vlbot}$, and in case (ii) we let  $I=\Ccons{\vlwn I_A}$. The derivations $\deric$ and $\derid$ are shown below:
    $$
    \begin{array}{c@{\;}c@{\;}c@{\hskip6em}c@{\;}c@{\;}c}
      &&\mbox{(i)}&&&  \mbox{(ii)}\\
      \deric&=&
      \odV{\vlwn\lneg A\vlte H}{
        \deri_H}{\CCons{\od{\odi{\odi{\odh{
                  \vlwn \lneg A\vlte \vloc J_A}}{
                \pU}{\vlwn\odv{\lneg A \vlte J_A}{\deric_A}{\vlbot}{\SLScpU}
              }{}}{
              \eU}{\vlbot}{}}}
      }{\SLScpU}
      &
      \deric&=&
      \odV{\vlwn\lneg A\vlte H}{
        \deri_H}{\CCons{\odn{
            \vlwn \lneg A\vlte \vloc J_A}{
            \pU}{\vlwn\odv{\lneg A \vlte J_A}{\deric_A}{I_A}{\SLScpU}
          }{}}
      }{\SLScpU}
      \\
      \derid&=&
      \odV{\Ccons{\vlbot}}{\deri_K}{K}{\LScp}
      &
      \derid&=&
      \odV{\CCons{
          \vlwn
          \odv{I_A}{\derid_A}{K_A}{\LScp}
      }}{\deri_K}{K}{\LScp}  
    \end{array}
    $$
  \item \label{f:wn} $F=\vlwn A$. We apply \Cref{lem:splitting}.\ref{s:wn} to $\deri$, giving us derivations $\deri_J,\deri_K,\deri_A$ (there are two cases). We apply the induction hypothesis to $\deri_A$, giving us $I_A$ and a derivation as in~\eqref{eq:bangA} above,
    where, as before, it follows that in the first case we have $I_A=\vlbot$.
    Then, in case (i), we let $I=\Ccons{\vlbot}$, and in case (ii) we let  $I=\Ccons{\vlda I_A}$, where $\vlda\in\set{\vlwn,\vloc}$ is given by  \Cref{lem:splitting}.\ref{s:wn}. The derivations $\deric$ and $\derid$ are shown below:
    $$
    \begin{array}{c@{\;}c@{\;}c@{\hskip6em}c@{\;}c@{\;}c}
      &&\mbox{(i)}&&&  \mbox{(ii)}\\
      \deric&=&
      \odV{\vloc\lneg A\vlte H}{
        \deri_H}{\CCons{\od{\odi{\odi{\odh{
                  \vloc \lneg A\vlte \vlwn J_A}}{
                \pU}{\vlwn\odv{\lneg A \vlte J_A}{\deric_A}{\vlbot}{\SLScpU}
              }{}}{
              \eU}{\vlbot}{}}}
      }{\SLScpU}
      &
      \deric&=&
      \odV{\vloc\lneg A\vlte H}{
        \deri_H}{\CCons{\odn{
            \vloc \lneg A\vlte \vlda J_A}{
            \phU}{\vlda\odv{\lneg A \vlte J_A}{\deric_A}{I_A}{\SLScpU}
          }{}}
      }{\SLScpU}
      \\
      \derid&=&
      \odV{\Ccons{\vlbot}}{\deri_K}{K}{\LScp}
      &
      \derid&=&
      \odV{\CCons{
          \vlda
          \odv{I_A}{\derid_A}{K_A}{\LScp}
      }}{\deri_K}{K}{\LScp}  
    \end{array}
    $$
    where $\phU$ is either $\pU$ or $\ppU$, depending on $\vlda$.
  \item  \label{f:par} $F=A\vlpa B$. In this case we cannot rely on  \Cref{lem:splitting} as there is no case for $\vlpa$. Instead, we will apply the induction hypothesis twice as follows:
    $$
    \od{\odd{\odd{\odh{H}}{\derib}{
          J}{\SLScpU}}{\deri}{
        A\vlpa B\vlpa K}{\LScp}}
    \qualto
    \od{\odd{\odd{\odh{\lneg A\vlte H}}{\derib'}{
          J'}{\SLScpU}}{\deri'}{
        B\vlpa K}{\LScp}}
    \qualto    
    \od{\odd{\odd{\odh{\lneg A\vlte\lneg B\vlte H}}{\deric}{
          I}{\SLScpU}}{\derid}{
        K}{\LScp}}    
    $$
    where we left instances of $\eqD$ and $\eqU$ implicit.
    \qedhere
  \end{enumerate}
\end{proof}


\lemDecomposition*

\begin{proof}
  This is proved by rule permutations. Assume we have an instance $\smash{\vlinf{\rcD}{}GF}$ of a core rule and and instance $\vlinf{\rncD}{}QP$ of a non-core rule. We show that the two can be permuted by a case analysis of the possible interactions. We begin with the trivial cases where there is no interaction and the two rule instances can ``pass through'' each other:
  \begin{itemize}
  \item If $\rcD$ and $\rncD$ are in independent parts of the derivation, then we have
    $$
    \odn{\CCons[1]{F}\vlva\CCons[2]{\odn P{\rncD}Q{}}}{
      \eqD}{
      \CCons[1]{\odn{F}{\rcD}{G}{}}\vlva\CCons[2]{Q}
    }{}
    \qquor
    \od{\odh{\CCons[1]{\odn{F}{\rcD}{G}{}}\vlva\CCons[2]{\odn P{\rncD}Q{}}}}
    $$
    where $\vlpa\in\set{\vlte,\vlpa,\vlwi,\vlpl}$. Both derivations above can be replaced by
    $$
    \odn{\CCons[1]{\odn{F}{\rcD}{G}{}}\vlva\CCons[2]{P}}{
      \eqD}{
      \CCons[1]{G}\vlva\CCons[2]{\odn P{\rncD}Q{}}
    }{}
    $$
  \item If $\rncD$ is inside a passive formula in the premise of $\rcD$, then we are in a situation as shown on the left below, where $\rcD$ is $\sD$, and where $A\conhole$ is an arbitrary context. This can be replaced by the derivation shown on the right below.
    $$
    \odn{(A\Cons{\odn P{\rncD}Q{}}\vlpa C)\vlte(B\vlpa D)}{
      \sD}{
      (A\cons{Q}\vlte B)\vlpa(C\vlpa D)
    }{}
    \qqualto
    \odn{(A\Cons{P}\vlpa C)\vlte(B\vlpa D)}{
      \sD}{
      (A\Cons{\odn P{\rncD}Q{}}\vlte B)\vlpa(C\vlpa D)
    }{}  
    $$
    The situation is similar when $\rncD$ occurs inside $B$, $C$, or $D$. And for $\rcD$ being one of the rules $\pD$, $\ppD$, $\dD$, $\dpD$, and $\eqD$, the cases are similar to $\sD$ above.
  \item If $\rcD$ is inside a passive formula in the conclusion of $\rncD$, the we have three subcases:
    \begin{itemize}
    \item $\rcD$ just passes through, similarly to what happens in the previous case above:
      $$
      \odn{A\cons{F}}{\wlD}{A\Cons{\odn{F}{\rcD}{G}{}}\vlpl B}{}
      \qqualto
      \odn{A\Cons{\odn{F}{\rcD}{G}{}}}{\wlD}{A\Cons{G}\vlpl B}{}
      $$
      The situation is similar for $\rncD$ being $\wrD$ and $\rcD$ inside $B$, or $\rncD$ being $\tD$ or $\gD$.
    \item $\rcD$ is deleted because of weakening:
      $$
      \odn{A}{\wlD}{A\vlpl B\Cons{\odn{F}{\rcD}{G}{}}}{}
      \qqualto
      \odn{A}{\wlD}{A\vlpl B\Cons{G}}{}
      $$
      and similarly for $\rncD$ being $\wrD$ and $\rcD$ inside $A$, or $\rncD$ being $\wD$ or $\wpD$.
    \item $\rcD$ is duplicated because of contraction:
      $$
      \odn{A\cons{F}\vlpl A\cons{F}}{\cpD}{A\Cons{\odn{F}{\rcD}{G}{}}}{}
      \qqualto
      \odn{A\Cons{\odn{F}{\rcD}{G}{}}\vlpl A\Cons{\odn{F}{\rcD}{G}{}}}{
        \cpD}{A\Cons{G}}{}
      $$
      and similarly for $\rncD$ being $\cD$.
    \end{itemize}
  \end{itemize}
  Next, we consider the cases where the conclusion of $\rncD$ and the premise of $\rcD$ interact in a non-trivial way:
  \begin{itemize}
  \item If $\rncD$ is $\wpD$, then we have a situation like the following:
    $$
    \odn{\odn{\vlzer}{\wpD}{A\vlpa C}{}\vlte(B\vlpa D)}{
      \sD}{(A\vlte B)\vlpa(C\vlpa D)}{}
    \qqualto
    \odn{\odn{\vlzer}{\dzpD}{\vlzer\vlpa\vlzer}{}\vlte(B\vlpa D)}{
      \sD}{(\odn{\vlzer}{\wpD}{A}{}\vlte B)\vlpa(\odn{\vlzer}{\wpD}{C}{}\vlpa D)}{}
    $$
    The situation is similar for $\wpD$ acting on the subformula $B\vlpa D$. And similarly for $\rcD\in\set{\pD,\ppD,\dD,\dpD,\eqD}$. The important observation is that such a situation can only occur with a $\vlpa$-subformula, and we can resolve it by using $\dzpD$, which is part of the core-fragment.
  \item If $\rncD$ is $\cpD$, then we have a situation like the following:
    $$
    \odn{\odn{(A\vlpa C)\vlpl(A\vlpa C)}{\cpD}{A\vlpa C}{}\vlte(B\vlpa D)}{
      \sD}{(A\vlte B)\vlpa(C\vlpa D)}{}
    \qqualto
    \odn{\odn{(A\vlpa C)\vlpl(A\vlpa C)}{\dpD}{(A\vlpl A)\vlpa(C\vlpl C)}{}
      \vlte(B\vlpa D)}{
      \sD}{(\odn{A\vlpl A}{\cpD}{A}{}\vlte B)\vlpa(\odn{C\vlpl C}{\cpD}{C}{}\vlpa D)}{}
    $$
    The situation is similar for $\cpD$ acting on the subformula $B\vlpa D$. And similarly for $\rcD\in\set{\pD,\ppD,\dD,\dpD,\eqD}$. Again, the important observation is that such a situation can only occur with a $\vlpa$-subformula, that we can resolve by using $\dpD$, which is part of the core-fragment.
  \item The critical pair can also come from a $\vlpl$ formula: Then $\rcD$ is $\dpD$ and we have a situation as follows:
    $$
    \od{\odi{\odi{\odh{A\vlpa C}}{
          \wlD}{(A\vlpa C)\vlpl(B\vlpa D)}{}}{
        \dpD}{(A\vlpl B)\vlpa(C\vlpl D)}{}}
    \qqualto
    \od{\odh{\odn{A}{\wlD}{A\vlpl B}{}
        \vlpa
        \odn{C}{\wlD}{C\vlpl D}{}}}
    $$
    and similarly for $\rncD$ being $\wrD$.
  \item Finally, the critical pair can come from a $\vlwn$-formula. Then $\rcD$ is $\ppD$ and there are four subcases, one for each $\rncD\in\set{\wD,\rD,\gD,\cD}$:
    $$
    \od{\odi{\odi{\odh{\vlbot}}{
          \wD}{\vlwn(A\vlpa B)}{}}{
        \ppD}{\vlwn A\vlpa \vlwn B}{}}
    \qqualto
    \odn{\vlbot}{
      \eqD}{
      \odn{\vlbot}{\wD}{\vlwn A}{}
      \vlpa
      \odn{\vlbot}{\wD}{\vlwn B}{}
    }{}
    $$
    $$
    \od{\odi{\odi{\odh{A\vlpa B}}{
          \tD}{\vlwn(A\vlpa B)}{}}{
        \ppD}{\vlwn A\vlpa \vlwn B}{}}
    \qqualto
    \od{\odh{\odn{A}{\tD}{\vlwn A}{}
        \vlpa
        \odn{B}{\tD}{\vlwn B}{}}}
    $$
    $$
    \od{\odi{\odi{\odh{\vlwn\vlwn(A\vlpa B)}}{
          \gD}{\vlwn(A\vlpa B)}{}}{
        \ppD}{\vlwn A\vlpa \vlwn B}{}}
    \qqualto
    \odn{\vlwn\left( \odn{\vlwn(A\vlpa B)}{\ppD}{\vlwn A\vlpa \vlwn B}{}\right)}{
      \ppD}{
      \odn{\vlwn\vlwn A}{\gD}{\vlwn A}{}
      \vlpa
      \odn{\vlwn\vlwn B}{\gD}{\vlwn B}{}
    }{}
    $$
    $$
    \od{\odi{\odi{\odh{\vlwn(A\vlpa B)\vlpa\vlwn(A\vlpa B)}}{
          \cD}{\vlwn(A\vlpa B)}{}}{
        \ppD}{\vlwn A\vlpa \vlwn B}{}}
    \qqualto
    \odn{\left( \odn{\vlwn(A\vlpa B)}{\ppD}{\vlwn A\vlpa \vlwn B}{}\right)
      \vlpa
\left( \odn{\vlwn(A\vlpa B)}{\ppD}{\vlwn A\vlpa \vlwn B}{}\right)
    }{
      \eqD}{
      \left( \odn{\vlwn A\vlpa\vlwn A}{\cD}{\vlwn A}{}\right)
      \vlpa
      \left( \odn{\vlwn B\vlpa\vlwn B}{\cD}{\vlwn B}{}\right)
    }{}
    $$
  \end{itemize}
  This completes the proof of the decomposition lemma.
\end{proof}

\section{Additional Material for \Cref{sec:classical}}\label{app:classical}

For the convenience of the reader, we collect here some material that is also available elsewhere (e.g.~\cite{brunnler:phd,brunnler:tiu:01,brunnler:06:locality,ATS:esslli2019}) but might help understanding the paper.

Below is the one-sided sequent calculus $\GSone$
from~\cite{troelstra:schwichtenberg:00} for classical logic.\footnote{We added the rule for  $\vltt$ to include the units. It is easy to see that cut elimination is preserved.}
\begin{equation}
  \label{eq:GS1}
  \hfill
  \begin{array}{c}
    \vlinf{\vltt}{}{\sqn{\vltt}}{}
    \qquad
    \vlinf{\axr}{}{\sqn{a,\cneg a}}{}
    \qquad
    \vlinf{\weakr}{}{\sqn{\Gamma,A}}{\sqn{\Gamma}}
    \qquad
    \vlinf{\conr}{}{\sqn{\Gamma,A}}{\sqn{\Gamma,A,A}}
    \\\\[-1ex]
    \vlinf{\vlor_L}{}{\sqn{\Gamma,A\vlor B}}{\sqn{\Gamma,A}}
    \qquad
    \vlinf{\vlor_R}{}{\sqn{\Gamma,A\vlor B}}{\sqn{\Gamma,B}}
    \qquad
    \vliinf{\vlan}{}{\sqn{\Gamma,A\vlan B}}{\sqn{\Gamma,A}}{\sqn{\Gamma,B}}
  \end{array}
  \hfill
\end{equation}

\begin{theorem}[\cite{troelstra:schwichtenberg:00}]
  The cut rule
  \begin{equation}
    \label{eq:cutSC}
    \hfill
    \vliinf{\cutr}{}{\sqn{\Gamma,\Delta}}{\sqn{\Gamma,A}}{\sqn{\Delta,\cneg A}}
    \hfill
  \end{equation}
  is admissible for $\GSone$.
\end{theorem}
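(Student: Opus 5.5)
This is the standard cut-admissibility result for the one-sided calculus $\GSone$ with the unit rule for $\vltt$ added. I would prove it by the usual Gentzen-style syntactic argument. I would first strengthen the rule to be eliminated to \emph{multicut} (mix): from $\sqn{\Gamma,A^m}$ and $\sqn{\Delta,\cneg A^n}$ with $m,n\ge1$, infer $\sqn{\Gamma,\Delta}$. This is needed because contraction is explicit in $\GSone$, and a plain cut cannot be pushed past $\conr$ without the duplicated copies blocking termination.

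I would show that a topmost multicut between cut-free proofs $\pi_1,\pi_2$ can be eliminated. The proof is by lexicographic induction on the pair (size of the cut formula $A$, sum of the heights of $\pi_1$ and $\pi_2$). Throughout, I use the fact that $A$ and $\cneg A$ have the same size.

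The case analysis runs as follows.

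\begin{enumerate}
\item \emph{Some occurrence of $A$ in $\pi_1$ is not principal in the last rule.} Then that last rule is $\weakr$ or $\conr$ on $A$, or the rule acts on $\Gamma$ alone, and I permute the multicut upward.
\begin{itemize}
\item If the last rule is $\conr$ on $A$, the premise simply has $m+1$ copies of $A$, and the multicut absorbs them.
\item If the last rule is $\weakr$ introducing $A$ and $m=1$, I discard $\pi_2$ and obtain $\sqn{\Gamma,\Delta}$ from $\sqn\Gamma$ by weakenings.
\item If the last rule is $\vlan$, the multicut is duplicated into both branches. Each new multicut has smaller height, so the induction hypothesis applies to both.
\end{itemize}
\item \emph{Some occurrence of $\cneg A$ in $\pi_2$ is not principal.} This is symmetric to the previous case.
\item \emph{Both last rules are logical and introduce $A$ and $\cneg A$.} These are the key cases.
\begin{itemize}
\item If $A$ is an atom, both last rules are $\axr$, and the conclusion $\sqn{a,\cneg a}$ is obtained directly, possibly with weakenings.
\item If $A=\vltt$ and $\pi_1$ ends in the $\vltt$ rule, then $\cneg A=\vlff$, and no rule introduces $\vlff$ as principal. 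So this case only arises through weakening, which is handled above.
\item If $A=B\vlan C$, then $\pi_1$ ends in $\vlan$ with premises $\sqn{\Gamma',A^{m-1},B}$ and $\sqn{\Gamma',A^{m-1},C}$, and $\pi_2$ ends in, say, $\vlor_L$ with premise $\sqn{\Delta',\cneg A^{n-1},\cneg B}$.
\end{itemize}
\end{enumerate}

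The conjunction case is the main obstacle. I would proceed in two steps.

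\begin{enumerate}
\item First cut the remaining copies of $A$ and $\cneg A$ by the induction hypothesis, which applies because height decreases. When $m-1=0$ or $n-1=0$ I skip this cut.
\item Then cut on $B$, which applies because the cut formula is smaller. This yields $\sqn{\Gamma',\Gamma',\Delta',\Delta'}$.
\end{enumerate}

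Finally, contractions restore $\sqn{\Gamma,\Delta}$. The case where $\pi_2$ ends in $\vlor_R$ is symmetric, cutting on $C$ instead of $B$.

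Eliminating cuts from the top down then gives admissibility of the cut rule, since an ordinary cut is the special case of multicut with $m=n=1$.
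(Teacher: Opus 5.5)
Your multicut argument, by induction on the size of the cut formula and then the sum of heights, is correct. It is the standard Gentzen-style proof that the paper does not reproduce but imports by citing Troelstra--Schwichtenberg, adding only the remark that the extra $\vltt$ rule is harmless; your observation that no rule introduces $\vlff$ as principal formula confirms that remark.

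The only slip is the wording of Case~1: for $m\ge2$ some copy of $A$ is always non-principal, so as written Case~1 also catches a last rule that logically introduces one copy of $A$. Such a rule cannot simply be permuted past, and it is exactly what your key case handles with its $A^{m-1}$ remaining copies. The condition should therefore read ``the last rule of $\pi_1$ is not a logical rule or axiom whose principal formula is a cut copy of $A$'', and likewise for Case~2.
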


\begin{lemma}\label{lem:SKStoGS1}
  If $\odv{A}{\deri}{B}{\SKS}$ then the sequent $\sqn{\cneg A,B}$ is provable in $\GSone+\cutr$.
\end{lemma}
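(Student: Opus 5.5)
The plan is to argue by induction on the number of inference rule instances in $\deri$, proving the slightly stronger statement that for every derivation $\odv{A}{\deri}{B}{\SKS}$ and every context (in the classical sense, where contexts and m-contexts coincide) the sequent $\sqn{\cneg A,B}$ is provable in $\GSone+\cutr$. The base case is a derivation consisting of a single formula $A$, for which we need $\sqn{\cneg A,A}$. This is the standard generalized identity, proved by induction on $A$ from $\axr$, the $\vltt$-axiom (with $\weakr$ for the $\vlff$ case), and the $\vlor$ and $\vlan$ rules.

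For the inductive step, observe that by \Cref{def:derivation} every derivation is either a single formula, a connective $\vlan$ or $\vlor$ applied to two smaller derivations, or a composition of a derivation, a rule instance and a derivation. I handle the composition case with $\cutr$. Given $\sqn{\cneg A,C}$ for the upper part, $\sqn{\cneg C,D}$ for the rule instance, and $\sqn{\cneg D,B}$ for the lower part, two cuts yield $\sqn{\cneg A,B}$.

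The connective case needs derivations $\deri\vlva\derib$ with $\vlva\in\set{\vlan,\vlor}$, given $\sqn{\cneg{A_1},B_1}$ and $\sqn{\cneg{A_2},B_2}$. For $\vlan$, I weaken both sequents, apply the $\vlan$ rule to get $\sqn{\cneg{A_1},\cneg{A_2},B_1\vlan B_2}$, and then use the $\vlor$ rules together with $\conr$ to reach $\sqn{\cneg{A_1}\vlor\cneg{A_2},B_1\vlan B_2}$. For $\vlor$, I weaken each sequent appropriately and apply $\vlor_L$ and $\vlor_R$, then contract, then apply the $\vlan$ rule on $\cneg{A_1}\vlan\cneg{A_2}$. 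The rule $\vlor$ is additive-style here, so weakening and contraction are used to fit it. This follows the usual pattern of simulating the invertible multiplicative $\vlor$ in $\GSone$.

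What remains is to show that each single rule instance $\vlinf{\rr}{}{G}{F}$ of $\SKS$ gives $\sqn{\cneg F,G}$. Rules applied deep inside a context are already covered, since the open deduction definition lets me push context through the connective case. I check each rule of \Cref{fig:SKS} directly:
\begin{itemize}
\item $\aiD$: $\sqn{\vlff,\cneg a\vlor a}$ by weakening $\axr$.
\item $\aiU$: dual to $\aiD$.
\item $\wD$: $\sqn{\vltt,A}$ by weakening the $\vltt$-axiom.
\item $\wU$: symmetric to $\wD$.
\item $\cD$, $\cU$: from the identity together with $\vlan$ and $\conr$.
\item $\andD$, $\andU$: immediate.
\item $\eqD$, $\eqU$: associativity, commutativity and unit laws, each obtained from identities with the $\vlor$ and $\vlan$ rules plus weakening for the units.
\item $\sD$, $\sU$: the switch instances, whose sequents $\sqn{(\cneg A\vlan\cneg C)\vlor(\cneg B\vlan\cneg D),(A\vlan B)\vlor(C\vlor D)}$ I decompose bottom-up with the invertible-style rules, contraction and the identity.
\end{itemize}
The main obstacle is only bookkeeping: the additive presentation of $\vlor$ in $\GSone$ forces extra weakenings and contractions, particularly in the switch cases. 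I would handle it uniformly by first deriving the multiplicative $\vlor$-rule $\sqn{\Gamma,A,B}\Rightarrow\sqn{\Gamma,A\vlor B}$, using $\vlor_L$, $\vlor_R$ and $\conr$.
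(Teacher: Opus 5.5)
Your proof has the same skeleton as the paper's. There is a bare-formula case, handled by the generalized identity $\sqn{\cneg A,A}$. There is a connective case, handled by $\vlor_L$/$\vlor_R$ and the $\vlan$ rule. The composition case is closed by two instances of $\cutr$.

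You genuinely differ on the sequent $\sqn{\cneg F,G}$ for a single rule instance. The paper observes that $F\vlim G$ is a tautology for every rule of $\SKS$ and invokes completeness of $\GSone$. You instead build explicit cut-free $\GSone$ proofs rule by rule, and your checks are correct, including the switch sequent. Your route is purely syntactic, so the translation $\SKS\to\GSone+\cutr$ (and hence the cut-elimination corollary for $\SKS$) no longer passes through semantics. The cost is case checking.

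The paper's one-liner also handles $\eqD$ and $\eqU$ for free, which yours does not. A single instance of these rules relates any pair with $A\eqor B$ (resp.\ $A\eqan B$) in the congruence closure. Such a pair may be rewritten deep inside the formula and in several steps, all within one top-level rule instance, so the open-deduction connective case does not reach it. Checking only the generating associativity, commutativity and unit equations therefore needs an extra induction on the generation of the congruence. That induction must cover closure under connectives (as in your connective case), transitivity (by $\cutr$), and both orientations of each generator.

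There are also some smaller repairs. Induction on the number of rule instances fails as stated: in $\deri\vlor C$ with $C$ a bare formula, the component $\deri$ has as many rule instances as the whole. Use structural induction on $\deri$, as the paper does and as your phrase ``two smaller derivations'' already suggests. The ``stronger statement for every context'' is vacuous as written and not needed. The weakenings and contractions in your connective case are superfluous: the $\vlan$ rule of $\GSone$ shares its context, so one $\vlor_L$ or $\vlor_R$ per premise followed by $\vlan$ suffices.
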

\begin{proof}
  By proceed by structural induction on $\deri$:
  \begin{itemize}
  \item If $\deri$ is just a formula we have a $\GSone$-proof of $\sqn{\cneg A,A}$ by induction on $A$.
  \item If $\deri$ is $\downsmash{\odv{A_1}{\deri_1}{B_1}{\SKS}\vlor\odv{A_2}{\deri_2}{B_2}{\SKS}}$
    then we construct
    $$
    \vlderivation{
      \vliin{\vlan}{}{\sqn{\cneg A_1\vlan\cneg A_2,B_1\vlor B_2}}{
        \vlin{\vlor_L}{}{\sqn{\cneg A_1,B_1\vlor B_2}}{
          \vlhtr{\pi_1}{\sqn{\cneg A_1,B_1}}}}{
        \vlin{\vlor_R}{}{\sqn{\cneg A_2,B_1\vlor B_2}}{
          \vlhtr{\pi_2}{\sqn{\cneg A_2,B_2}}}}}
    $$
    where $\pi_1$ and $\pi_2$ exist by induction hypothesis.
  \item The case for $\odv{A_1}{\deri_1}{B_1}{\SKS}\vlan\odv{A_2}{\deri_2}{B_2}{\SKS}$ is similar.
  \item if $\deri$ is $\odn{\deri_1}{\rr}{\deri_2}{}$ with $\odv{A}{\deri_1}{B}{\SKS}$ and
    $\odv{C}{\deri_2}{D}{\SKS}$ then we construct
    $$
    \vlderivation{
      \vliin{\cutr}{}{\sqn{\cneg A,D}}{
        \vliin{\cutr}{}{\sqn{\cneg A,C}}{
          \vlhtr{\pi_1}{\sqn{\cneg A,B}}}{
          \vlhtr{\pi_2}{\sqn{\cneg B,C}}}}{
        \vlhtr{\pi_3}{\sqn{\cneg C,D}}}}
    $$
    where $\pi_1$ and $\pi_3$ exist by induction hypothesis,
    and $\pi_2$ exists because for every inference rule $\vlinf{\rr}{}{C}{B}$ in $\SKS$, the implication $B\vlim C$ is a tautology and therefore we have a proof of $\sqn{\cneg B,C}$ in $\GSone$.
    \qedhere
  \end{itemize}
\end{proof}

\begin{lemma}\label{lem:GS1toKS}
  If a sequent $\sqn{C_1,C_2,\ldots,C_n}$ is provable in $\GSone$ then we have a derivation
  $$
  \odv{\vltt}{\deri}{C_1\vlor C_2\vlor \cdots\vlor C_n}{\KS}
  $$
\end{lemma}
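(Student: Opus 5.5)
Since every rule of $\KS$ is single-premise, the plan is to translate each $\GSone$ rule into a $\KS$ derivation acting on the disjunction that represents the sequent. We argue by structural induction on the $\GSone$ proof $\pi$ of $\sqn{C_1,\ldots,C_n}$. Throughout, a sequent $\Gamma$ is read as the disjunction of its formulas, with the empty sequent read as $\vlff$. Reordering and reassociating disjuncts, and adding or removing $\vlff$ units, are all handled by $\eqD$, since these are instances of $\eqor$.

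The axiom cases come first.
\begin{itemize}
\item The rule $\vltt$ is the trivial derivation consisting of the formula $\vltt$ alone.
\item The rule $\axr$ is a single instance of $\aiD$ with premise $\vltt$ and conclusion $\cneg a\vlor a$, followed by $\eqD$ for commutativity.
\end{itemize}

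Next, the unary rules. In each case we append rule instances below the derivation given by the induction hypothesis.
\begin{itemize}
\item For $\weakr$, append $\vlinf{\eqD}{}{\Gamma\vlor\vlff}{\Gamma}$ and then an instance of $\wD$ in the context $\Gamma\vlor\cons\cdot$, turning $\vlff$ into $A$.
\item For $\conr$, append an instance of $\cD$ in the context $\Gamma\vlor\cons\cdot$, with premise $A\vlor A$ and conclusion $A$.
\item For $\vlor_L$, append $\eqD$ to get $\Gamma\vlor(A\vlor\vlff)$, then $\wD$ on the $\vlff$ to get $\Gamma\vlor(A\vlor B)$. The rule $\vlor_R$ is symmetric.
\end{itemize}

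The $\vlan$ rule is the only case needing real work. The induction hypothesis gives $\deri_1$ from $\vltt$ to $\Gamma\vlor A$ and $\deri_2$ from $\vltt$ to $\Gamma\vlor B$. We build the following chain.
\begin{enumerate}
\item Apply $\andD$ to get $\vltt\vlan\vltt$.
\item Place $\deri_1\vlan\deri_2$ in parallel to get $(A\vlor\Gamma)\vlan(B\vlor\Gamma)$.
\item Apply $\sD$ (with $C=D=\Gamma$) to get $(A\vlan B)\vlor(\Gamma\vlor\Gamma)$.
\item Apply $\cD$ to the subformula $\Gamma\vlor\Gamma$, in context $(A\vlan B)\vlor\cons\cdot$, to get $(A\vlan B)\vlor\Gamma$.
\item Finish with $\eqD$ to reorder.
\end{enumerate}
This step is precisely where the non-core rule $\cD$ enters, mirroring the additive behaviour of the sequent $\vlan$ rule.

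The main obstacle is bookkeeping with the units when $\Gamma$ is empty. In that case the sequent $\sqn{A\vlan B}$ is read via the $\vlan$ rule with $\Gamma=\vlff$. The construction still goes through, producing $(A\vlan B)\vlor(\vlff\vlor\vlff)$, and $\eqD$ removes the $\vlff$s. One only has to check that $\eqD$ is never asked to use an $\eqan$ equation. This holds because the only conjunctive step is introduced explicitly by $\andD$, so the construction never needs the unit law for $\vlan$ or any other equation from $\eqan$. All rules used are $\aiD,\andD,\sD,\eqD,\wD,\cD$, which all belong to $\KS$.
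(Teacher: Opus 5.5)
Your proposal is correct and follows the paper's proof essentially step for step: induction on the $\GSone$ proof, using $\eqD$ plus weakening for the weakening and disjunction rules, and $\cD$ for contraction; for the conjunction rule, the chain of $\andD$, the two induction-hypothesis derivations in parallel, $\sD$, and a single (non-atomic) $\cD$ on $\Gamma\vlor\Gamma$. Your extra checks are sound and make explicit what the paper leaves implicit: the empty-context case, and the observation that no $\eqan$ step is ever needed, so everything stays inside $\KS$.
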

\begin{remark}
  We keep the bracketing implicit, as any choice of bracketing of the conclusion will be provable in $\KS$, due to the rule $\eqD$.
\end{remark}
\begin{proof}
  We proceed by induction on the sequent proof, considering the bottom-most rule instance. 
  The base cases $\vltt$ and $\axr$ are trivial, the other cases are as follows:
  $$
  \weakr:
  \od{
    \odi{\odd{\odh{\vltt}}{
        \deri'}{C_1\vlor \cdots\vlor C_{n-1}}{\KS}}{
      \eqD}{C_1\vlor \cdots\vlor C_{n-1}\vlor\odn{\vlff}{\wD}{A}{}}{}}
  \qquad
  \conr:
  \odv{\vltt}{\deri'}{C_1\vlor \cdots\vlor C_{n-1}\vlor\odn{A\vlor A}{\cD}{A}{}}{\KS}
  $$
  $$
  \vlor_L:
  \od{
     \odi{\odd{\odh{\vltt}}{
         \deri'}{C_1\vlor \cdots\vlor C_{n-1}\vlor A}{\KS}}{
       \eqD}{C_1\vlor \cdots\vlor C_{n-1}\vlor A\vlor\odn{\vlff}{\wD}{B}{}}{}}
  \qquad
  \vlor_R:
  \od{
     \odi{\odd{\odh{\vltt}}{
         \deri'}{C_1\vlor \cdots\vlor C_{n-1}\vlor B}{\KS}}{
       \eqD}{C_1\vlor \cdots\vlor C_{n-1}\vlor \odn{\vlff}{\wD}{A}{}\vlor B}{}}
  $$
  $$
  \vlan:
  \od{
    \odi{\odi{\odh{\vltt}}{
        \andD}{
        \odv{\vltt}{\deri'}{C_1\vlor \cdots\vlor C_{n-1}\vlor A}{\KS}
        \vlan
        \odv{\vltt}{\deri''}{C_1\vlor \cdots\vlor C_{n-1}\vlor B}{\KS}}{}}{
      \sD}{
      \odn{(C_1\vlor \cdots\vlor C_{n-1})\vlor(C_1\vlor \cdots\vlor C_{n-1})}{
        \cD}{C_1\vlor \cdots\vlor C_{n-1}}{}
      \vlor(A\vlan B)}{}}
  $$
  where $\deri'$ and $\deri''$ exist by induction hypothesis.
\end{proof}

These two lemmas are enough to prove \Cref{thm:soundnessKS} and \Cref{thm:cutelimSKS}.

\medskip

We also have the general \defn{axiom} and \defn{cut} rules, respectively:
\begin{equation}\label{eq:cutKS}
  \hfill
  \scalebox{.9}{$
    \vlinf{\iD}{}{\cneg A \vlor A}{\vltt}
    \qquand
    \vlinf{\iU}{}{\vlff}{A \vlan \cneg A}
    $}
  \hfill
\end{equation}

together with the classical versions of \Cref{prop:DI} and \Cref{lem:switch}:

\begin{restatable}{proposition}{propAIKS}\label{prop:DI-KS}
  We have the following:
  \begin{enumerate}
  \item\label{pr:ai}
    The rule $\iD$ is derivable in $\KSc$, and dually, the rule $\iU$ is derivable in $\SKScU$.
  \item \label{pr:updown}
    Every rule $\rU\in\SKSU$ is derivable in $\set{\rD,\iD,\sD,\iU,\eqD,\eqU}$.
  \item\label{pr:flip1}
    For every $\sys\in\set{\SKSc,\SKS}$, we have $A\proves[\sys]B$ iff\/ $\proves[\sys]\cneg A\vlor B$. 
  \item\label{pr:flip2}
    For
    every $\sys\in\set{\SKSc,\SKS,\SKSnc}$, we have $A\proves[\sysD]B$ iff\/ $\cneg B\proves[\sysU]\cneg A$.
  \item\label{pr:cutup}
  For every $\sys\in\set{\KSc,\KS}$, we have $\proves[\sysSX]A$ iff\/ $\proves[\sys\cup\set{\iU}]A$.\\[-2ex]
  \end{enumerate}
\end{restatable}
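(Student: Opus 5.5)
The plan is to transfer the proof of \Cref{prop:DI} for linear logic to the classical systems. The translation~\eqref{eq:transl} identifies $\KSc$ with $\MLS$ up to one mismatch: $\SKSc$ uses $\andU$ where $\SMLS$ has $\tensU$. The core rules $\aiD,\sD,\andD,\eqD$ and their duals correspond exactly, so the arguments can be copied. The non-core rules $\wD,\cD,\wU,\cU$ need separate but easy treatment.

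For item~\ref{pr:ai} I would induct on $A$, as in the linear case. Atoms are handled by $\aiD$. The unit $\vltt$ gives $\cneg\vltt\vlor\vltt=\vlff\vlor\vltt$, which is reached from $\vltt$ by $\eqD$. For $A\vlan B$ I would apply $\andD$ to get $\vltt\vlan\vltt$, use the induction hypothesis in each component, and finish with $\sD$ (plus $\eqD$) to reach $(\cneg A\vlor\cneg B)\vlor(A\vlan B)$. The case $A\vlor B$ is symmetric. The up-version for $\iU$ follows by dualizing every step, using $\andU,\sU,\aiU,\eqU$.

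For item~\ref{pr:updown} I would reuse derivation~\eqref{eq:rU} verbatim, reading $\vlte,\vlpa,\vlone,\vlbot$ as $\vlan,\vlor,\vltt,\vlff$. This works because, for every rule $\rU\in\SKSU$ (including $\cU$ and $\wU$), its dual $\rD$ has premise $\cneg B$ and conclusion $\cneg A$. Items~\ref{pr:flip1} and~\ref{pr:flip2} go exactly as before. Item~\ref{pr:flip1} uses the two displayed derivations from the proof of \Cref{prop:DI}, with $\iD$ and $\iU$ now available by item~\ref{pr:ai}. Item~\ref{pr:flip2} negates every rule instance, since each down-rule of $\SKS$ has its De~Morgan dual as the corresponding up-rule. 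The relevant dual pairs are $\eqD/\eqU$, $\sD/\sU$, $\aiD/\aiU$, $\andD/\andU$, $\wD/\wU$ and $\cD/\cU$, and each pair belongs to the same fragment, so the statement holds for $\SKSc$, $\SKS$ and $\SKSnc$.

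For item~\ref{pr:cutup}, one direction is immediate: $\iU$ is derivable in $\SKScU$, which is contained in both $\SKSc$ and $\SKS$. For the converse I would apply item~\ref{pr:updown} to each up-rule. The only leftover $\eqU$-instances are those of the form $A\to A\vlan\vltt$ coming from~\eqref{eq:rU}. The main obstacle is eliminating them, and I would do it as in the linear case: permute these instances upward until they have the form $\vltt\to\vltt\vlan\vltt$, which is an instance of $\andD$. Permuting $\eqU$ past rules above it is routine because it only adds a $\vltt$-conjunct, and such a conjunct commutes with any rule acting elsewhere. The remaining $\eqU$-instances (associativity, commutativity and unit equations for $\vlan$ arising inside the translated proofs) I would replace by $\eqD$ together with derivable rearrangements, or simply note that they already occur in the linear argument.
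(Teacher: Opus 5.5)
Your proposal is correct and is essentially the paper's own proof, which just transfers the arguments for \Cref{prop:DI} verbatim. That means induction on $A$ via $\andD$ and $\sD$ for item~1, derivation~\eqref{eq:rU} for item~2, the two displayed derivations and rule-by-rule dualization for items~3 and~4, and, for item~5, permuting the unit instances $A\to A\vlan\vltt$ of $\eqU$ upward until they become instances of $\andD$. Your closing hedge about further associativity or commutativity instances of $\eqU$ is unnecessary, since applying item~2 to every up-rule (including $\eqU$ itself) leaves only those unit instances. The suggested fallback via $\eqD$ and down-derivable rearrangements would not exist in general anyway: for example, $a\vlan b\to b\vlan a$ is not derivable in $\KSc$, because its rules never reverse the order of two $\vlan$-related atoms.
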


\begin{restatable}{lemma}{lemSwitchKS}\label{lem:switchKS}
  For all formulas $A,B$ and m-contexts $\Chole$, there are derivations
  $$
  \odv{\Chole[A\vlor B]}{\deri}{A\vlor\Chole[B]}{\set{\eqD,\sD}}
  \qquand
  \odv{A\vlan\Chole[B]}{\derib}{\Chole[A\vlan B]}{\set{\eqU,\sU}}
  $$
\end{restatable}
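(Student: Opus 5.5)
The first derivation is built by induction on the m-context $\Chole$, following the proof of \Cref{lem:switch} with $\vlpa,\vlte,\vlbot,\vlone$ replaced by $\vlor,\vlan,\vlff,\vltt$ via the translation~\eqref{eq:transl}. The second derivation then follows by duality. This works because the classical rules $\sD$, $\sU$, $\eqD$ and $\eqU$ of \Cref{fig:SKS} are exactly the images of the corresponding rules of $\SMLS$ under that translation.

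For $\Chole=\conhole$ there is nothing to do: $\Chole[A\vlor B]=A\vlor\Chole[B]$, and the empty derivation suffices. Otherwise $\Chole$ has one of the forms $D\vlan\Cpcons{\cdot}$, $\Cpcons{\cdot}\vlan D$, $D\vlor\Cpcons{\cdot}$ or $\Cpcons{\cdot}\vlor D$. In each case the induction hypothesis gives a derivation $\deri'$ from $\Cpcons{A\vlor B}$ to $A\vlor\Cpcons{B}$ in $\set{\eqD,\sD}$, which we put into the context.

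The $\vlor$-cases are immediate: a single $\eqD$ step (associativity and commutativity of $\vlor$) moves $A$ to the outside.

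For $D\vlan\Cpcons{\cdot}$ we argue as follows. We have $D\vlan(A\vlor\Cpcons{B})$. Using $\eqD$ with $D\eqor D\vlor\vlff$, and commutativity inside the right conjunct, we rewrite this as $(D\vlor\vlff)\vlan(\Cpcons{B}\vlor A)$. The only subtlety is that $\eqD$ applies the congruence $\eqor$ inside an $\vlan$; this is legitimate since $\eqor$ is a congruence. Then $\sD$ (premise $(A'\vlor C)\vlan(B'\vlor D')$, conclusion $(A'\vlan B')\vlor(C\vlor D')$) yields $(D\vlan\Cpcons{B})\vlor(\vlff\vlor A)$. A final $\eqD$ gives $A\vlor(D\vlan\Cpcons{B})$. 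The case $\Cpcons{\cdot}\vlan D$ is symmetric.

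For $\derib$, we dualize. Take the derivation constructed above for the negated data, namely from $\wcneg{\Chole}[\cneg A\vlor\cneg B]$ to $\cneg A\vlor\wcneg{\Chole}[\cneg B]$, where $\wcneg{\Chole}$ is the De~Morgan dual context, which is again an m-context. Then flip every rule instance, exactly as in \Cref{prop:DI-KS}.\ref{pr:flip2}: $\sD$ becomes $\sU$, and $\eqD$ becomes $\eqU$, since negation maps $\eqor$ to $\eqan$. This yields a derivation from $A\vlan\Chole[B]$ to $\Chole[A\vlan B]$ in $\set{\eqU,\sU}$. Alternatively, one can repeat the induction directly with $\sU$ and the unit $\vltt$.

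There is no real obstacle here. The only point to check is that every equational step used lies in $\eqor$ (respectively $\eqan$), and not in the other theory. This holds because the first derivation only uses unit, associativity and commutativity laws for $\vlor$, applied inside arbitrary contexts.
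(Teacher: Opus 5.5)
Your proof is correct and follows the paper's own argument. The paper says the proof is the same as for \Cref{lem:switch}: an induction on the context, using a single $\eqD$ step for the $\vlor$-cases, and the sequence $\eqD$, $\sD$, $\eqD$ (introducing a unit $\vlff$) for the $\vlan$-cases, with $\derib$ obtained dually; your construction of $\derib$ by negating the data and flipping $\deri$ rule by rule is a valid way of making ``dually'' precise.
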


Their proofs are exactly the same as in \Cref{app:prelim}.

\section{Additional Material for \Cref{sec:modal}}\label{app:modal}

In this appendix we extend the material from the previous appendix to
the modal logics $\K,\KT,\Kfour,\Sfour$. We first need the following sequent rules:
\begin{equation}
  \label{eq:modalSC}
  \hfill
  \vlinf{\kr}{}{\sqn{\vldia\Gamma,\vlbox A}}{\sqn{\Gamma,A}}
  \qquad\qquad
  \vlinf{\ktr}{}{\sqn{\Gamma,\vldia A}}{\sqn{\Gamma,A}}
  \qquad\qquad
  \vlinf{\kfr}{}{\sqn{\vldia\Gamma,\vlbox A}}{\sqn{\vldia\Gamma,A}}
  \hfill
\end{equation}
where $\vldia\Gamma$ abbreviates $\vldia C_1,\ldots,\vldia C_n$ if $\Gamma=C_1,\ldots,C_n$.
With this we define the following proof systems:
$$
\GSk=\GSone\cup\set\kr
\qquad
\GSkt=\GSone\cup\set{\kr,\ktr}
\qquad
\GSkf=\GSone\cup\set{\kr,\kfr}
\qquad
\GSsf=\GSone\cup\set{\ktr,\kfr}
$$
Note that $\kr$ is derivable in $\set{\ktr,\kfr}$, and therefore is not needed in $\GSsf$.

\begin{theorem}
  For every $\sysY\in\set{\K,\KT,\Kfour,\Sfour}$, the sequent system $\GSy$ is sound and complete for the modal logic $\sysY$, and the cut rule~\eqref{eq:cutSC} is admissible for $\GSy$.
\end{theorem}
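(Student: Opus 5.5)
The statement has two halves: soundness and completeness of $\GSy$ for $\sysY$, and admissibility of $\cutr$. I would not reprove either from scratch. Both are standard facts about the well-known cut-free sequent calculi for $\K$, $\KT$, $\Kfour$, $\Sfour$ (see \cite{troelstra:schwichtenberg:00,Wansing:02,poggiolesi:11}), once the unit rule for $\vltt$ is added as we did for $\GSone$. The plan is to reduce to those sources and check the small discrepancies in formulation.

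\emph{Soundness.} I proceed rule by rule, reading a sequent $\sqn{C_1,\ldots,C_n}$ as the formula $C_1\vlor\cdots\vlor C_n$. The classical rules are sound in any normal modal logic.
\begin{itemize}
\item For $\kr$: if $\Gamma\vlor A$ is a theorem, then by necessitation $\vlbox(\Gamma\vlor A)$ is a theorem. Iterating the $\K$-axiom, in its dual form $\vlbox(C\vlor D)\vlim(\vldia C\vlor\vlbox D)$, gives $\vldia\Gamma\vlor\vlbox A$.
\item For $\ktr$: use the $\mathsf{T}$-axiom in the form $A\vlim\vldia A$.
\item For $\kfr$: from $\vldia\Gamma\vlor A$ we get $\vldia\vldia\Gamma\vlor\vlbox A$ by the $\kr$ argument. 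Then the $\mathsf{4}$-axiom $\vldia\vldia C\vlim\vldia C$, applied inside the disjunction, gives $\vldia\Gamma\vlor\vlbox A$.
\end{itemize}

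\emph{Completeness.} I would first show that $\cutr$ makes each calculus complete with respect to a Hilbert axiomatization. Modus ponens is a cut. Necessitation is $\kr$ with empty $\Gamma$; for $\Sfour$ it is $\kfr$ with empty context. Each axiom has a direct cut-free proof:
\begin{itemize}
\item $\K$: apply $\kr$ to $\sqn{\cneg A\vlan A\ldots}$, i.e.\ derive $\sqn{\vldia(A\vlan\cneg B),\vldia\cneg A,\vlbox B}$ from $\sqn{A\vlan\cneg B,\cneg A,B}$.
\item $\mathsf{T}$: $\sqn{\vldia\cneg A,A}$ by $\ktr$ from the axiom.
\item $\mathsf{4}$: $\sqn{\vldia\cneg A,\vlbox\vlbox A}$ by two uses of $\kfr$.
\end{itemize}
In $\Sfour$ the rule $\kr$ is derivable from $\ktr$ and $\kfr$: first weaken the premise's $\Gamma$ to $\vldia\Gamma$ via $\ktr$, then apply $\kfr$. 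So all four calculi simulate the Hilbert systems. It then suffices to prove cut admissibility.

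\emph{Cut admissibility.} This is the main work. I would use the standard Gentzen-style argument: a double induction on cut-formula size and on the sum of the heights of the premises, with contraction handled via the usual multicut (mix). This proves a slightly stronger statement, namely that
\[
\vliinf{}{}{\sqn{\Gamma,\Delta}}{\sqn{\Gamma,A^m}}{\sqn{\Delta,\cneg A^n}}
\]
is admissible. The classical cases are routine. The main obstacle is the modal principal cases, since $\kr$ and $\kfr$ change the whole context.
\begin{itemize}
\item $\kr$ against $\kr$: a cut on $\vlbox A$ against $\vldia\cneg A$ has both premises ending in $\kr$. Cut the premises on $A$ (smaller formula), then reapply $\kr$.
\item $\kfr$ against $\kfr$: cut $\sqn{\vldia\Gamma,A}$ against $\sqn{\vldia\Delta,\vldia\cneg A,\cneg A'}$. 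First cut $\vlbox A$ (lower height) against that premise, then cut on $A$ (smaller formula), then reapply $\kfr$.
\item $\kr$ or $\kfr$ against $\ktr$ on $\vldia\cneg A$: cut on $A$ directly against the $\ktr$ premise, then weaken the remaining $\vldia\Gamma$ in.
\item Non-principal $\vldia$ cut formulas inside a $\kr$/$\kfr$ context: permute the cut into the other premise, which must itself end in a modal rule introducing that $\vlbox$, or where the formula is weakened.
\end{itemize}
The $\vltt$ rule adds only a trivial case. Since all of this is exactly the treatment in \cite{troelstra:schwichtenberg:00,poggiolesi:11}, I would state the theorem with these citations and spell out only the modal principal cases.
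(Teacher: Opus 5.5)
Your outline is fine for $\K$, $\KT$ and $\Sfour$. There, like the paper (which only cites the literature), you can defer to standard cut elimination.

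The problem is $\Kfour$. $\GSkf=\GSone\cup\set{\kr,\kfr}$ is not the usual cut-free calculus for $\Kfour$. That calculus uses (up to weakening) the mixed rule deriving $\sqn{\vldia\Sigma,\vldia\Pi,\vlbox A}$ from $\sqn{\Sigma,\vldia\Pi,A}$. The rules $\kr$ and $\kfr$ are only its extreme instances, with $\Pi$, resp.\ $\Sigma$, empty. So ``reduce to the sources'' does not apply, and the paper's bare citation has the same blind spot.

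Your case list omits exactly the combination that breaks. On the left, $\vlbox A$ is introduced by $\kfr$ from $\sqn{\vldia\Gamma,A}$. On the right, $\vldia\cneg A$ sits in the context of a $\kr$ with premise $\sqn{\Delta,\cneg A,B}$ and conclusion $\sqn{\vldia\Delta,\vldia\cneg A,\vlbox B}$. You cannot cut the left conclusion against the right premise, because that premise contains $\cneg A$, not $\vldia\cneg A$. Cutting the two premises on $A$ gives $\sqn{\vldia\Gamma,\Delta,B}$. From there $\kr$ only yields $\sqn{\vldia\vldia\Gamma,\vldia\Delta,\vlbox B}$, and $\kfr$ does not apply since $\Delta$ need not consist of $\vldia$-formulas.

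This cannot be patched inside $\GSkf$, because the $\Kfour$ part of the statement is false for $\GSkf$ as defined. Consider $\sqn{\vldia\cneg p,\vlbox(p\vlan\vlbox p)}$, i.e.\ $\vlbox p\vlim\vlbox(p\vlan\vlbox p)$, which is $\Kfour$-valid.

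With cut it is provable, and the proof hits exactly the bad case above:
\begin{itemize}
\item derive $\sqn{\vldia\cneg p,\vlbox\vlbox p}$ by $\kr$, then $\kfr$;
\item derive $\sqn{\vldia\cneg p,\vldia\vldia\cneg p,\vlbox(p\vlan\vlbox p)}$ by $\kr$ from $\sqn{\cneg p,\vldia\cneg p,p\vlan\vlbox p}$;
\item cut these two on $\vlbox\vlbox p$, then contract.
\end{itemize}

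Without cut it is not provable. Reading bottom-up, only $\weakr$ and $\conr$ apply until some $\sqn{\vldia\cneg p,\ldots,\vldia\cneg p,\vlbox(p\vlan\vlbox p)}$ is concluded by $\kr$ or $\kfr$; sequents made only of copies of $\vldia\cneg p$ are invalid. The $\kr$-premise would require $\sqn{\cneg p,\ldots,\cneg p,\vlbox p}$, and the $\kfr$-premise would require $\sqn{\vldia\cneg p,\ldots,\vldia\cneg p,p}$. Neither is $\Kfour$-valid, so by soundness neither premise is provable. Hence cut is not admissible in $\GSkf$, and cut-free $\GSkf$ is incomplete for $\Kfour$.

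The fix is to use the mixed rule in place of $\kr,\kfr$ in $\GSkf$. The bad case then closes by cutting on $A$ and reapplying the mixed rule, and the rest of your argument goes through. The mixed rule is still derivable in $\mathsf{KS^{K4}}\setminus\set{\ppD}$ with $\eD$, $\pD$, $\gD$, just as in the paper's translation of $\kfr$, so the later use of this theorem is unaffected.

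There are also two minor slips. First, the $\mathsf{4}$ axiom needs $\kr$ (or $\ktr$ in $\Sfour$) before $\kfr$, since $\kfr$ demands a context of $\vldia$-formulas. Second, ``cut on $\vlbox A$, then on $A$'' belongs to the $\ktr$ principal case under mix, not to the $\kfr$/$\kfr$ case.
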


For a proof we refer the reader to the literature (e.g.~\cite{troelstra:schwichtenberg:00,Wansing:02,poggiolesi:11}).

\begin{lemma}\label{lem:SKSytoGS1y}
  For every $\sysY\in\set{\K,\KT,\Kfour,\Sfour}$, if $\odv{A}{\deri}{B}{\SKSy}$ then the sequent $\sqn{\cneg A,B}$ is provable in $\GSy+\cutr$.
\end{lemma}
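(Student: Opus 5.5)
We prove the statement by structural induction on $\deri$, exactly as in the proof of \Cref{lem:SKStoGS1}. The only new ingredients are the modal connectives and the rules of \Cref{fig:modalrules}.

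The base case is $\deri=A$. Here we need a $\GSy$-proof of $\sqn{\cneg A,A}$, built by induction on $A$. The propositional cases are as before. For $A=\vlbox B$ the goal sequent is $\sqn{\vldia\cneg B,\vlbox B}$, obtained by one application of $\kr$ (with $\Gamma=\cneg B$) to $\sqn{\cneg B,B}$. The case $A=\vldia B$ is symmetric. In $\GSsf$, where $\kr$ is absent, we use the derivability of $\kr$ from $\set{\ktr,\kfr}$ noted above.

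The structural cases are next. For $\deri=\deri_1\vlor\deri_2$ and $\deri=\deri_1\vlan\deri_2$ we proceed verbatim as in \Cref{lem:SKStoGS1}. For $\deri=\vlbox\deri_1$ with $\deri_1$ from $A_1$ to $B_1$, the induction hypothesis gives $\sqn{\cneg A_1,B_1}$, and $\kr$ yields $\sqn{\vldia\cneg A_1,\vlbox B_1}$, which is $\sqn{\wcneg{\vlbox A_1},\vlbox B_1}$. For $\deri=\vldia\deri_1$ we also apply $\kr$, now with the boxed formula being $\cneg A_1$: from $\sqn{B_1,\cneg A_1}$ we obtain $\sqn{\vldia B_1,\vlbox\cneg A_1}$. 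The composition case $\odn{\deri_1}{\rr}{\deri_2}{}$ uses two cuts exactly as before. It remains to show that for every rule $\vlinf{\rr}{}{C}{B}$ of $\SKSy$ the sequent $\sqn{\cneg B,C}$ is provable in $\GSy$. One option is to invoke the soundness part of \Cref{thm:soundnessKSk} together with completeness of $\GSy$. Alternatively, and more concretely, we exhibit short proofs.

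We expect the rule cases to be the main work, though each is routine. For the $\SKSc\cup\SKSnc$ rules we reuse the classical argument. For $\eD$, the goal $\sqn{\vlff,\vlbox\vltt}$ follows from $\kr$ applied to $\sqn{\vltt}$ and weakening. For $\pD$, the goal is $\sqn{\vldia(\cneg A\vlan\cneg B),\vlbox A\vlor\vldia B}$. We build $\sqn{\cneg A\vlan\cneg B,A,\vldia B}$ by $\vlan$; the left premise is $\sqn{\cneg A,A}$ weakened, and the right premise is $\sqn{\cneg B,\vldia B}$, obtained from $\sqn{\cneg B, B}$ via $\ktr$. This step is problematic in $\K$, because $\ktr$ is not available there. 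Instead we proceed as follows. From $\sqn{\cneg A\vlan\cneg B, A, B}$ we apply $\kr$ with boxed formula $A$ to get $\sqn{\vldia(\cneg A\vlan\cneg B),\vlbox A,\vldia B}$, and then $\vlor$ rules and contraction give the goal. This works uniformly for all $\sysY$. For $\ppD$, the goal is $\sqn{\vlbox(\cneg A\vlan\cneg B),\vldia A\vlor\vldia B}$, obtained by $\kr$ from $\sqn{\cneg A\vlan\cneg B,A,B}$. The up-rules $\eU,\pU,\ppU$ are duals: their sequents $\sqn{\cneg B,C}$ coincide, up to exchange, with those of $\eD,\pD,\ppD$ applied to negated formulas, so the same proofs work. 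For $\tD$ and $\tU$ (in $\KT$ and $\Sfour$) a single application of $\ktr$ to $\sqn{\cneg A,A}$ suffices. For $\gD$ and $\gU$ (in $\Kfour$ and $\Sfour$), the goal is $\sqn{\vlbox\vlbox\cneg A,\vldia A}$. We obtain it from $\sqn{\vlbox\cneg A,\vldia A}$, which is itself $\kr$ applied to $\sqn{\cneg A,A}$, by one application of $\kfr$ with $\vldia\Gamma=\vldia A$.
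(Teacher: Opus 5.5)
Your proposal is correct and follows the paper's proof: structural induction on $\deri$ as in Lemma~\ref{lem:SKStoGS1}, with $\kr$ handling the $\vlbox$- and $\vldia$-cases (the latter with $\cneg A_1$ as the boxed formula), and two cuts at each rule instance. The paper only asserts that $\sqn{\cneg B,C}$ is provable in $\GSy$ for every rule of $\SKSy$ (implicitly via validity in $\sysY$ and completeness of $\GSy$), so your explicit derivations for $\eD,\pD,\ppD,\tD,\gD$ and their duals, together with the identity for modal formulas, make the argument more self-contained. Do not cite the soundness part of Theorem~\ref{thm:soundnessKSk} for this step, though, since its appendix proof relies on the present lemma.
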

\begin{proof}
  The proof is essentially the same as for \Cref{lem:SKStoGS1}, observing that indeed for every inference rule $\vlinf{\rr}{}{C}{B}$ in $\SKSy$ we have a proof of $\sqn{\cneg B,C}$ in $\GSy$. The missing cases for $\deri$ are:\vadjust{\vskip-2ex}
  \begin{itemize}
  \item If $\deri$ is 
    $\nosmash{\vlbox\left(\odv{A_1}{\deri_1}{B_1}{\SKS}\right)}$
    then we have
    \raise3ex\hbox{$
    \vlderivation{
      \vlin{\kr}{}{\sqn{\vldia\cneg A_1,\vlbox B_1}}{
        \vlhtr{\pi_1}{\sqn{\cneg A_1,B_1}}}}
    $}
    where $\pi_1$ exist by induction hypothesis.
  \item  If $\deri$ is 
    $\nosmash{\vldia\left(\odv{A_1}{\deri_1}{B_1}{\SKS}\right)}$ then we proceed similarly.
    \qedhere
  \end{itemize}
\end{proof}

\begin{lemma}\label{lem:GS1ytoKSy}
  For every $\sysY\in\set{\K,\KT,\Kfour,\Sfour}$, if a sequent $\sqn{C_1,C_2,\ldots,C_n}$ is provable in $\GSy$ then we have a derivation
  $$
  \odv{\vltt}{\deri}{C_1\vlor C_2\vlor \cdots\vlor C_n}{\KSy\setminus\set{\ppD}}
  $$
\end{lemma}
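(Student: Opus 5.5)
We argue by induction on the height of the $\GSy$-proof of $\sqn{C_1,\ldots,C_n}$, following the proof of \Cref{lem:GS1toKS} exactly and adding cases for the modal rules. The rules of $\GSone$, namely $\vltt$, $\axr$, $\weakr$, $\conr$, $\vlor_L$, $\vlor_R$ and $\vlan$, are handled by the same derivations as in \Cref{lem:GS1toKS}. Those derivations use only $\andD$, $\sD$, $\wD$, $\cD$, $\eqD$ and $\aiD$ (or $\iD$, which \Cref{prop:DI-KS}.\ref{pr:ai} derives in $\KSc$). All of these belong to $\KSy\setminus\set{\ppD}$. The remaining work is the rules $\kr$, $\ktr$ and $\kfr$ from~\eqref{eq:modalSC}.

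For $\ktr$ (available in $\KT$ and $\Sfour$) we take the derivation from the induction hypothesis and apply $\tD$ to the last disjunct, turning $A$ into $\vldia A$. Since $\tD$ is in $\KSktnc$ and $\KSsfnc$, this case is immediate.

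For $\kr$ the premise gives, by the induction hypothesis, a derivation $\deri'$ from $\vltt$ to $C_1\vlor\cdots\vlor C_n\vlor A$. We start from $\vltt$ and apply $\eD$ to obtain $\vlbox\vltt$. Inside the box we place $\deri'$, giving $\vlbox(C_1\vlor\cdots\vlor C_n\vlor A)$; deep inference allows this. We then apply $\pD$ repeatedly to peel off one $C_i$ at a time:
\[
\vlbox(C_1\vlor R)\ \to\ \vldia C_1\vlor\vlbox R ,
\]
where $R$ is the remaining disjunction. After $n$ steps, with $\eqD$ used for reassociation and commutation, we reach $\vldia C_1\vlor\cdots\vlor\vldia C_n\vlor\vlbox A$. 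Only $\pD$ is used here, so $\ppD$ is never needed. When $\Gamma$ is empty this degenerates to $\vlbox\vltt\to\vlbox A$, which is fine.

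For $\kfr$ (available in $\Kfour$ and $\Sfour$) the induction hypothesis yields $\vldia C_1\vlor\cdots\vlor\vldia C_n\vlor A$. As in the $\kr$ case we box this and distribute with $\pD$, obtaining $\vldia\vldia C_1\vlor\cdots\vldia\vldia C_n\vlor\vlbox A$. We then apply $\gD$ to each disjunct $\vldia\vldia C_i$ to get $\vldia C_i$.

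The only point needing care is the choice of sequent system. For $\Sfour$ the calculus $\GSsf$ omits $\kr$, since $\kr$ is derivable from $\ktr$ and $\kfr$, so the cases above still cover every rule. We should also check that the $\pD$ instance is used in the right orientation, with the box on the remaining part, so that no $\vldia$-distribution (which would require $\ppD$) is ever needed. That makes the $\pD$-peeling in the $\kr$ case the main step to verify; everything else is bookkeeping with $\eqD$.
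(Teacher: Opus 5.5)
Your proposal is correct and matches the paper's proof essentially step for step. The $\GSone$ cases are inherited from \Cref{lem:GS1toKS}, and $\ktr$ becomes a single $\tD$ on the last disjunct. The cases $\kr$ and $\kfr$ are handled by $\eD$, then placing the induction-hypothesis derivation inside the box, then peeling off the context with repeated $\pD$ (followed by $\gD$ on each $\vldia\vldia C_i$ for $\kfr$), so $\ppD$ is never needed.
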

\begin{proof}
  We proceed as in the proof of \Cref{lem:GS1toKS}, listing here only the additional cases for the rules in~\eqref{eq:modalSC}:
  $$
  \kr:
  \od{
    \odi{\odi{\odh{\vltt}}{
        \eD}{\vlbox\odv{\vltt}{\deri'}{C_1\vlor C_2\vlor\cdots\vlor C_{n-1}\vlor C_n\vlor A}{\KSk\setminus\set{\ppD}}}{}}{
      \pD}{
      \vldia C_1\vlor
      \odv{\vlbox(C_2\vlor\cdots \vlor C_{n-1}\vlor C_n\vlor A)}{}{
        \vldia C_2\vlor\cdots\vlor\vldia C_{n-1}\vlor 
        \odn{\vlbox(C_n\vlor A)}{\pD}{\vldia C_n\vlor \vlbox A}{}
      }{\pD}}{}
    }
  $$
  $$
  \ktr:
  \odv{\vltt}{\deri'}{C_1\vlor \cdots\vlor C_{n-1}\vlor\odn{A}{\tD}{\vldia A}{}}{\KSk\setminus\set{\ppD}}
  $$
  $$
  \kfr:
  \od{
    \odi{\odi{\odh{\vltt}}{
        \eD}{\vlbox\odv{\vltt}{\deri'}{
          \vldia C_1\vlor\vldia C_2\vlor\cdots\vlor\vldia C_{n-1}\vlor\vldia C_n\vlor A}{\KSk\setminus\set{\ppD}}}{}}{
      \pD}{
      \odn{\vldia\vldia C_1}{\gD}{\vldia C_1}{}\vlor
      \odv{\vlbox(\vldia C_2\vlor\cdots \vlor \vldia C_{n-1}\vlor \vldia C_n\vlor A)}{}{
        \odn{\vldia\vldia C_2}{\gD}{\vldia C_2}{}
        \vlor\cdots\vlor
        \odn{\vldia\vldia C_{n-1}}{\gD}{\vldia C_{n-1}}{}
        \vlor 
        \odn{\vlbox(\vldia C_n\vlor A)}{\pD}{
          \odn{\vldia\vldia C_{n}}{\gD}{\vldia C_{n}}{}
          \vlor \vlbox A}{}
      }{\pD}}{}
    }
  $$
  where in each case, the derivation $\deri'$ exists by induction hypothesis.
\end{proof}

\thmSoundKSk*

\begin{proof}
  This now follows immediately from \Cref{lem:SKSytoGS1y} and \Cref{lem:GS1ytoKSy}.
\end{proof}

We also have the \Cref{prop:DI-KS} for all four modal logics discussed in this paper:

\begin{restatable}{proposition}{propAIKSk}\label{prop:DI-KSk}
  For every $\sysY\in\set{\K,\KT,\Kfour,\Sfour}$ we have the following:
  \begin{enumerate}
  \item\label{prk:ai}
    The rule $\iD$ is derivable in $\KSkc$, and dually, the rule $\iU$ is derivable in $\SKSkcU$.
  \item \label{prk:updown}
    Every rule $\rU\in\SKSyU$ is derivable in $\set{\rD,\iD,\sD,\iU,\eqD,\eqU}$.
  \item\label{prk:flip1}
    For every $\sys\in\set{\SKSkc,\SKSy}$, we have $A\proves[\sys]B$ iff\/ $\proves[\sys]\cneg A\vlor B$. 
  \item\label{prk:flip2}
    For
    every $\sys\in\set{\SKSkc,\SKSy,\SKSync}$, we have $A\proves[\sysD]B$ iff\/ $\cneg B\proves[\sysU]\cneg A$.
  \item\label{prk:cutup}
  For every $\sys\in\set{\KSkc,\KSy}$, we have $\proves[\sysSX]A$ iff\/ $\proves[\sys\cup\set{\iU}]A$.\\[-2ex]
  \end{enumerate}
\end{restatable}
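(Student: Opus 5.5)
The plan is to transfer the proof of \Cref{prop:DI} from \Cref{app:prelim} item by item. Two routes are available: the translation $\vlte\transl\vlan$, $\vlpa\transl\vlor$, $\vlone\transl\vltt$, $\vlbot\transl\vlff$, $\vloc\transl\vlbox$, $\vlwn\transl\vldia$ (as in \Cref{lem:splittingKSkc}), or a direct re-verification of each case. I would use the translation wherever the rules correspond exactly, and check the remaining cases by hand.

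For item~\ref{prk:ai}, I proceed by induction on $A$, as in the linear case. Atoms are handled by $\aiD$. For $\vltt$ and $\vlff$, the conclusion $\vlff\vlor\vltt$ is obtained from $\vltt$ by $\eqD$. For $A\vlan B$, I use $\andD$ to obtain $\vltt\vlan\vltt$, plug in the two derivations given by the induction hypothesis, and finish with $\sD$. For $\vlbox A$, I use $\eD$ to get $\vlbox\vltt$, put the induction hypothesis inside the box, and apply $\pD$, which yields $\vlbox A\vlor\vldia\cneg A$. The $\vldia$ case is symmetric up to $\eqD$. The statement for $\iU$ follows by dualizing every rule, since $\KSkc$ and $\SKSkcU$ are closed under duality.

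For item~\ref{prk:updown}, I reuse derivation~\eqref{eq:rU} verbatim, with $\vlan,\vlor$ in place of $\vlte,\vlpa$. This only requires that for every $\rU$ in $\SKSyU$, the rule $\rD$ is its exact dual. A glance at \Cref{fig:SKS} and \Cref{fig:modalrules} confirms this, including $\tD/\tU$ and $\gD/\gU$. Item~\ref{prk:flip1} follows from item~\ref{prk:ai} and the two derivations used for \Cref{prop:DI}.\ref{prop:flip1}. Item~\ref{prk:flip2} follows by dualizing each rule instance. Here I must check that the duals of derivations inside $\vlbox$ or $\vldia$ remain derivations, which holds because $\wcneg{\vlbox A}=\vldia\cneg A$.

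For item~\ref{prk:cutup}, the left-to-right direction uses item~\ref{prk:updown} and then pushes the leftover $\eqU$ instances of the form $A$ to $A\vlan\vltt$ upward until they become $\andD$. The right-to-left direction uses item~\ref{prk:ai}. The main obstacle is this permutation: I need to verify that the $\eqU$ instance can pass through the modal rules $\eD,\pD,\ppD,\tD,\gD$ and through a $\vlbox$ or $\vldia$ context. This works because that instance has no interaction with a principal formula: it simply moves up inside the same context. The only possible interaction arises when it reaches a unit $\vltt$, either as a premise or produced by $\eD$. In that situation it either becomes $\andD$ or is absorbed there.
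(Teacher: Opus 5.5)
You follow the paper's route. The paper gives no separate argument for this proposition and simply transfers the proof of \Cref{prop:DI} from \Cref{app:prelim}, exactly as you propose, and your treatment of items~1--4 is fine.

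The problem is your justification of the step you yourself single out in item~5. The paper only says that the instance $C[A]\to C[A\vlan\vltt]$ of $\eqU$ ``can be straightforwardly permuted up''. Your reason is that it never interacts with the rule above except at a unit $\vltt$, and this is false. The formula $A$ is the premise of an arbitrary up-rule instance, so it may have been built by the rule immediately above. For instance, suppose $A=X\vlan Y$ was produced by $\sD$ with conclusion $(X\vlan Y)\vlor(Z\vlor W)$. Then $A$ does not occur in the premise $(X\vlor Z)\vlan(Y\vlor W)$, so there is nothing for the instance to move onto. Attaching $\vltt$ to $X$ there instead yields $(X\vlan\vltt)\vlan Y$. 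This differs from the required $(X\vlan Y)\vlan\vltt$ by an $\eqan$-step, i.e.\ by an $\eqU$ instance of a shape you are not allowed to keep. The same problem arises in three further situations: when $A=\vlbox X$ comes from $\pD$, when $A$ is a new $\vlor$-grouping or a $\vlff$ created by $\eqD$, and when $A=\vlff$ is the conclusion of $\iU$.

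The missing idea has three parts. First, attach $\vltt$ to the whole premise $E$ of the rule above. Second, apply that rule in the context $[\cdot]\vlan\vltt$. Third, bring $\vltt$ back down to $A$ with the down-switch $(P\vlor R)\vlan\vltt\to(P\vlor R)\vlan(\vltt\vlor\vlff)\to(P\vlan\vltt)\vlor(R\vlor\vlff)\to(P\vlan\vltt)\vlor R$, using $\eqD$, $\sD$, $\eqD$.

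This suffices because every rule of $\KSy\cup\set{\iU}$ builds its conclusion so that the newly created subformulas are reached from the root through $\vlor$ only. The two exceptions are $\andD$ and $\eD$, whose created subformulas are copies of $\vltt$; there you simply apply $\andD$ again (inside the box in the case of $\eD$). When $A$ lies inside a schematic variable of the rule, you do move the instance into the premise as you say, duplicating it at $\cD$ and deleting it at weakening.

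Each step strictly decreases the number of rule instances above the $\eqU$ instance. Induction on the multiset of these heights therefore terminates, and the instance ends as $\andD$ applied to the premise $\vltt$.
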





\end{document}